\newtheorem{property}{Property}
\newcommand{\dist}{\operatorname{dist}}
\newcommand{\fib}{\mbox{\upshape\sf fib}}
\newcommand{\Omicron}{O}
\newcommand{\bcdot}{$\discretionary{\mbox{$ \cdot $}}{}{}$}
\begin{document}
\catchline{}{}{}{}{}

\title{Lower Bounds for Synchronizing Word Lengths in Partial 
Automata}%\footnote{An earlier verion of this paper has been published as \cite{BDZ17}}} (Weggehaald, omdat in een special issue alle papers uitbreidingen zijn van conferentieversies. Dit zal ongetwijfeld door de editor benadrukt worden, op een manier die voor alle artikelen consistent is, en daar moeten wij niet van afwijken.)
%\titlerunning{DFAs and PFAs with Long Shortest Synchronizing Word Length}
%\toctitle{DFAs and PFAs with Long Shortest Synchronizing Word Length}
\author{Michiel de Bondt}
%\authorrunning{M. de Bondt, H. Don, and H. Zantema}
%\tocauthor{M.~de~Bondt, H.~Don, and H.~Zantema}
\address{Department of Computer Science, Radboud University Nijmegen, The Netherlands \\
\email{m.debondt@math.ru.nl}}
\author{Henk Don}
\address{Department of Mathematics, Radboud University Nijmegen, The Netherlands \\
\email{h.don@math.ru.nl}}
\author{Hans Zantema}
\address{Department of Computer Science, TU Eindhoven, The Netherlands, and \\
Department of Computer Science, Radboud University Nijmegen, The Netherlands \\
\email{h.zantema@tue.nl}}

\maketitle
\setcounter{footnote}{0}

\begin{abstract}
It was conjectured by \v{C}ern\'y in 1964, that a synchronizing DFA on $n$ states always has a synchronizing word of length at most $(n-1)^2$, and he gave a sequence of DFAs for which
this bound is reached. Until now a full analysis of all DFAs reaching this bound was only given for $n
\leq 5$, and with bounds on the number of symbols for $n \leq 12$. Here we give the full analysis for $n
\leq 7$, without bounds on the number of symbols.

For PFAs (partial automata) on $\leq 7$ states we do a similar analysis as for DFAs and
find the maximal shortest synchronizing word lengths, exceeding $(n-1)^2$ for $n \geq 4$. Where DFAs with long synchronization typically have very few symbols, for PFAs we observe that more symbols may increase the synchronizing word length. For PFAs on $\leq 10$ states and two symbols we investigate all occurring synchronizing word lengths.

We give series of PFAs on two and three symbols, reaching the maximal possible length for some small values of $n$. For $n=6,7,8,9$, the construction on two symbols is the unique one reaching the maximal length. For both series the growth is faster than $(n-1)^2$, although still quadratic.

Based on string rewriting, for arbitrary size
we construct a PFA on three symbols with exponential shortest synchronizing word
length, giving significantly better bounds than earlier exponential constructions. We give a transformation
of this PFA to a PFA on two symbols keeping exponential shortest synchronizing word length, yielding a
better bound than applying a similar known transformation. Both PFAs are transitive.

Finally, we show that exponential lengths are even possible with just one single undefined transition, again with transitive constructions.
\keywords{DFA, PFA, careful synchronization, \v{C}ern\'y conjecture}
\end{abstract}

\section{Introduction and Preliminaries}

A {\em deterministic finite automaton (DFA)} over a finite alphabet $\Sigma$ is called
{\em synchronizing}, if it admits a {\em synchronizing word}. A word $w \in \Sigma^*$ is
called {\em synchronizing} (or directed, or reset), if, starting in any state $q$, after
reading $w$, one always ends in one particular state $q_s$. So reading $w$ acts as a reset
button:
no matter in which state the system is, it always moves to the particular state $q_s$.
Now \v{C}ern\'y's conjecture \cite{C64} states:
\begin{quote}
Every synchronizing DFA on $n$ states admits a synchronizing word of length $\leq (n-1)^2$.
\end{quote}

Surprisingly, despite extensive effort, this conjecture is still open, and even the best known upper
bounds are still cubic in $n$. In 1983 Pin \cite{pin} established the bound 
$\frac{1}{6}(n^3-n)-1$ for $n>4$, based on \cite{frankl} and Pin's thesis. 
% Hoe de grens volgt uit de observatie in Pin's proefschrift is niet duidelijk, dus het leek me beter om alleen te melden dat dit resultaat van Pin is, zonder de observatie erbij te halen.
Recently, a slight asymptotic improvement to Pin's bound has been obtained by Szyku\l{}a 
\cite{szykula18} (effective for $n \ge 724$). For a survey on synchronizing automata and 
\v{C}ern\'y's conjecture, we refer to 
\cite{volkov}.

Formally, a {\em deterministic finite automaton (DFA)} over a finite alphabet $\Sigma$ consists of a finite set
$Q$ of states and a map $\delta: Q \times \Sigma \to Q$.\footnote{For synchronization the
initial state and the set of final states in the standard definition may be ignored.}
For $w \in \Sigma^*$ and $q \in Q$, we define $qw$ inductively by
$q \lambda = q$ and $qaw = \delta(q,a)w$ for
$a \in \Sigma$, where $\lambda$ is the empty word.
So $qw$ is the state where one ends, when starting in $q$ and reading the symbols in $w$ consecutively, and $qa$ is a short hand notation for
$\delta(q,a)$. A word $w \in \Sigma^*$ is called {\em synchronizing}, if a
state $q_s \in Q$ exists such that $q w = q_s$ for all $q \in Q$.

In \cite{C64}, \v{C}ern\'y already gave DFAs for which the bound of the conjecture is attained:
for $n \geq 2$ the DFA $C_n$ is defined
to consist of $n$ states $1,2,\ldots,n$, and two symbols $a,b$, acting by
$qa = q+1$ for $q = 1,\ldots,n-1$, $\delta(n,a) = 1$, and  $qb = q$ for
$q = 2,\ldots,n$, $1b = 2$.

\begin{wrapfigure}{r}{4cm}
\begin{tikzpicture}
\useasboundingbox (-1,-0.5) rectangle (3,1.8);
\node[circle,draw,inner sep=0pt,minimum width=5mm] (1) at (0,2) {$1$};
\node[circle,draw,inner sep=0pt,minimum width=5mm] (2) at (2,2) {$2$};
\node[circle,draw,inner sep=0pt,minimum width=5mm] (3) at (2,0) {$3$};
\node[circle,draw,inner sep=0pt,minimum width=5mm] (4) at (0,0) {$4$};
\draw[->] (1) -- node[above,inner sep=1pt] {$a,b$} (2);
\draw[->] (2) -- node[right,inner sep=2pt] {$a$} (3);
\draw[->] (3) -- node[below,inner sep=3pt] {$a$} (4);
\draw[->] (4) -- node[left,inner sep=2pt] {$a$} (1);
\draw[->] (2) edge[out=90,in=0,looseness=7] node[right,inner sep=4pt] {$b$} (2);
\draw[->] (3) edge[out=-90,in=0,looseness=7] node[right,inner sep=4pt] {$b$} (3);
\draw[->] (4) edge[out=-90,in=-180,looseness=7] node[left,inner sep=4pt] {$b$} (4);
\node at (1,-0.9) {$C_4$};
\end{tikzpicture}
\end{wrapfigure}

For $n=4$, this is depicted on the right.
For $C_n$, the string $w = b (a^{n-1}b)^{n-2}$ of length $|w| = (n-1)^2$ satisfies
$qw = 2$ for all $q
\in Q$, so $w$ is synchronizing. No shorter synchronizing word exists for
$C_n$, as is shown in \cite{C64}, showing that the bound in \v{C}ern\'y's conjecture is sharp.

One goal of this paper is to investigate the synchronizing word lengths of all DFAs on at most $7$ states. 
We also search for the maximal word lengths when restricting to DFAs with a given alphabet size. 
The main result on DFAs is that \v{C}ern\'y's conjecture is true for $n\leq 7$. Our results extend
those in \cite{10.1007/978-3-319-40946-7_15}, in which \v{C}ern\'y's conjecture is verified for 
$n\leq 5$. A complete analysis of all DFAs of 
$n = 6$ and $n = 7$ states is not provided in \cite{10.1007/978-3-319-40946-7_15}: the number of symbols is limited to $6$ and $4$ respectively.
The computations in
\cite{10.1007/978-3-319-40946-7_15} extend several results by the same authors.

A generalization of a DFA is a {\em Partial Finite Automaton} (PFA); the only difference is that now the
transition function $\delta$ is allowed to be partial. In a PFA, $qw$ may be undefined, in fact it is only defined if every step is
defined. A word $w \in \Sigma^*$ is called {\em carefully synchronizing} for a PFA, if a
state $q_s \in Q$ exists such that $q w$ is defined and  $q w = q_s$ for all $q \in Q$. In other words:
starting in any state $q$ and reading $w$, every step is defined and one always ends in state $q_s$. A PFA, in particular a DFA, is called \emph{transitive} or \emph{strongly connected} if for every ordered pair $(q,q')$ of states, there is a word $w\in\Sigma^\star$ such that $qw=q'$.
As being a generalization of DFAs, the shortest carefully synchronizing word may be longer. For all $n \geq 4$
we show that this is indeed the case: for $n=4,5,6,7$ we find the maximal shortest carefully synchronizing word length
to be 10, 21, 37 and 63, respectively. 

Also for PFAs we investigate the dependence on the alphabet size. 
To exclude infinitely many trivial extensions, we only consider \emph{basic} PFAs: 
no two symbols act in the same way, no symbol acts as the identity and no symbol is a 
restricted version of either another symbol or the identity. 
Obviously, these properties have no influence on synchronization. 
Somewhat surprisingly, we find that larger alphabets may lead to longer 
carefully synchronizing words, in contrast to the situation for DFAs. 

We compute all binary PFAs with up to $10$ states, to obtain all possible 
synchronization lengths, both for DFAs and proper PFAs. For DFAs, the authors of 
\cite{10.1007/978-3-319-40946-7_15} got as far as $12$ states with obtaining
these lenghts. With that, they extended the maximum number of states from
earlier analyses, by themselves and by others, ranging from $9$ states to $11$
states. The authors of \cite{10.1007/978-3-319-40946-7_15} obtained all possible 
synchronization lengths for ternary DFAs with $8$ states as well. Several gaps 
exist in the ranges of synchronization lengths for binary DFAs. It appears 
that such gaps also exist for binary PFAs.

For every $n$ we give a PFA on $n$ states and 2 symbols for which we exactly compute the
shortest carefully synchronizing word length, for every $n \geq 6$ strictly exceeding $(n-1)^2$. 
This length is quadratic in $n$, but it is not a polynomial: the precise formula deals with 
Fibonacci numbers. For $n=6,7,8,9$ this is the only construction giving the maximal shortest 
synchronizing word length for binary PFAs. Similarly, we give a sequence of PFAs on three symbols, 
reaching the maximal length for $n=3,4,5$. 

For PFAs the maximal length grows exponentially in $n$, as was already observed by Rystsov
\cite{rystsov}. Rystsov established the lower bound $\Omega((3-\varepsilon)^{n/3})$ and 
the upper bound $O((3+\varepsilon)^{n/3})$. The upper bound can be found in \cite{MR3746507} as well.
Martyugin \cite{M10} established the lower bound $\Omega(3^{n/3})$ with a construction in which the number 
of symbols is linear in $n$. 

In \cite{Mar08}, the author Martyugin obtained a lower bound for the synchronization of PFAs with
a constant alphabet size, which lies between polynomial and exponential, as a result of an elegant
construction of a series of PFAs (see also the last section of \cite{debondt18}). 
In \cite{10.1007/978-3-642-38536-0_7}, the same author obtained a near-exponential lower bound, 
using a different construction of PFAs. 
In \cite{V16} it was shown that exponential bounds exist for every constant alphabet
size being at least two. For two symbols the bound $\Omega(2^{n/36})$ was given for the transitive case 
and the bound $\Omega(2^{n/26})$ for the general case. Our construction strongly improves 
this and gives length $\Omega(2^{n/3-3\log_2(n)/2})=\Omega(2^{n/3}/n^{3/2})$ for binary PFAs 
and length $\Omega(2^{2n/5-\log_2(n)})=\Omega(2^{2n/5}/n)$ for ternary PFAs, 
with transitive constructions. Finally we show that both in the binary and ternary case 
exponential growth is even possible with a single undefined transition, again with 
transitive constructions. For the ternary construction, the key idea is that synchronization is 
forced to mimic exponentially many string rewrite steps, similar to binary counting. 
This ternary PFA can be transformed to a binary PFA by a
standard construction for which we develop a substantial improvement.

The decision problems which correspond to 
our asymptotic constructions are PSPACE-complete, if we do not take transitivity into account.
This follows from \cite{10.1007/978-3-642-31606-7_24}, in which the most specific decision
problem is treated, namely the problem of determining if a binary PFA with only one undefined 
transition is carefully synchronizing. The fact that this problem is PSPACE-complete
already suggested the existence of a nonpolynomial construction, because otherwise 
we would have had PSPACE = NP. However, the construction in \cite{10.1007/978-3-642-31606-7_24} is 
not transitive. Using \cite[Lemma 2]{V16} and \cite[Lemma 6]{V16}, one can make the 
construction transitive, but the property of having only one undefined transition will
be affected. So if we do take transitivity into account, then PSPACE-completeness is 
obtained for the decision problems which correspond to our asymptotic constructions, 
except the last one (with only one undefined transition).

The basic tool to analyze (careful) synchronization is the {\em power automaton}.
For any DFA or PFA $(Q,\Sigma, \delta)$, its power automaton is the DFA $(2^Q,\Sigma, \delta')$
where $\delta' : 2^Q \times \Sigma \to 2^Q$ is defined by
$\delta'(V,a) = \{q \in Q \mid \exists p \in V : \delta(p,a) = q \}$, if $\delta(p,a)$ is defined for
all $p \in V$, otherwise $\delta'(V,a) = \emptyset$.
For any $V \subseteq Q, w \in \Sigma^*$, we define $Vw$ as before, using $\delta'$ instead of
$\delta$.  From this definition, one easily proves that $Vw = \{ qw \mid q \in V \}$ if $qw$ is defined for
all $q \in V$, otherwise $Vw = \emptyset$, for any $V \subseteq Q, w \in \Sigma^*$.
A set of the shape $\{q\}$ for $q \in Q$ is called a {\em singleton}.
So a word $w$ is (carefully) synchronizing, if and only if $Qw$ is a singleton.
Hence a DFA (PFA) is (carefully) synchronizing, if and only if its
power automaton admits a path from $Q$ to a singleton, and the shortest
length of such a path corresponds to the shortest length of a (carefully) synchronizing word.

This paper is an extended version of the DLT2017 paper \cite{BDZ17}. It contains several new contributions, in particular:
\begin{itemize}
\item[Sec. \ref{secdfa}] \begin{itemize} 
\item For DFAs we extend the complete analysis from $n \leq 6$ to $n \leq 7$.
\item We further investigate DFAs with given alphabet size. 
\end{itemize}
\item[Sec. \ref{secpfa}] \begin{itemize} 
\item For PFAs we also extend the analysis to $n \leq 7$, and fine tune it 
by also taking the number of symbols into account.
\item We investigate the carefully synchronizing word lengths for binary PFAs on $n \leq 10$ states.
\end{itemize}
\item[Sec. \ref{secbinpfa}] \begin{itemize} \item We give sequences of binary and ternary PFAs, reaching the maximal possible length for some values of $n$. \end{itemize}
\item[Sec. \ref{sec:asymptotics}] \begin{itemize} \item We improve our asymptotic results and include a 
construction with a single undefined transition. \end{itemize}
\end{itemize}

The most important update to Sections \ref{secexp} and \ref{sectwos} is that transitivity is taken 
into account. Section \ref{secexp} presents our construction of PFAs on three symbols with
exponential shortest carefully synchronizing word length.
In Section \ref{sectwos} we improve the transformation used by Martyugin \cite{M10} and Vorel 
\cite{V16} to reduce to alphabet size two. We conclude in Section \ref{secconcl}.

\section{Critical DFAs on at Most 7 States}
\label{secdfa}

A natural question when studying \v{C}ern\'y's conjecture is: what can be said about automata in which the
bound of the conjecture is actually attained, the so-called critical automata?
Throughout this section we restrict ourselves to basic DFAs. As has already been noted by several authors
\cite{volkov,T06,DZ17}, critical DFAs are rare. There is only one construction known which gives a critical
DFA for each $n$, namely the well-known sequence $C_n$, discovered by and named after \v{C}ern\'y \cite{C64}.
Apart from this sequence, all known critical DFAs have at most 6 states. In \cite{DZ17}, all critical DFAs
on less than 5 states were identified, without restriction on the size of the alphabet. For $n=5$ and $n=6$
it was still an open question if there exist critical (or even supercritical) DFAs,
other than those already discovered by \v{C}ern\'y, Roman \cite{R08} and Kari \cite{K01}. 
In \cite{BDZ17}, we verified that this is not the case, so for $n=5$ only two critical DFAs exist 
(\v{C}ern\'y, Roman) and also for $n=6$ only two exist (\v{C}ern\'y, Kari). Here we extend the analysis 
to $n=7$, for which \v{C}ern\'y's DFA is the only critical DFA. In fact our results also prove the 
following theorem:

\begin{theorem}
	Every synchronizing DFA on $n\leq 7$ states admits a synchronizing word of length at most $(n-1)^2$.
\end{theorem}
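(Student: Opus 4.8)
The plan is to obtain this as a corollary of a complete, computer‑assisted census of all basic synchronizing DFAs on $n\le 7$ states: one checks exhaustively that no such DFA has a shortest synchronizing word longer than $(n-1)^2$, and in the process one also identifies the critical DFAs ($C_n$ for each $n\le 7$, plus Roman's DFA at $n=5$ and Kari's at $n=6$). For an individual DFA the relevant length is easy to extract: by the discussion of the power automaton above, a breadth‑first search in $(2^Q,\Sigma,\delta')$ starting from $Q$ reaches a singleton exactly at depth equal to the shortest synchronizing word length. So the whole task reduces to confining the collection of DFAs that must be inspected to something finite and tractable, the obstruction being that the alphabet is a priori unbounded --- this is exactly the point on which one must improve over the earlier treatments, which had to cap the number of symbols at $6$ for $n=6$ and at $4$ for $n=7$.

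The key mechanism is monotonicity: deleting a letter can never shorten the shortest synchronizing word, because it only discards candidate words. Consequently, in any DFA with a long shortest synchronizing word $w$ one may delete every letter not occurring in $w$ (this leaves the length unchanged), and one may keep deleting letters one at a time as long as the automaton stays synchronizing; what remains has an alphabet that cannot be shrunk further. One then enumerates such reduced DFAs, always only up to isomorphism, organising the search incrementally --- building a DFA letter by letter --- and pruning hard: any partial DFA that already synchronizes with a word of length at most $(n-1)^2$ is discarded, since by monotonicity nothing obtained from it by adding letters can exceed that.

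For each DFA surviving the pruning one runs the power‑automaton breadth‑first search; the maximum value encountered over all $n\le 7$ turns out to be exactly $(n-1)^2$, which is the statement of the theorem.

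The substantive difficulty is completeness rather than machine time: one has to be certain that none of the reductions employed --- trimming to an irreducible alphabet, rejecting isomorphic copies, and whatever further structural restrictions (on the ranks of the letters, on their number, on the order in which they are introduced) are used to make the enumeration actually terminate --- can discard an extremal example, so that a search which in practice visits only a manageable number of automata genuinely certifies the bound for the enormous family of basic DFAs on up to seven states. Making the unbounded‑alphabet case go through is precisely where the main work, and the improvement over the earlier analyses, lie.
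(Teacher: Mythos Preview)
Your high-level plan --- exhaustive computer search over basic DFAs, exploiting the monotonicity of Property~\ref{property:extension} to prune --- is exactly the paper's approach. But the pruning rule you state explicitly (discard a partial DFA once it already synchronizes in length $\le (n-1)^2$) is only half of what is needed, and the missing half is the paper's main technical contribution.

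The difficulty is what to do with a DFA $\mathcal{A}$ that is \emph{not} yet synchronizing. You cannot prune it by your rule, so you would have to continue adding letters; but for $n=7$ there are $7^7-1$ candidate letters, and there is no small a~priori bound on the alphabet size of a ``minimal'' synchronizing DFA. Your last paragraph concedes this (``whatever further structural restrictions \ldots\ are used to make the enumeration actually terminate''), but that is precisely the point on which the proof turns. The paper's device is to compute, for each non-synchronizing $\mathcal{A}$, an upper bound $L''$ (a refinement of simpler bounds $L$ and $L'$) on the shortest synchronizing word of \emph{any} synchronizing extension of $\mathcal{A}$. This bound is obtained from distances in the power automaton of $\mathcal{A}$: one measures how far $Q$ is from a smallest reachable set, how large the strongly connected components of irreducible sets of each size are, and how long the shortest reduction words are, and combines these. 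If $L'' < (n-1)^2$ one prunes $\mathcal{A}$ together with all its extensions, even though $\mathcal{A}$ itself does not synchronize. Without this step the depth-first search does not terminate in feasible time for $n=7$; with it (plus symmetry reduction and the heuristic of extending only by symbols that increase the number of reducible pairs) it does. So your outline is correct in spirit, but the decisive idea --- pruning non-synchronizing nodes via a power-automaton-based upper bound on all their extensions --- is absent.
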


As Trahtman already noted in his paper \cite{T06}, for $n\geq 6$ there seems to be a gap in the range of
possible shortest synchronization lengths. For example, his analysis showed that there are no DFAs on 6
states with shortest synchronizing word length 24, and no DFAs on 7 states with length 33, 34 or 35, when 
restricting to at most 4 symbols. Our analysis
shows that this is true without restriction on the alphabet: there is no DFA on 6 states with shortest
synchronizing word length 24. For $n\leq 6$ all other lengths are feasible: if $n\leq 6$ and
$1\leq k\leq (n-1)^2$, $k\neq 24$, then there exists a DFA on $n$ states with shortest synchronizing word
length exactly $k$. For $n=7$ all values $k \leq 32$ occur as shortest synchronizing word length.

As the number of DFAs on $n$ states grows like $2^{n^{n}}$, an exhaustive search is a non-trivial affair,
even for small values of $n$. The problem is that the alphabet size in a basic DFA can be as large as
$n^n-1$. Up to now, for $n=5,6,7$ only DFAs with at most four symbols were checked by Trahtman \cite{T06}.
Here we describe our algorithm to investigate all DFAs on 5, 6 and 7 states, without restriction on
the alphabet size.

Before explaining the algorithm, we introduce some terminology. A DFA $\mathcal{B}$ obtained by adding
some symbols to a DFA $\mathcal{A}$ will be called an \emph{extension} of $\mathcal{A}$. If $\mathcal{A} =
(Q,\Sigma,\delta)$, then $S\subseteq Q$ will be called \emph{reachable} if there exists a word
$w\in\Sigma^*$ such that $Qw=S$. We say
that $S$ is \emph{reducible} if there exists a word $w$ such that $|Sw|<|S|$, and we call $w$ a
\emph{reduction word} for $S$. Our algorithm is mainly based on the following observation:
\begin{property}\label{property:extension}
	If a DFA $\mathcal{A}$ is synchronizing, and $\mathcal{B}$ is an extension of $\mathcal{A}$, then
	$\mathcal {B}$ is synchronizing as well and its shortest synchronizing word is at most as long as the
	shortest synchronzing word for $\mathcal{A}$.
\end{property}

The algorithm roughly runs as follows. We search for (super)critical DFAs on $n$ states, so a DFA is
discarded if it synchronizes faster, or if it does not synchronize at all. For a given DFA
$\mathcal{A} = (Q,\Sigma,\delta)$ which is not yet discarded or investigated, the algorithm does the
following:
\begin{enumerate}
	\item If $\mathcal{A}$ is synchronizing and (super)critical, we have identified an example we are
	searching for.
	\item If $\mathcal{A}$ is synchronizing and subcritical, it is discarded, together with all its
	possible extensions (justified by Property \ref{property:extension}).
	\item If $\mathcal{A}$ is not synchronizing, then find an upper bound $L$ for how fast any synchronizing
	extension of $\mathcal{A}$ will synchronize (see below). If $L < (n-1)^2$, then discard $\mathcal{A}$
	and all its extensions. Otherwise, discard only $\mathcal{A}$ itself.
\end{enumerate}

The upper bound $L$ for how fast any synchronizing extension of $\mathcal{A}$ will synchronize, is
found by analyzing distances in the directed graph of the power automaton of $\mathcal{A}$. For
$S,T\subseteq Q$, the distance $\dist(S,T)$ from $S$ to $T$ in this graph is equal to the 
length of the shortest word $w$ for which $Sw=T$, if such a word exists. 

The distances in the directed graph of the power automaton are computed by way of the 
Floyd-Warshall algorithm. As the computation complexity of the Floyd-Warshall algorithm
is cubic, the complexity in terms of $n$ is $\Theta(8^n)$, which is actually quite bad.
For that reason, we took the effort to implement it far more efficiently
than the straightforward way, see \cite{B17}. % arXiv:1705.08743

We do not compute $\dist(S,T)$ if $T$ is a singleton. Instead, we compute
$$
\min\{\dist(S,T) \mid T \subset Q \mbox{ and } |T| \le i \}
$$
for every $S \subseteq Q$ and $i = 1,2,\ldots,n-1$: for $i = 1$ as a replacement,
yielding vacated space in the distance matrix, and for larger $i$ as a usage of 
this space.

A possible upper bound $L$ is as follows:
\begin{enumerate}
	\item Determine the size $|S|$ of a smallest reachable set $S$. 
	Let $m$ be the minimal distance from $Q$ to a set of size $|S|$.
	\item For each $k\leq |S|$, partition the collection of irreducible sets of size $k$ into strongly
	connected components. Let $m_k$ be the number of components plus the sum of their diameters.
	\item For each reducible set $R$ of size $k\leq |S|$, find the length $l_R$ of its shortest
	reduction word. Let $l_k$ be the maximum of these lengths.
	\item Now note that a synchronizing extension of $\mathcal{A}$ will have a synchronizing
	word of length at most
	\[ L \; = \; \sum_{k=2}^{|S|}(m_k+l_k) + m.  \]
\end{enumerate}

A slightly better upper bound is the following. Let $M$ be the maximum distance 
from $Q$ to a set of size $|S|$. Partition the irreducible sets of size $|S|$ 
which can be reached from $Q$ into strongly connected components, and let $c$ 
be the number of components plus the sum of their diameters. Then
a synchronizing extension of $\mathcal{A}$ will have a synchronizing
word of length at most
\[ L' \; = \; \sum_{k=2}^{|S|}(m_k+l_k) - c + 1 + M.  \]
So one can say that $Q$ as a reducible subset is treated differently in 
the construction of $L'$ than in the construction $L$. As a consequence,
$L' \le L$, so $L'$ is a better upper bound than $L$. 
In the upper bound $L''$ which is actually used in the computations, 
we extend this different treatment to other reducible subsets. 
%But unlike with $L'$, $L'' \le L$ does not always hold. 

But first, we describe $L$ in an inductive way. We take $L = L_{|S|} + m$,
and define
\begin{align*}
L_1 \; &= \; 0, \\
L_k \; &= \; m_k + l_k + L_{k-1} \\ &= \; m_k + \max\{l_R \mid R 
\mbox{ is reducible and } |R| = k \} + L_{k-1} && \mbox{if $k > 1$.}
\end{align*}
Here, $L_k$ is an upper bound for the maximum
length of the shortest synchronizing word for any subset of size $k$.
We take $L'' = L''_Q$, and we define inductively an upper bound $L''_R$
for the length of the the shortest synchronizing word for 
a reducible subset $R$, and an upper bound $L''_k$ for the maximum
length of the shortest synchronizing word for any subset of size $k$.
Define $S_R$, $m_R$, $M_R$ and $c_R$ as $S$, $m$, $M$ and $c$ respectively, 
but with $Q$ replaced by $R$.
\begin{align*}
L''_R \; &= \; m_R                       && \mbox{if $|S_R| = 1$,} \\
L''_R \; &= \; L''_{|S_R|} - c_R + 1 + M_R && \mbox{if $|S_R| > 1$,} \\
L''_1 \; &= \; 0, \\
L''_k \; &= \; m_k + \max\{ L''_{k-1}, L''_R \mid R 
\mbox{ is reducible and } |R| = k \} && \mbox{if $k > 1$.}
\end{align*}
Although $L''$ yields a better upper bound than $L'$ in general, we do not
always have $L'' \le L'$. To overcome this, we improved the definition
of $L''_R$ in the newest version of the code, but only for $R \ne Q$,
by taking the minimum of what is given above, and $L''_{|R|-1} + l_R$.
(The calculations on DFAs with $7$ states have not been redone.)

The algorithm performs a depth-first search. So after investigating a DFA, first all its extensions (not
yet considered) are investigated before moving on.
Still, we can choose which extension to pick first. We would like to choose
an extension that is likely to be discarded immediately together with all its extensions. Therefore,
we apply the following heuristic: for each possible extension $\mathcal{B}$ by one symbol, we count how
many pairs of states in $\mathcal{B}$ would be reducible. The extension for which this is maximal is
investigated first. The motivation is that a DFA is synchronizing if and only if each pair is reducible 
\cite{C64}.

Furthermore, we only investigate extensions $\mathcal{B}$ by one symbol if either the number of 
pairs which synchronize in $\mathcal{B}$ is larger than in $\mathcal{A}$, or $\mathcal{A}$ 
(and hence also $\mathcal{B}$) is synchronizing. The idea behind this is the following, 
which is easy to prove. If $\mathcal{A}$ is not synchronizing and $\mathcal{B}$ is an extension of 
$\mathcal{A}$ which is synchronizing, then $\mathcal{B}$ has a symbol, which, when added to 
$\mathcal{A}$, increases the number of synchronizing pairs.

The algorithm which has actually been used also takes symmetries on the set of states into account, 
making it almost $n!$ times faster. The symmetry reduction on the states is perfect for automata 
which do not have a pair of conjugate symbols (two symbols $a$ and $b$ are conjugate
if there exists a symmetry $\sigma$ such that $\sigma b \sigma^{-1} = a$). 
Furthermore, we used a multithreaded version of the algorithm for the case of $n = 7$ states. 

In the table below, we counted for every number of symbols (alph.\@ size) and every minimal 
synchronization length (sync.) $\geq 31$, the number of corresponding basic DFAs with
seven states, up to symmetry. We do not require the automata to be minimal, meaning that we allow solutions from which symbols can be removed without changing the synchronization length.  This explains why our numbers differ from those found by Szyku{\l}a in his thesis.
\begin{center}
	\renewcommand{\arraystretch}{1.4}
	\begin{tabular}{|r|rrrrrr|}
		\hline
		alph.\@ & sync.\@ & sync.\@ & sync.\@ & sync.\@ & sync.\@ & sync.\@ \\[-5pt]
		size & 36 & 35 & 34 & 33 & 32 & 31 \\
		\hline
		1 & & & & & & \\[-5pt]
		2 & 1 & & & & 3 & 3 \\[-5pt]
		3 & & & & & 3 & 8 \\[-5pt]
		4 & & & & & & 4 \\
		\hline
		total & 1 & 0 & 0 & 0 & 6 & 15 \\
		\hline
	\end{tabular}
\end{center}
Tables for less than seven states can be found in \cite{BDZ16}, which is an extended version of 
\cite{DZ17}. See also the graph in subsection \ref{secpfaalph}.

In \cite{10.1007/978-3-319-40946-7_15}, the synchronization upper bounds which are used for
pruning the search are different, and only work for DFAs. But there are also differences
in the way the searching is performed. In \cite{10.1007/978-3-319-40946-7_15}, the searching
is done by way of breadth-first search instead of depth-first search, taking far less overhead 
and resources, so it can be done for larger number of states as well. But it leads to
more redundancy in the search: more DFAs need to be scanned, increasing the computation time. It appeared that with our search algorithm, one can compute all critical DFAs with 6 states within a day (using \cite[Theorem 1]{10.1007/978-3-319-40946-7_15} as a synchronization upper bound for pruning; with our synchronization upper bound it is about 400 times faster.)
% The implementation of our search algorithm allows at most $8$ states. 
%% Weggehaald, want dit is machine-afhankelijk.

\section{PFAs with Small State Set}
\label{secpfa}
In the remainder of this paper, we study PFAs and shortest carefully synchronizing word lengths. 
In this section and the next, we focus on PFAs that have long shortest synchronizing words and a 
small number of states. In later sections we construct PFAs on two or three symbols with shortest 
carefully synchronizing words of exponential length for general $n$.

\subsection{PFAs on at Most 7 States}
To find PFAs with small number of states and long shortest carefully synchronizing word, we exploit that Property \ref{property:extension} also holds for PFAs. However, for PFAs it is not true that reducibility of all pairs of states guarantees careful synchronization. Therefore, we apply a different search algorithm.
We search for a PFA with synchronizing length equal to or greater than some given target length. To construct it, we build the alphabet by choosing the symbols of a long shortest synchronizing word from left to right.
More precisely, on the stack of the search function we always have a prefix
of a possible synchronizing word. The search is pruned in the following three cases,
where $w$ is the prefix on the stack:
\begin{enumerate}
	
	\item There exists a word $u$ consisting of the letters of $w$,
	with $|u| < |w|$, such that either $Qu = Qw$, or $Qu$ and $Qw$ are both singletons;
	
	\item The automaton $\mathcal{A}$, whose symbols are the letters
	of $w$, has a synchronizing word which is shorter than the target length;
	
	\item The value of the upper bound $L''$ for the automaton $\mathcal{A}$
	is smaller than the target length.
	
\end{enumerate}
If the search is not pruned, the prefix $w$ will be extended by one letter $a$.
To reduce the number of solutions and speed up the algorithm even further, we only select a candidate symbol $a$ as follows:
\begin{enumerate}
	
	\item If $Qwa = Qwb$ for a letter $b$ of $w$, then $a$ is only selected if it is equal to the first such letter in $w$;
	
	\item If $Qwa = Qwb$ does not hold for any letter $b$ of $w$, then $a$ is only selected if it is undefined outside
	$Qw$.
	
\end{enumerate}
The purpose of symbol $a$ is to get from $Qw$ to the next subset. 
In the situation of (1), no new symbol need to be added to make the transition from $Qw$ to the next
subset. We choose $a$ to be an old symbol, because there is no need to add a new symbol at this point
in the search. In the situation of (2), we choose $a$ to be defined on states
of $Qw$ only, because the purpose of $a$ is to get from $Qw$ to the next subset. There is no need
to add a more complete symbol at this point in the search.

The selection rules (1) and (2) above significantly reduce the number of cases, but (2) has the drawback that 
the algorithm does not necessarily find the solution with the smallest possible alphabet
any more. For example, it did not find a solution of length $37$ with only
$6$ symbols for $n = 6$. But postprocessing all solutions for $n = 6$ did reveal
a solution of length $37$ with only $6$ symbols indeed.

During the postprocessing of a solution, symbols are made more complete, so (2) does not hold
any longer. There are many ways to make the symbols more complete, but most of them will
affect the synchronization length, which gives us effective pruning. For every solution
with more complete symbols, symbols may have become the same, and we count the number of
distinct symbols.

Just as for the DFAs, we took symmetry into account. But we did not need a multithreaded 
version of the algorithm for the case of $n = 7$ states. 

For $n\leq 7$, our algorithm has identified the maximal length $p(n)$ 
of a shortest carefully synchronizing word in a PFA on $n$ states. 
The results are:
\[
\renewcommand{\arraystretch}{1.4}
\begin{array}{|c|c|c|c|c|c|c|}
\hline
\quad n \quad&\quad 2 \quad&\quad 3 \quad&\quad 4 \quad&\quad 5 \quad&\quad 6 \quad&\quad 7 \quad\\
\hline
\quad p(n) \quad& 1 & 4 & 10 & 21 & 37 & 63 \\
\hline
\end{array}
\vspace{3pt}
\]
%We observe that PFAs exist for $n=4,5,6,7$ states, with shortest carefully synchronizing word lengths 
%exceeding $(n-1)^2$. 
For $n=8$ states, $102$ can be reached as shortest carefully synchronizing word length, using
$9$ symbols. But $8$ states are too many for us to prove computationally that this is the largest possible length.
%Note that for $n=5,6,7,8$, this even exceeds the  Pin-Frankl bound $\frac{1}{6}(n^3-n)$ for DFAs 
%from \cite{pin}.

Whereas for $n\geq 6$, no critical DFAs are known with more than two symbols, 
PFAs with long shortest carefully synchronizing word lengths tend to have more symbols: for 
$n=4,5,6,7$ states, the minimal numbers of symbols achieving the maximal shortest carefully 
synchronizing word lengths 10, 21, 37 and 63 are 3, 6, 6 and 8 respectively. 
Below we give examples of PFAs on 4, 5, 6 and 7 states reaching these lengths.
\begin{center}
	\begin{tikzpicture}
	\useasboundingbox (0,-0.5) rectangle (3,2.5);
	\node[circle,draw,inner sep=0pt,minimum width=3mm] (1) at (0,2) {};
	\node[circle,draw,inner sep=0pt,minimum width=3mm] (2) at (2,2) {};
	\node[circle,draw,inner sep=0pt,minimum width=3mm] (3) at (2,0) {};
	\node[circle,draw,inner sep=0pt,minimum width=3mm] (4) at (0,0) {};
	\draw[->] (1) -- node[above,inner sep=1pt] {$a,c$} (2);
	\draw[->] (2) -- node[right,inner sep=2pt] {$b$} (3);
	\draw[->] (3) -- node[below,inner sep=3pt] {$b,c$} (4);
	\draw[->] (4) -- node[left,inner sep=2pt] {$b,c$} (1);
	\draw[->] (2) edge[out=90,in=0,looseness=10] node[right,inner sep=4pt] {$a$} (2);
	\draw[->] (3) edge[out=-90,in=0,looseness=10] node[right,inner sep=4pt] {$a$} (3);
	\draw[->] (4) edge[out=-90,in=-180,looseness=10] node[left,inner sep=4pt] {$a$} (4);
	\end{tikzpicture}
	\begin{tikzpicture}
	\useasboundingbox (-0.7,-0.5) rectangle (6.2,2.5);
	\node[circle,draw,inner sep=0pt,minimum width=3mm] (0) at (0,1) {};
	\node[circle,draw,inner sep=0pt,minimum width=3mm] (1) at (1.732,2) {};
	\node[circle,draw,inner sep=0pt,minimum width=3mm] (2) at (1.732,0) {};
	\node[circle,draw,inner sep=0pt,minimum width=3mm] (3) at (3.464,1) {};
	\node[circle,draw,inner sep=0pt,minimum width=3mm] (4) at (5.196,1) {};
	\draw[->] (0) edge[out=15,in=-135] node[pos=0.55,below,inner sep=3pt] {$b$} (1);
	\draw[->] (1) edge[out=-165,in=45] node[pos=0.65,above,inner sep=9pt] {$a,d,e,f$} (0);
	\draw[->] (1) -- node[right,inner sep=2pt] {$c$} (2);
	\draw[->] (2) -- node[pos=0.55,below,inner sep=4pt] {$c$} (0);
	\draw[->] (3) -- node[pos=0.45,above,inner sep=3pt] {$c$} (1);
	\draw[->] (2) edge[out=45,in=-165] node[pos=0.45,above,inner sep=3pt] {$d$} (3);
	\draw[->] (3) edge[out=-135,in=15] node[pos=0.45,below,inner sep=4pt] {$e$} (2);
	\draw[->] (4) edge[out=165,in=15] node[above,inner sep=2pt] {$e$} (3);
	\draw[->] (3) edge[out=-15,in=-165] node[below,inner sep=3pt] {$f$} (4);
	\draw[->] (0) edge[out=225,in=135,looseness=10] node[left,inner sep=2pt] {$a$} (0);
	\draw[->] (2) edge[out=225,in=315,looseness=10] node[pos=0.2,left,inner sep=3pt] {$a,b$} (2);
	\draw[->] (3) edge[out=135,in=45,looseness=10] node[above,inner sep=2pt] {$a,b$} (3);
	\draw[->] (4) edge[out=315,in=45,looseness=10] node[right,inner sep=3pt] {$a,b,c,d$} (4);
	\end{tikzpicture}
\end{center}
The left one has two synchronizing words of length 10:
$abcabab\bcdot(b+c)ca$. The right one has unique shortest synchronizing word 
$\mathit{abcabdbebcabdbfbcdeca}$ of length 21.
\begin{center}
	\begin{tikzpicture}
	\useasboundingbox (-0.7,-0.5) rectangle (8.5,2.5);
	\node[circle,draw,inner sep=0pt,minimum width=3mm] (0) at (0,1) {};
	\node[circle,draw,inner sep=0pt,minimum width=3mm] (1) at (2.464,1) {};
	\node[circle,draw,inner sep=0pt,minimum width=3mm] (2) at (3.464,2) {};
	\node[circle,draw,inner sep=0pt,minimum width=3mm] (3) at (3.464,0) {};
	\node[circle,draw,inner sep=0pt,minimum width=3mm] (4) at (5.196,1) {};
	\node[circle,draw,inner sep=0pt,minimum width=3mm] (5) at (6.928,1) {};
	\draw[->] (0) edge[out=45,in=-180] node[above]{$b$} (2);
	\draw[->] (1) -- node[pos=0.40,above,inner sep=2pt]{$a,b,d,e,f$} (0);
	\draw[->] (1) -- node[pos=0.40, below,inner sep=4pt]{$c$} (3);
	\draw[->] (2) -- node[pos=0.60,above,inner sep=4pt]{$b$} (1);
	\draw[->] (3) edge[out=-180,in=-45] node[below]{$c$} (0);
	%\draw[->] (3) edge[out=15,in=-135] node[pos=0.55,below,inner sep=4pt]{$d$} (4);
	\draw[->] (3) edge[out=45,in=-165] node[pos=0.45,above,inner sep=3pt] {$d$} (4);
	\draw[->] (4) -- node[pos=0.45,above,inner sep=3pt]{$c$}(2);
	%\draw[->] (4) edge[out=-165,in=45] node[pos=0.55,above,inner sep=4pt]{$e$} (3);
	\draw[->] (4) edge[out=-135,in=15] node[pos=0.45,below,inner sep=4pt] {$e$} (3);
	\draw[->] (4) edge[out=-15,in=-165] node[below,inner sep=3pt] {$f$} (5);
	\draw[->] (5) edge[out=165,in=15] node[above,inner sep=2pt] {$e$}(4);
	\draw[->] (0) edge[out=225,in=135,looseness=10] node[left,inner sep=2pt] {$a$} (0);
	\draw[->] (2) edge[out=135,in=45,looseness=10] node[pos=0.2,left,inner sep=2pt] {$a$} (2);
	\draw[->] (3) edge[out=225,in=315,looseness=10] node[pos=0.2,left,inner sep=3pt] {$a,b$} (3);
	\draw[->] (4) edge[out=135,in=45,looseness=10] node[above,inner sep=2pt] {$a,b$} (4);
	\draw[->] (5) edge[out=315,in=45,looseness=10] node[right,inner sep=3pt] {$a,b,c,d$} (5);
	\end{tikzpicture}
\end{center}
The shortest synchronizing word is $ab^2ab^2cb^2ab^2db^2eb^2cb^2ab^2db^2fb^2cdecb^2a$ for this 
PFA on 6 states. It is unique and has length 37.
\begin{center}
	\begin{tikzpicture}[x=1.225cm,y=1.225cm]
	\useasboundingbox (-0.6,-1.4) rectangle (6.2,1.4);
	\node[circle,draw,inner sep=0pt,minimum width=3mm] (0) at (0,1) {};
	\node[circle,draw,inner sep=0pt,minimum width=3mm] (1) at (1,0) {};
	\node[circle,draw,inner sep=0pt,minimum width=3mm] (2) at (0,-1) {};
	\node[circle,draw,inner sep=0pt,minimum width=3mm] (3) at (2,1) {};
	\node[circle,draw,inner sep=0pt,minimum width=3mm] (4) at (2,-1) {};
	\node[circle,draw,inner sep=0pt,minimum width=3mm] (5) at (3,0) {};
	\node[circle,draw,inner sep=0pt,minimum width=3mm] (6) at (4.414,0) {};
	\draw[->] (3) -- node[above,inner sep=3pt]{$f,g,h$} (0);
	\draw[->] (1) edge[in=-30,out=120] node[pos=0.2,above,inner sep=8pt]{$a,e$} (0);
	\draw[->] (0) edge[out=-60,in=150] node[pos=0.3,below,inner sep=3pt]{$b$} (1);
	\draw[->] (2) -- node[left,inner sep=3pt]{$c,d$} (0);
	\draw[->] (1) edge[out=-150,in=60] node[pos=0.7,right,inner sep=4pt]{$c,f,g,h$} (2);
	\draw[->] (1) edge[out=30,in=-120] node[pos=0.6,below,inner sep=2pt]{$d$} (3);
	\draw[->] (3) edge[in=60,out=-150] node[pos=0.6,above,inner sep=3pt]{$c$} (1);
	\draw[->] (3) -- node[pos=0.6,left,inner sep=2pt]{$e$} (4);
	\draw[->] (4) -- node[below]{$e$} (2);
	%\draw[->] (4) edge[out=15,in=-135] node[pos=0.55,below,inner sep=4pt]{$f$} (5);
	\draw[->] (4) edge[out=60,in=-150] node[pos=0.4,above,inner sep=3pt] {$f$} (5);
	\draw[->] (5) edge[out=150,in=-60] node[pos=0.6,below,inner sep=4pt]{$e$} (3);
	%\draw[->] (5) edge[out=-165,in=45] node[pos=0.55,above,inner sep=4pt]{$g$} (4);
	\draw[->] (5) edge[out=-120,in=30] node[pos=0.4,below,inner sep=4pt] {$g$} (4);
	\draw[->] (5) edge[out=-15,in=-165] node[below,inner sep=3pt] {$h$} (6);
	\draw[->] (6) edge[out=165,in=15] node[above,inner sep=2pt] {$g$} (5);
	\draw[->] (0) edge[out=45,in=135,looseness=10] node[pos=0.8,left,inner sep=2pt] {$a$} (0);
	\draw[->] (3) edge[out=45,in=135,looseness=10] node[pos=0.2,right,inner sep=3pt] {$a,b$} (3);
	\draw[->] (2) edge[out=225,in=315,looseness=10] node[pos=0.2,left,inner sep=3pt] {$a,b$} (2);
	\draw[->] (4) edge[out=225,in=315,looseness=10] node[pos=0.8,right,inner sep=3pt] {$a,b,c,d$} (4);
	\draw[->] (5) edge[out=135,in=45,looseness=10] node[above,inner sep=2pt] {$a,b,c,d$} (5);
	\draw[->] (6) edge[out=315,in=45,looseness=10] node[right,inner sep=3pt] {$a,b,c,d,e,f$} (6);
	\end{tikzpicture}
\end{center}
There are 81 shortest synchronizing words (of length 63) for this PFA on 7 states, all being of the form
$$
abcabdbebcabdbfbdbgbdbebcabdbfbdbhbdbeb................bdefgeca.
%abcabdbebcabdbfbdbgbdbebcabdbfbdbhbdbebdfbdgbdecabdfbdhbdefgeca
$$
This word is remarkably similar to the one for 5 states and also the actions of some of the symbols are comparable. It is however not yet sufficient to detect a pattern that could be extrapolated to larger $n$. 

\subsection{PFAs on at Most 7 States with Fixed Alphabet Size} 
\label{secpfaalph}

Write $p(n,k)$ for the maximal shortest carefully synchronizing word length for a PFA on $n$ states and $k$ symbols. Computing the values of $p(n,k)$ for all $n \le 7$ and all $k \le 41$ is a lot more involved than computing $p(n)$ for all $n \le 7$. We made several improvements to the algorithm to get it done, among which the following:
\begin{enumerate}
	
	\item It appeared that most of the times where upper bound $L''$ needs to be determined, the 
	PFA is already synchronizing. So we start with trying a breadth first search with bit vectors, 
	and only compute $L''$ in the above-described way if the PFA is not synchronizing.
	
	\item We estimate the number of required symbols after postprocessing (making symbols
	more complete) already before the postprocessing, and use this estimate to prune the search.
	
	\item If the estimate on the number of required symbols is equal to the maximum allowed 
	number of symbols, then for every extension $\mathcal{B}$ of $\mathcal{A}$, the PFAs we get
	by postprocessing $\mathcal{B}$ are contained in the PFAs we get by postprocessing $\mathcal{A}$
	directly. For that reason, we do not search further for extensions of $\mathcal{A}$ in this case, 
	but postprocess immediately. So the postprocessing is not only to reduce the number of 
	symbols in this case, but also to obtain synchronization.
	
\end{enumerate}
In the graph below, the values of $p(n,k)$ are plotted for all $n \le 7$ and all 
$k \le 40$ in light gray. Furthermore, the values of $d(n,k)$ for DFAs are plotted for all $n \le 7$ and all $k \le 40$ in dark gray, except the cases where $n = 7$ and $5 \le k \le 40$.

\begin{figure}[t]
	\begin{center}
		\begin{tikzpicture}[x=2mm,y=2mm]
		\begin{scope}
		\clip (0,0) rectangle (40.7,63.7);
		\foreach \x in {0,5,...,40} {
			\foreach \y in {0,10,...,60} {
				\fill[black!10] (\x,{\y+5*mod(\x,2)}) rectangle ++(5,5);
			}
		}
		\end{scope}
		\foreach \x in {5,10,...,40} {
			\draw (\x,0) node[black,anchor=north] {$\scriptstyle\x$};
		}
		\foreach \y in {5,10,...,60} {
			\draw (0,\y) node[black,anchor=east] {$\scriptstyle\y\!$};
		}
		%\fill[white] (3.14,0.14) rectangle (9.86,1.86) (26.14,0.14) rectangle (32.86,1.86);
		\draw (0,63.7) -- (0,-0.7) (40.7,0) -- (-0.7,0);
		\mathversion{bold}
		\draw[black!30,thick,line cap=round,line join=round]
		(1,1) circle (1pt) 
		\foreach \s in {2,3} { -- (\s,1) circle (1pt) }
		(1,2) circle (1pt) 
		\foreach \s in {2,...,16} { -- (\s,4) circle (1pt) }
		\foreach \s in {17,...,20} { -- (\s,3) circle (1pt) }
		\foreach \s in {21,...,23} { -- (\s,2) circle (1pt) }
		\foreach \s in {24,...,26} { -- (\s,1) circle (1pt) }
		(1,3) circle (1pt)
		\foreach \s in {2} { -- (\s,9) circle (1pt) }
		\foreach \s in {3,...,40} { -- (\s,10) circle (1pt) }
		edge[line cap=butt] (40.7,10) node[anchor=west,shift={(0,-1)},scale=1.25] {\bf$n=4$ PFA}
		(1,4) circle (1pt)
		\foreach \s in {2} { -- (\s,16) circle (1pt) }
		\foreach \s in {3} { -- (\s,19) circle (1pt) }
		\foreach \s in {4,5} { -- (\s,20) circle (1pt) }
		\foreach \s in {6,...,40} { -- (\s,21) circle (1pt) }
		edge[line cap=butt] (40.7,21) node[anchor=west,scale=1.25] {\bf$n=5$ PFA}
		(1,5) circle (1pt)
		\foreach \s in {2} { -- (\s,26) circle (1pt) }
		\foreach \s in {3} { -- (\s,33) circle (1pt) }
		\foreach \s in {4} { -- (\s,34) circle (1pt) }
		\foreach \s in {5} { -- (\s,36) circle (1pt) }
		\foreach \s in {6,...,40} { -- (\s,37) circle (1pt) }
		edge[line cap=butt] (40.7,37) node[anchor=west,scale=1.25] {\bf$n=6$ PFA}
		(1,6) circle (1pt)
		\foreach \s in {2} { -- (\s,39) circle (1pt) }
		\foreach \s in {3} { -- (\s,51) circle (1pt) }
		\foreach \s in {4} { -- (\s,54) circle (1pt) }
		\foreach \s in {5} { -- (\s,58) circle (1pt) }
		\foreach \s in {6} { -- (\s,60) circle (1pt) }
		\foreach \s in {7} { -- (\s,62) circle (1pt) }
		\foreach \s in {8,...,40} { -- (\s,63) circle (1pt) }
		edge[line cap=butt] (40.7,63) node[anchor=west,scale=1.25] {\bf$n=7$ PFA};
		\mathversion{bold}
		\draw[black!50,thick,line cap=round,line join=round]
		(1,1) circle (1pt)
		\foreach \s in {2,3} { -- (\s,1) circle (1pt) }
		node[anchor=west,inner sep=2pt,scale=1.25] {$n=2$}
		(1,2) circle (1pt) 
		\foreach \s in {2,...,5} { -- (\s,4) circle (1pt) }
		\foreach \s in {6,...,9} { -- (\s,3) circle (1pt) }
		\foreach \s in {10,...,23} { -- (\s,2) circle (1pt) }
		\foreach \s in {24,...,26} { -- (\s,1) circle (1pt) }
		node[anchor=west,inner sep=2pt,scale=1.25] {$n=3$}
		(1,3) circle (1pt) 
		\foreach \s in {2,...,5} { -- (\s,9) circle (1pt) }
		\foreach \s in {6,...,8} { -- (\s,8) circle (1pt) }
		\foreach \s in {9,...,17} { -- (\s,7) circle (1pt) }
		\foreach \s in {18,...,40} { -- (\s,5) circle (1pt) }
		edge[line cap=butt] (40.7,5) node[anchor=west,scale=1.25] {\bf$n=4$ DFA}
		(1,4) circle (1pt)
		\foreach \s in {2,3} { -- (\s,16) circle (1pt) }
		\foreach \s in {4,...,6} { -- (\s,15) circle (1pt) }
		\foreach \s in {7,...,13} { -- (\s,14) circle (1pt) }
		\foreach \s in {14,15} { -- (\s,13) circle (1pt) }
		\foreach \s in {16,...,23} { -- (\s,12) circle (1pt) }
		\foreach \s in {24,...,29} { -- (\s,11) circle (1pt) }
		\foreach \s in {30,...,40} { -- (\s,10) circle (1pt) }
		edge[line cap=butt] (40.7,10) node[anchor=west,shift={(0,1)},scale=1.25] {\bf$n=5$ DFA}
		(1,5) circle (1pt)
		\foreach \s in {2} { -- (\s,25) circle (1pt) }
		\foreach \s in {3} { -- (\s,23) circle (1pt) }
		\foreach \s in {4,...,11} { -- (\s,22) circle (1pt) }
		\foreach \s in {12,...,15} { -- (\s,21) circle (1pt) }
		\foreach \s in {16,...,21} { -- (\s,20) circle (1pt) }
		\foreach \s in {22,...,40} { -- (\s,19) circle (1pt) }
		edge[line cap=butt] (40.7,19) node[anchor=west,scale=1.25] {\bf$n=6$ DFA}
		(1,6) circle (1pt)
		\foreach \s in {2} { -- (\s,36) circle (1pt) }
		\foreach \s in {3} { -- (\s,32) circle (1pt) }
		\foreach \s in {4} { -- (\s,31) circle (1pt) }
		\foreach \s in {5} { -- (\s,30) circle (1pt) }
		node[anchor=west,inner sep=2pt,scale=1.25] {\bf$n=7$ DFA, $k \le 4$};
		\mathversion{normal}
		\draw (17.5,-2) node {$k$} edge[->] (23,-2)
		(-4,28.4) node {$d(n,k)$} edge[->] (-4,37)
		(-4,26.4) node {$p(n,k)$};
		\end{tikzpicture}
	\end{center}
\end{figure}
So, we see that for DFAs with $n \le 7$ states, after having the maximum 
$d(n,k) = (n-1)^2$ at $k = 2$, the values of $d(n,k)$ decrease for larger $k$. 
So it seems that for DFAs with a greater number of symbols, it is harder to get 
large synchronization lengths.

For PFAs, this behaviour is quite different. Due to partiality, symbols
may be only applicable on a few subsets of the set of all states, which gives less
possibilities to synchronize carefully and therefore more possibilities for coexistence of symbols in a slowly synchronizing PFA.

\subsection{Binary DFAs and PFAs on at Most 10 States}

Now that we know that the maximal carefully synchronization lengths of PFAs with $n$ states 
are larger than the synchronization lengths of DFAs, we can wonder what will
happen if we fix the alphabet size to $2$. For DFAs all evidence suggests that this choice gives the largest possible synchronization lengths. In contrast, for binary PFAs the lengths grow slower than for general PFAs, although the growth is still exponential as we will see in Section \ref{sectwos}.  

Using breadth first search with bit vectors, combined with symmetry reduction on the states, 
we computed all possible carefully synchronization lengths of binary PFAs with $n \le 9$ states. 
For the binary PFAs with $n = 10$ states, we additionally
used multithreading and applied a few low level optimizations. 
One of the optimization techniques was to view the PFAs as CNFAs, 
namely by replacing undefined transitions by transitions to the 
whole set of states.
The results are displayed below, where the maximum carefully 
synchronization lengths $p(n,2)$ are in boldface. For comparison, we also added the known 
synchronization lengths for general PFAs.
\[
\renewcommand{\arraystretch}{1.4}
\begin{array}{|c|l|l|l|}
\hline
\quad n \quad&\quad \mbox{binary DFA} \quad&\quad \mbox{proper binary PFA} \quad &\quad \mbox{PFA}\quad\\
\hline
\quad 2 \quad&\quad {\bf 1} 
\quad&\quad {\bf 1} \quad &\quad 1\\
\hline
\quad 3 \quad&\quad 1\mbox{--}{\bf 4} 
\quad&\quad 1\mbox{--}3 \quad &\quad 1\mbox{--}4\\
\hline
\quad 4 \quad&\quad 1\mbox{--}{\bf 9} 
\quad&\quad 1\mbox{--}7 \quad &\quad 1\mbox{--}10\\
\hline
\quad 5 \quad&\quad 1\mbox{--}{\bf 16} 
\quad&\quad 1\mbox{--}15 \quad &\quad 1\mbox{--}21\\
\hline
\quad 6 \quad&\quad 1\mbox{--}23,\quad 25 
\quad&\quad 1\mbox{--}23,\quad {\bf 26} \quad &\quad 1\mbox{--}37\\
\hline
\quad 7 \quad&\quad 1\mbox{--}32,\quad 36 
\quad&\quad 1\mbox{--}33,\quad 35\mbox{--}36,\quad {\bf 39} \quad &\quad 1\mbox{--}63\\
\hline
\quad 8 \quad&\quad 1\mbox{--}44,\quad 49 
\quad&\quad 1\mbox{--}45,\quad 48,\quad 50,\quad 52,\quad {\bf 55} \quad &\\
\hline
\quad 9 \quad&\quad 1\mbox{--}52,\quad 56\mbox{--}58,\quad 64 
\quad&\quad 1\mbox{--}63,\quad 65,\quad 68,\quad 72\mbox{--}{\bf 73} \quad &\\
\hline
\quad 10 \quad&\quad 1\mbox{--}66,\quad 72\mbox{--}74,\quad 81 
\quad&\quad 1\mbox{--}80,\quad 82\mbox{--}84,\quad 87,\quad 89,\quad 93\mbox{--}{\bf 94} \quad &\\
\hline
\end{array}
\vspace{3pt}
\]
A notable feature in this table is that several gaps appear in the ranges of possible values. 
Unfortunately, we still lack a deeper understanding of this behaviour. 
For DFAs, existence of gaps has already been observed in \cite{T06,AVG12,10.1007/978-3-319-40946-7_15} and has been studied further in \cite{dzyga_et_al:LIPIcs:2017:8122}.

\section{Specific PFA Constructions}
\label{secbinpfa}

In this section we present two series of PFAs (parameterized by its size $n$) of special interest: 
they have quadratic shortest synchronizing word length exceeding $(n-1)^2$, for each $n$ for which
this is possible. Furthermore, they reach the maximum possible synchronization length for some low
values of $n$. The constructed series fill up a void between the computations up to $7$ or $10$
states respectively, and the asymptotic results in the next sections.

Both series are closely related to \v{C}ern\'y's DFAs. The 
first series is $T_n$ on $n$ states and three symbols; for this we give the full analysis which 
is quite straightforward. The second series is $P_n$ on $n$ states and two symbols. 
For this series the full analysis is much more involved; in this paper we give the construction and 
the results, but the full analysis leading to these results will be presented in a separate paper.

We start by $T_n$. For $n\geq 4$, $T_n$ is defined to be the PFA on the $n$ states $1,2,\ldots,n$ and the three symbols $a,b,c$ such that
%\[
%\begin{array}{llllll}
%qa = q+1 & \mbox{ if $q \neq n-1,n$,} \;\;\; & 1b = 2, && 1c = 2, & \\
%(n-1)a = 1, && qb = q& \mbox{ if $q \neq 1$,} \;\;\; & nc = 2 + %\lfloor\frac{n-1}{2} \rfloor , \\
%na = n, &&&& qc = \bot & \mbox{ if $q \neq 1,n$.}
%\end{array}
%\]
\begin{gather*}
qa = \left\{ \begin{array}{ll} 
q+1\quad & 1 \le q \le n-2 \\
1        & q = n-1 \\
n        & q = n
\end{array} \right. \qquad
qc = \left\{ \begin{array}{ll} 
2 & q = 1 \\
\bot      & 2 \le q \le n-1 \\
2 + \lfloor\frac{n-1}{2} \rfloor\quad    & q = n
\end{array} \right.\\
qb = \left\{ \begin{array}{ll} 
2\quad & q = 1 \\
q      & 2 \le q \le n
\end{array} \right. 
\end{gather*}
For $n=3$ we take the same definition in which $nc = 2 + \lfloor\frac{n-1}{2} \rfloor$ is taken modulo $n-1$, so $3c = 1$.
Note that for all $n$ the PFA is obtained by extending $C_{n-1}$ by an extra node $n$ on which $a$ and $b$ act as the identity, and an extra symbol $c$ that is only defined on 1 and $n$. The PFA $T_n$ under consideration is depicted below for $n=7$.
%\begin{wrapfigure}{r}{4cm}
\begin{center}
	\begin{tikzpicture}
	%\useasboundingbox (-1,-0.5) rectangle (3,1.8);
	\node[circle,draw,inner sep=0pt,minimum width=5mm] (1) {$1$};
	\node[circle,draw,inner sep=0pt,minimum width=5mm] (2) at ($ (1) +  (0:2) $) {$2$};
	\node[circle,draw,inner sep=0pt,minimum width=5mm] (3) at ($ (2) +  (300:2)$) {$3$};
	\node[circle,draw,inner sep=0pt,minimum width=5mm] (4) at ($ (3) + (240:2)$) {$4$};
	\node[circle,draw,inner sep=0pt,minimum width=5mm] (5) at ($ (4) + (180:2)$) {$5$};
	\node[circle,draw,inner sep=0pt,minimum width=5mm] (6) at ($ (5) + (120:2)$) {$6$};
	\node[circle,draw,inner sep=0pt,minimum width=5mm] (7) at ($ (5) + (180:2)$) {$7$};
	\draw[->] (1) -- node[above,inner sep=3pt] {$a,b,c$} (2);
	\draw[->] (2) -- node[above right,inner sep=2pt] {$a$} (3);
	\draw[->] (3) -- node[below right,inner sep=2pt] {$a$} (4);
	\draw[->] (4) -- node[below,inner sep=3pt] {$a$} (5);
	\draw[->] (7) -- node[below,inner sep=3pt] {$c$} (5);
	\draw[->] (5) -- node[below left,inner sep=2pt] {$a$} (6);
	\draw[->] (6) -- node[above left,inner sep=2pt] {$a$} (1);
	\draw[->] (2) edge[out=90,in=30,looseness=7] node[right,inner sep=4pt] {$b$} (2);
	\draw[->] (3) edge[out=30,in=-30,looseness=7] node[right,inner sep=4pt] {$b$} (3);
	\draw[->] (4) edge[out=-30,in=-90,looseness=7] node[right,inner sep=4pt] {$b$} (4);
	\draw[->] (5) edge[out=-90,in=-150,looseness=7] node[left,inner sep=4pt] {$b$} (5);
	\draw[->] (6) edge[out=-150,in=-210,looseness=7] node[left,inner sep=4pt] {$b$} (6);
	\draw[->] (7) edge[out=-90,in=-150,looseness=7] node[left,inner sep=4pt] {$a,b$} (7);
	%\node at (1,-0.9) {$C_4$};
	\end{tikzpicture}
\end{center}

\begin{theorem}
	For every $n\geq 3$ the PFA $T_n$ is carefully synchronizing with unique shortest synchronizing word $(b a^{n-2})^{n-2}cv$ of length $\frac{3(n-1)(n-2)}{2} + 1$, where
	$v = (a^{n-2}b)^{(n-2)/2}$ if $n$ is even, and $v = a^{(n-3)/2} b (a^{n-2}b)^{(n-3)/2}$ if $n$ is odd.
\end{theorem}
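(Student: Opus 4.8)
The plan is to analyze careful synchronization in $T_n$ by tracking subsets of $Q=\{1,\dots,n\}$ through the power automaton, exploiting the fact that $T_n$ is $C_{n-1}$ with an added state $n$ (fixed by $a,b$) and an added symbol $c$ defined only on $\{1,n\}$. The key structural observation is that since $c$ is undefined on $2,\dots,n-1$, any carefully synchronizing word $w$ can apply $c$ only to a subset $V$ with $V\subseteq\{1,n\}$; in particular $c$ can be applied at most once productively, because after the first $c$ the state $n$ is gone (its image $2+\lfloor\frac{n-1}{2}\rfloor$ lies in $\{2,\dots,n-1\}$ and thereafter $n$ is never re-entered, as no transition maps into $n$). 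Hence any carefully synchronizing word has the form $w_1 c w_2$ (or contains no $c$ at all, which I will rule out) where $w_1\in\{a,b\}^*$ must send $Q$ into $\{1,n\}$, and $w_2$ must synchronize the resulting two-element (or smaller) set inside the copy of $C_{n-1}$ sitting on $\{1,\dots,n-1\}$.

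First I would establish the \emph{prefix} part: starting from $Q$, using only $a,b$ (which fix $n$), we must collapse $\{1,\dots,n-1\}$ down to a single state while state $n$ just sits there, so that the whole set becomes $\{x,n\}$ for some $x$; and then a final adjustment must make $x=1$ so that $c$ is applicable. This is essentially the \v{C}ern\'y synchronization of $C_{n-1}$ but stopped one step early (at a two-element set $\{1,n\}$ rather than a singleton), or rather run to a singleton $\{1\}$ and then noting $n$ is still present. I would show, mirroring \v{C}ern\'y's original lower-bound argument for $C_{n-1}$, that the unique shortest way to bring $\{1,\dots,n-1\}$ (together with the passive point $n$) to a configuration on which $c$ acts, i.e. to $\{1,n\}$, is the word $(ba^{n-2})^{n-2}$ of length $(n-1)(n-2)$: each block $ba^{n-2}$ reduces the size of the active set by exactly one, and $b$ is needed to create a coincidence since $a$ alone is a permutation on $\{1,\dots,n-1\}$. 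Then applying $c$ gives $\{1,n\}c=\{2,\,2+\lfloor\frac{n-1}{2}\rfloor\}$, a two-element subset of $\{1,\dots,n-1\}$.

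Next I would handle the \emph{suffix} $v$: from the two-element set $\{2,\,2+\lfloor\frac{n-1}{2}\rfloor\}$ we must synchronize within $C_{n-1}$. The two states are at cyclic distance $\lfloor\frac{n-1}{2}\rfloor$ apart — deliberately chosen to be (roughly) antipodal — and in $C_{n-1}$ the cost of merging two states at distance $d$ on the $a$-cycle is governed by how many times one must rotate them so that $b$ can close the gap; the worst case distance $\lfloor\frac{n-1}{2}\rfloor$ is exactly what forces the stated length. I would compute that merging these two specific states optimally costs $\frac{(n-1)(n-2)}{2}-(n-2)+\text{(small correction)}$ steps and that the optimal word is precisely the $v$ in the statement, with the parity split between $n$ even and $n$ odd coming from whether $\lfloor\frac{n-1}{2}\rfloor=\frac{n-1}{2}$ cleanly or leaves a remainder, hence the initial $a^{(n-3)/2}$ shift in the odd case. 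Summing: $(n-1)(n-2)$ for the prefix, $+1$ for the letter $c$, $+\,|v|$ for the suffix, where $|v|=\frac{(n-1)(n-2)}{2}$ — wait, one must check this bookkeeping gives total $\frac{3(n-1)(n-2)}{2}+1$, i.e. $|v|=\frac{(n-1)(n-2)}{2}$; indeed $(n-2)\cdot\frac{n-2}{2}$ and $a^{(n-3)/2}b(a^{n-2}b)^{(n-3)/2}$ both have this length, which is the consistency check I would make explicit.

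\textbf{Main obstacle.} The routine part is the upper bound (exhibiting the word and checking $Q w=\{2\}$ by direct computation). The genuinely delicate part is the \emph{lower bound together with uniqueness}: showing no carefully synchronizing word is shorter, and that the shortest one is unique. This requires a \v{C}ern\'y-style argument showing (i) $c$ must be used exactly once and only at the configuration $\{1,n\}$, so there is no cheaper route that avoids the antipodal penalty; (ii) the prefix of length $(n-1)(n-2)$ is forced and unique — each $ba^{n-2}$ block is the only way to decrease the active-set size, which needs an invariant argument (e.g., tracking the set of $a$-cycle positions modulo $n-1$ and a potential function counting "gaps"); and (iii) the suffix synchronization of the antipodal pair is forced and unique, including the parity case distinction. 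Proving uniqueness at each of the three stages — and that the three stages cannot be interleaved more cleverly — is where the real work lies; the authors' remark that "the full analysis is quite straightforward" for $T_n$ (in contrast to $P_n$) suggests the invariants are clean, but writing them down carefully for all three segments is the crux.
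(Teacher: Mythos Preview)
Your proposal is correct and follows essentially the same route as the paper: decompose any shortest carefully synchronizing word as $w_1 c w_2$ with $w_1\in\{a,b\}^*$, use the \v{C}ern\'y lower bound for $C_{n-1}$ (plus the extra $n-2$ steps to reach state $1$ after first hitting the singleton $\{2\}$) to force $|w_1|\ge(n-1)(n-2)$, observe that $Qw_1c=\{2,\,2+\lfloor\tfrac{n-1}{2}\rfloor\}$, and then argue that the suffix $w_2$ must perform pair synchronization in $C_{n-1}$ governed by the cyclic distance, which can only decrease via $b$ when state $1$ is present together with a state in $\{2,\ldots,\lfloor\tfrac{n-1}{2}\rfloor\}$. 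The paper's treatment of uniqueness is just as terse as you anticipate; your identification of that as the place requiring care is accurate, though the invariants are indeed clean enough that the paper leaves them largely implicit.
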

\begin{proof}
	First we show that the given word is carefully synchronizing. Write $Q = \{1,\ldots,n\}$. Since $C_{n-1}$ synchronizes with $(b a^{n-2})^{n-3}b$ ending in state 2, we obtain
	$Q (b a^{n-2})^{n-3}b = \{2,n\}$, followed by $a^{n-2}$ yielding $Q (b a^{n-2})^{n-2} = \{1,n\}$, being the set on which $c$ is defined, hence
	$Q (b a^{n-2})^{n-2} c = \{2,2 + \lfloor\frac{n-1}{2} \rfloor\}$. It is easily checked that in $C_{n-1}$ one has $\{2,2 + \lfloor\frac{n-1}{2} \rfloor\}v = \{2\}$, passing all $\binom{n-1}{2}$ subsets of size 2 of $\{1,\ldots,n-1\}$ exactly once, and decreasing the distance between the two elements by 1 every time a $b$ from $v$ is processed.
	
	Conversely, let $w$ be a shortest carefully synchronizing word for $T_n$. 
	To include the state $n$ in synchronization, $w$ should contain a $c$, 
	so write $w = w_1 c w_2$ in which $w_1 \in \{a,b\}^*$. Since $c$ should be 
	defined on $Q w_1$, we have $Q w_1 \subseteq \{1,n\}$. Ignoring state $n$, 
	we obtain $\{1,\ldots,n-1\} w_1 = \{1\}$ in $C_{n-1}$. Since the shortest 
	prefix of $w_1$ that synchronizes in $C_{n-1}$, synchronizes in state 2, 
	$n-2$ more $a$ steps are needed to synchronize in state 1, so $w_1$ has 
	length at least $n-2$ plus the shortest synchronization length of $C_{n-1}$ 
	being $(n-2)^2$, yielding $|w_1| \geq (n-1)(n-2)$. Note that the 
	synchronizing word we gave satisfies $|w_1| = (n-1)(n-2)$. Since both 1 and 
	$n$ are contained in $Q w_1$, we obtain 
	$Q w_1 c = \{2,2 + \lfloor\frac{n-1}{2} \rfloor\}$. 
	
	So until the singleton is 
	obtained after applying $w_2$ to this set, all intermediate sets consist of two 
	elements from $\{1,\ldots,n-1\}$. One checks that the distance between these two 
	elements can only decrease by a $b$ step, and only in the case the set contains 
	state 1 and a state in $\{2,3,\ldots,\lfloor \frac{n-1}2 \rfloor\}$. Synchronization 
	is obtained if this distance becomes 0.
	Counting the numbers of $b$s and the numbers of intermediate $a$ steps required to 
	satisfy this requirement shows that $v$ is the shortest candidate for $w_2$. Hence no 
	shorter carefully synchronizing word is possible than the one we gave.
\end{proof}

Note that for $n=3,4$ the PFA $T_n$ has the highest possible carefully synchronizing word length 
among all PFAs (4 and 10), while for $n=5$ it is the highest possible among all PFAs on 3 symbols. 
Moreover, for all $n \geq 4$ it strictly exceeds $(n-1)^2$. Furthermore, one can adapt symbol $c$
of $T_n$, to obtain a PFA of which the synchronization length is any given number in 
$\{1,2,\ldots,\frac{3(n-1)(n-2)}{2}\}$.

A natural question is what are the worst cases for binary PFAs and small $n$. 
It turns out that we can find a similar class of binary PFAs, so with two symbols rather than three, 
and having similar properties. 
In the class of binary PFAs, \v{C}ern\'y's example still is the worst possible for $n\leq 5$. 
For $6\leq n\leq 10$, there is a unique binary PFA reaching the maximal length, 
being 26, 39, 55, 73 and 94 for $n = 6,7,8,9,10$ respectively. 
The first four of these PFAs are all members of a sequence $P_n$ that we introduce now. 
Again it looks very much like \v{C}ern\'y's sequence. For $n\geq 3$, $P_n$ is defined by
%\[
%\begin{array}{llll}
%1a = \bot, && qb = q+1  &\textrm{\ if\ } q\in\left\{1,2\right\}, \\
%qa = q+1\ (\textrm{mod}\ n)\ &\textrm{if}\  q\neq 1,\qquad & qb = q &\textrm{\ if\ } q\not\in\left\{1,2\right\}.
%\end{array}
%\]
\[
qa = \left\{ \begin{array}{ll} 
\bot     & q = 1 \\
q+1\quad & 2 \le q \le n-1 \\
1        & q = n
\end{array} \right. \qquad
qb = \left\{ \begin{array}{ll} 
q+1\quad & 1 \le q \le 2 \\
q        & 3 \le q \le n
\end{array} \right.
\]
The PFA $P_n$ under consideration is depicted below for $n=6$.
%\begin{wrapfigure}{r}{4cm}
\begin{center}
	\begin{tikzpicture}
	%\useasboundingbox (-1,-0.5) rectangle (3,1.8);
	\node[circle,draw,inner sep=0pt,minimum width=5mm] (1) {$1$};
	\node[circle,draw,inner sep=0pt,minimum width=5mm] (2) at ($ (1) +  (0:2) $) {$2$};
	\node[circle,draw,inner sep=0pt,minimum width=5mm] (3) at ($ (2) +  (300:2)$) {$3$};
	\node[circle,draw,inner sep=0pt,minimum width=5mm] (4) at ($ (3) + (240:2)$) {$4$};
	\node[circle,draw,inner sep=0pt,minimum width=5mm] (5) at ($ (4) + (180:2)$) {$5$};
	\node[circle,draw,inner sep=0pt,minimum width=5mm] (6) at ($ (5) + (120:2)$) {$6$};
	\draw[->] (1) -- node[above,inner sep=3pt] {$b$} (2);
	\draw[->] (2) -- node[pos=0.55,above right,inner sep=2pt] {$a,b$} (3);
	\draw[->] (3) -- node[pos=0.45,below right,inner sep=2pt] {$a$} (4);
	\draw[->] (4) -- node[below,inner sep=3pt] {$a$} (5);
	\draw[->] (5) -- node[pos=0.55,below left,inner sep=2pt] {$a$} (6);
	\draw[->] (6) -- node[pos=0.45,above left,inner sep=2pt] {$a$} (1);
	\draw[->] (3) edge[out=30,in=-30,looseness=7] node[right,inner sep=4pt] {$b$} (3);
	\draw[->] (4) edge[out=-30,in=-90,looseness=7] node[right,inner sep=4pt] {$b$} (4);
	\draw[->] (5) edge[out=-90,in=-150,looseness=7] node[left,inner sep=4pt] {$b$} (5);
	\draw[->] (6) edge[out=-150,in=-210,looseness=7] node[left,inner sep=4pt] {$b$} (6);
	%\node at (1,-0.9) {$C_4$};
	\end{tikzpicture}
\end{center}
%\end{wrapfigure}

Although the construction of $P_n$ is quite simple, the synchronization lengths show a somewhat curious pattern. Where sequences of DFAs in the literature generally give rise to quadratic or linear formula's, this is not the case for $P_n$. The lengths are quadratic in size, but no explicit quadratic formula for it exists. The synchronization length of $P_n$ is strictly larger than $(n-1)^2$ for all $n\geq 6$. As $b$ and $ab$ act as  $C_{n-1}$ on $\{2,\ldots,n\}$ it is easily seen that $b(b (ab)^{n-2})^{n-3} b$ is a carefully synchronizing word for $P_n$, but for $n \geq 5$ it is not a shortest one.

The synchronization length can be expressed in the Fibonacci numbers $\fib{(m)}$ defined by 
$\fib{(0)}=0$, $\fib{(1)} = 1$ and $\fib{(m)} = \fib{(m-1)}+\fib{(m-2)}$ for $m\geq 2$. 
Let $\phi = \frac{1+\sqrt{5}}{2}$ be the golden ratio.

\begin{theorem}\label{th:Pn}
	For $n\geq 3$, let $m$ be the unique integer for which $\fib{(m-1)}<n-2\leq \fib{(m)}$. 
	If $w$ is a shortest synchronizing word for $P_n$, then
	\[
	|w| = n^2+mn-5n-\fib{(m+1)}-2m+8 = n^2+\frac{n\log(n)}{\log(\phi)}+\Theta(n).
	\]
	Furthermore, $|w|>(n-1)^2$ for $n\geq 6$.
\end{theorem}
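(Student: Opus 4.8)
The plan is to prove the two explicit consequences of Theorem~\ref{th:Pn} — the closed form modulo the Fibonacci term, and the inequality $|w| > (n-1)^2$ — assuming the combinatorial heart of the analysis (the precise description of which subsets of $\{1,\ldots,n\}$ are visited by a shortest carefully synchronizing word of $P_n$) is carried out in the companion paper. First I would set up notation: $a$ is undefined on state $1$ and acts as the $(n-1)$-cycle $(2\,3\,\cdots\,n)$ elsewhere, while $b$ merges $1$ into $2$ and fixes the rest, so on $\{2,\ldots,n\}$ the pair $(b, ab)$ behaves exactly like \v{C}ern\'y's $(b,a)$ on $C_{n-1}$. The key structural fact to quote is that any carefully synchronizing word must keep the current subset inside $\{2,\ldots,n\}$ whenever it applies $a$ (else $a$ is undefined), so the process naturally decomposes into ``$C_{n-1}$-like'' phases separated by the moments when the subset contains state $1$ and $b$ is used to reduce size. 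The novelty over \v{C}ern\'y is that collapsing a pair of states costs a number of $a$-steps that depends on the gap between them, and optimizing over gap sequences is what produces a Fibonacci-type recursion rather than an arithmetic one.

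The main work is the recursion. Let $f(n)$ denote the shortest carefully synchronizing word length for $P_n$. I would argue that an optimal strategy, after an initial cheap contraction to a set of size $n-2$ (using $b$, then cycling), faces a subproblem whose cost is governed by how the $n-2$ ``missing-from-a-singleton'' gaps are distributed; greedily, the $a$-step count accumulated to close successive gaps forces the usable gap sizes to grow at worst like Fibonacci numbers, so the number of $b$-phases needed is the least $m$ with $\fib(m)\geq n-2$, i.e. the $m$ of the theorem statement, and the total number of $a$-steps telescopes to something of the form (constant)$\cdot n + mn - \fib(m+1) + O(1)$. Collecting the $a$-count, the $b$-count ($m$ of them in the main phase plus $\Theta(1)$ overhead), and the $c$-... — there is no $c$ here, only $a,b$ — so simply the $a$-count plus $b$-count gives $|w| = n^2 + mn - 5n - \fib(m+1) - 2m + 8$ after simplification. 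I would then verify this formula against the computed values $26,39,55,73$ for $n=6,7,8,9$ (for which $m=3,4,4,5$ respectively, since $\fib(2)=1<4\leq\fib(3)=2$ fails — careful: $\fib(3)=2$, $\fib(4)=3$, $\fib(5)=5$, $\fib(6)=8$, so $n-2=4$ gives $m=5$, $n-2=5$ gives $m=5$, $n-2=7$ gives $m=6$; I would recheck indexing against the $26,39,55,73$ data as a sanity check on the final constants) as a consistency check, which also pins down the additive constant $8$ and the coefficients $-5$ and $-2$.

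For the asymptotic form, I would use the Binet estimate $\fib(m) = \phi^{m}/\sqrt{5} + O(\phi^{-m})$, so $\fib(m-1) < n-2 \leq \fib(m)$ gives $m = \log_\phi(n) + \Theta(1) = \log(n)/\log(\phi) + \Theta(1)$, whence $mn = n\log(n)/\log(\phi) + \Theta(n)$ and $\fib(m+1) = \Theta(n)$, absorbing everything except the leading $n^2$ and the $n\log n$ term into $\Theta(n)$; this yields the stated $|w| = n^2 + n\log(n)/\log(\phi) + \Theta(n)$. Finally, for $|w| > (n-1)^2 = n^2 - 2n + 1$ it suffices that $mn - 5n - \fib(m+1) - 2m + 8 > -2n+1$, i.e. $(m-3)n > \fib(m+1) + 2m - 7$; since $\fib(m-1) < n-2$ we have $n > \fib(m-1)+2$, and for $n\geq 6$ one has $m\geq 5$ (with the indexing checked above), so $(m-3)n \geq 2n$ grows linearly while $\fib(m+1) < $ a bounded multiple of $n$ by the defining inequality $\fib(m-1) < n-2$ together with $\fib(m+1) = \fib(m)+\fib(m-1) < \phi^2(n-2) + \fib(m-1)$; a short estimate of these few cases (and monotonicity in $n$ for fixed $m$, then across the finitely many $m$-thresholds) closes it. The main obstacle is genuinely the recursion in the middle step — establishing rigorously that the Fibonacci growth of admissible gap sizes is both necessary (lower bound on $|w|$) and achievable (the explicit $v$-like tail realizes it) — but as the excerpt states, that full analysis is deferred, so here I would present the recursion's solution and derive the two corollaries, flagging the structural lemma as the external input.
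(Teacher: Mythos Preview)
The paper does not actually prove Theorem~\ref{th:Pn}. Immediately after the statement it says: ``The proof of this theorem is expected to appear in a forthcoming paper by Stijn Cambie and the first two authors.'' So there is no proof in the paper to compare your proposal against; the theorem is stated and then deferred.

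Given that, your proposal is reasonable in spirit: you correctly isolate the hard part (the exact formula $|w| = n^2 + mn - 5n - \fib(m+1) - 2m + 8$, coming from a Fibonacci-type analysis of optimal gap sequences in the $C_{n-1}$-like dynamics on $\{2,\ldots,n\}$) as an external input, and you derive the two corollaries from it. Your asymptotic derivation is correct: from $\fib(m-1) < n-2 \le \fib(m)$ and Binet one gets $m = \log(n)/\log(\phi) + \Theta(1)$ and $\fib(m+1) = \Theta(n)$, which gives $|w| = n^2 + n\log(n)/\log(\phi) + \Theta(n)$. Your argument for $|w| > (n-1)^2$ also works once cleaned up: the inequality reduces to $(m-3)n > \fib(m+1) + 2m - 7$, and for each $m \ge 5$ the left side at the minimal admissible $n = \fib(m-1)+3$ already exceeds the right side (check $m=5,6$ by hand; for larger $m$ the left side is $\sim (m-3)\fib(m-1)$ while the right is $\sim \phi^2 \fib(m-1)$).

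Two concrete corrections. First, your initial indexing guess $m=3,4,4,5$ for $n=6,7,8,9$ is wrong; your self-correction is right: one gets $m=5,5,6,6$ (and $m=6$ for $n=10$), and with these values the formula reproduces $26,39,55,73,93$ exactly. Second, your heuristic for the recursion (``greedily the usable gap sizes grow at worst like Fibonacci numbers'') is not a proof; the genuine lower-bound argument --- showing no shorter word exists --- requires tracking which two-element subsets can be reached and bounding the cost of each size reduction, and this is precisely what the paper defers. So your write-up is fine as a derivation of the asymptotic and the inequality \emph{conditional} on the exact formula, but it does not and cannot replace the missing combinatorial analysis.
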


The proof of this theorem is expected to appear in a forthcoming paper by
Stijn Cambie and the first two authors. Below is a table of $|w|$ for small $n$.
\[
\renewcommand{\arraystretch}{1.4}
\begin{array}{|c|c|c|c|c|c|c|c|c|c|}
\hline
\quad n\quad &\quad 3\quad &\quad 4\quad &\quad 5\quad &\quad 6\quad &\quad 7
\quad &\quad 8\quad &\quad 9\quad &\quad 10\quad &\quad 11\quad\\
\hline
|w| & 2 & 7 & 15 & 26 & 39 & 55 & 73 & 93 & 116 \\
\hline
\end{array}
\vspace{3pt}
\]

\section{Exponential Bounds for PFAs on Three Symbols}
\label{secexp}
In this section, we demonstrate our techniques to construct PFAs with only three symbols and exponential shortest synchronizing word length. These constructions are based on string rewrite systems. In the next section we will show a reduction to two symbols and the last section is devoted to more elaborate constructions that lead to sharper asymptotic results. 

For any $k \geq 3$, we build a transitive PFA on $n = 3k$ states and three symbols, which is carefully synchronizing, and the shortest carefully synchronizing word has length $\Omega(\phi^{n/3})$, where $\phi = \frac{1+ \sqrt{5}}{2} \approx 1.618$. The set of states is
$Q = \{A_i, B_i, C_i \mid i = 1,\ldots,k\}$.
If a set $S \subseteq Q$ contains exactly one element of $\{A_i, B_i, C_i\}$ for every $i$,
it can be represented by a string over $\{A,B,C\}$ of length $k$. The idea of our
construction is that the PFA will mimic rewriting the string $C^2 A^{k-2}$ to the string
$C^2 A^{k-3} B$ with respect to the rewrite system $R$, which consists of the following three rules
\[ BBA \to AAB, \; CBA \to CAB, \; CCA \to CCB.\]
The key argument is that this rewriting is possible, but requires an exponential number of
steps.
This is elaborated in the following lemma, in which we use $\to_R$ for rewriting with respect to
$R$, that is, $u \to_R v$, if and only if $u = u_1 \ell u_2$ and $v = u_1 r u_2$, for strings
$u_1,u_2$ and a rule $\ell \to r$ in $R$. Its transitive closure is denoted by $\to_R^+$.
Just as in the previous section, we write $\fib$ for the standard Fibonacci function. 
It is well-known that $\fib(n) = \Theta(\phi^n)$.

\begin{lemma}
	\label{lemrewrlen}
	For $k \geq 3$, we have $CCA^{k-2} \to_R^+ CCA^{k-3}B$. Furthermore,
	the smallest possible number of steps for rewriting $CCA^{k-2}$ to a string ending in $B$, is
	exactly $ \fib(k)-1$.
\end{lemma}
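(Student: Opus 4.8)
Let me think about this carefully. We have the rewrite system $R$ with rules $BBA \to AAB$, $CBA \to CAB$, $CCA \to CCB$, and we want to rewrite $CCA^{k-2}$ to a string ending in $B$, showing the minimal step count is $\fib(k)-1$.

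First, the structure of derivations. Note that every rule has the form $XYA \to X'Y'B$ where the rule replaces a trailing... actually let me look: each rule takes a length-3 factor ending in $A$ and produces a length-3 factor ending in $B$, keeping the string length fixed. So a rewrite step takes some occurrence of $A$ at position $j$ (with $j \geq 3$), looks at positions $j-2, j-1$, and rewrites according to which rule applies. Crucially, the first two letters of $CCA^{k-2}$ are $CC$ and these can never change: the left end is "frozen" as $CC$. More precisely, position 1 is always $C$ and position 2 is always $C$, since no rule's right-hand side can place anything other than what's already there at positions that are $\le 2$ from the left (every rule needs two letters to the left of the $A$ it rewrites, so positions $1,2$ are only ever read, never written, and they read as $CC$ throughout).

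Now I want to set up an induction on $k$. Let $f(k)$ denote the minimal number of $R$-steps to rewrite $CCA^{k-2}$ to a string ending in $B$. Base cases: $k=3$ gives $CCA \to CCB$ in one step, so $f(3)=1=\fib(3)-1$. For $k=4$: $CCAA$. To get a $B$ at the end we must apply a rule to the last $A$; its two predecessors are $CA$ or $AA$ or $BA$ depending on history. We can do $CCAA \to CCBA$? No — wait, $CCA \to CCB$ rewrites positions $1,2,3$ giving $CCBA$, then $CBA \to CAB$ on positions $2,3,4$ giving $CAAB$. That's $2$ steps $= \fib(4)-1$. One can check $1$ step is impossible. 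The key recursive insight: to put a $B$ in the final position, the relevant local configuration needs an $A$ in the final position preceded by the right letters, and the "cost" of preparing a prefix of a given form is what satisfies the Fibonacci recursion. I would aim to prove $f(k) = f(k-1) + f(k-2) + \text{(small correction)}$; more likely the clean statement is about a two-parameter family. Define $g(j)$ = minimal steps to rewrite $CCA^{j}$ (or perhaps $CA^j$, or $A^j$ with suitable left context) to a string ending in $B$ while keeping everything to the left of some boundary fixed. The Fibonacci recursion should emerge because rewriting the last $A$ to $B$ via $BBA\to AAB$ requires two $B$'s just to its left, and producing the nearer $B$ costs $g$ of a shorter string while producing the farther $B$ costs $g$ of a still-shorter string, and these sub-tasks interleave.

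The plan, concretely: (i) Show the $CC$ prefix is invariant and that length is invariant, so derivations are sequences of local moves on a string over $\{A,B,C\}$ of fixed length $k$ starting $CC$. (ii) Establish confluence/termination structure or at least a normal-form argument: show $R$ is terminating (a simple weight argument: assign $A$ a larger weight than $B$ in a position-weighted sum — e.g. $A$ at position $i$ contributes $2^i$, $B$ contributes $0$, $C$ contributes $0$; each rule strictly decreases this, since $BBA\to AAB$ changes $\ldots 2^{i}$ (for the $A$) possibly but also... hmm, actually $AAB$ has two $A$'s and $BBA$ has one — need a better measure). Let me instead use: number of $A$'s weighted so that moving an $A$ leftward by converting is controlled; the standard trick for such "sorting network" systems is a sum like $\sum_{\text{positions of }A} (\text{position})$, which each rule changes in a controlled way — $CCA\to CCB$ removes an $A$, $CBA \to CAB$ moves an $A$ one step left, $BBA\to AAB$ replaces one $A$ by two $A$'s but moves them left; a lexicographic combination of ($\#A$, $\sum$ positions of $A$'s) works. (iii) The heart: prove the lower and upper bound $f(k)=\fib(k)-1$ by strong induction, identifying the recursive subproblem. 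The upper bound gives the explicit derivation $CCA^{k-2}\to_R^+ CCA^{k-3}B$; the lower bound requires showing no derivation to a $B$-terminated string is shorter, which is where I expect the real work to be — one must argue that any such derivation must, at some point, create the necessary $B$'s in the prefix, and that these sub-derivations are essentially independent and each costs at least the Fibonacci value for the corresponding shorter prefix, with the two contributions $\fib(k-1)-1$ and $\fib(k-2)-1$ summing (plus $1$ for the final combining step, plus $1$ more to account for $(\fib(k-1)-1)+(\fib(k-2)-1)+1 = \fib(k)-1$).

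The main obstacle will be the lower bound: ruling out clever interleavings that might share work between the two "halves" of the computation. I would handle this by a potential/invariant argument — assign to each configuration (string over $\{A,B,C\}$ of length $k$ starting $CC$) a value $\Phi$, defined via the Fibonacci numbers and the positions of the non-$C$ letters, such that (a) each $R$-step decreases $\Phi$ by at most $1$, and (b) $\Phi(CCA^{k-2}) - \Phi(\text{any string ending in }B) \geq \fib(k)-1$. A natural candidate: read the suffix after the $CC$ prefix as a binary-like string ($A \mapsto 1$, $B\mapsto 0$ say) and let $\Phi$ be a Zeckendorf-style valuation $\sum \fib(\cdot)$ over the positions of $A$'s, calibrated so that $CCA^{k-2}$ has value $\fib(k)-1$ above the target. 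Verifying that each of the three rules decreases $\Phi$ by exactly $1$ (for $CCA\to CCB$, which removes the leftmost-after-$CC$ $A$) or by at most... one needs to check $BBA\to AAB$: it removes the contribution of an $A$ at position $i$ and adds $A$'s at positions $i-2,i-1$, so $\Phi$ changes by $\fib(i-2)+\fib(i-1)-\fib(i) = 0$ with this indexing — that's the wrong direction. So the valuation must instead count something that does strictly decrease; the resolution is that $BBA \to AAB$ must be "paid for" by the two $B$'s it consumes, which themselves cost something to produce, so the correct potential is more subtle, perhaps counting $A$'s with weight and $B$'s with a negative weight that makes the books balance. Pinning down the exact $\Phi$ is the crux; once it is found, both bounds follow, and the upper bound's extremal derivation is the one tracking $\Phi$ decreasing by $1$ each step. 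I would also note the derivation realizing the bound has a recursive ("divide and conquer") description mirroring the Fibonacci recursion, which simultaneously proves $CCA^{k-2}\to_R^+ CCA^{k-3}B$ and matches the count.
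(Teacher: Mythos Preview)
Your overall strategy---a potential function $\Phi$ that changes by a fixed amount per rewrite step---is exactly the paper's approach, and you correctly identify that this is the crux. But you stop just short of the right potential. You tried weighting the $A$'s by $\fib(i)$ and got a change of $0$ under $BBA\to AAB$; the fix is to weight the $B$'s instead, and with weight $\fib(i)-1$ rather than $\fib(i)$. Define
\[
W(u) \;=\; \sum_{i\,:\,u_i = B} \big(\fib(i)-1\big).
\]
Then every rule increases $W$ by \emph{exactly} $1$: the rule $CCA\to CCB$ creates a $B$ at position $3$, contributing $\fib(3)-1=1$; the rule $CBA\to CAB$ (which, since the only $C$'s are at positions $1,2$, moves a $B$ from position $3$ to position $4$) contributes $(\fib(4)-1)-(\fib(3)-1)=1$; and $BBA\to AAB$ at positions $i-2,i-1,i$ contributes $(\fib(i)-1)-(\fib(i-1)-1)-(\fib(i-2)-1)=1$ by the Fibonacci recursion. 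The ``$-1$'' in the weight is precisely what makes the first and third rules come out to $+1$ rather than $+2$ and $0$.

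Once you have this, the lower bound is immediate and exact: $W(CCA^{k-2})=0$, and any string ending in $B$ has $W\ge \fib(k)-1$, so at least $\fib(k)-1$ steps are needed. There is no need to argue about interleavings, shared work, termination, or confluence---the weight counts steps exactly. Your speculative approach via ``independent sub-derivations'' and the recursion $f(k)=f(k-1)+f(k-2)+\text{correction}$ is not needed and would be much harder to make rigorous.

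For the first claim ($CCA^{k-2}\to_R^+ CCA^{k-3}B$), the paper gives a direct induction: for $k>4$, apply the induction hypothesis twice to get $CCA^{k-2}\to_R^+ CCA^{k-4}BA\to_R^+ CCA^{k-5}BBA$, then one application of $BBA\to AAB$ yields $CCA^{k-3}B$. This simultaneously exhibits a derivation of length exactly $\fib(k)-1$, matching the lower bound.
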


\begin{proof}
	For the first claim we do induction on $k$. For $k=3$, we have $CCA \to_R CCB$.
	For $k=4$, we have $CCAA \to_R CCBA \to_R CCAB$. For $k>4$, applying the induction
	hypothesis twice, we obtain
	\[CCA^{k-2} \to_R^+ CCA^{k-4}BA \to_R^+ CCA^{k-5}BBA \to_R CCA^{k-3}B.\]
	For the second claim, we define the \emph{weight} $W(u)$ of a string $u = u_1 u_2 \cdots u_k$ over
	$\{A,B,C\}$ of length $k$ by
	$$
	W(u) = \sum_{i : u_i = B} (\fib(i)-1).
	$$
	So every $B$ on position $i$ in
	$u$ contributes $\fib(i)-1$ to the weight, and the other symbols have no weight.
	
	Now we claim that
	$W(v) = W(u)+1$
	for all strings $u,v$ with $u \to_R v$ and $u,v$ only having $C$'s in the first two positions.
	Since the $C$s only occur at positions 1 and 2, by applying $CCA \to
	CCB$, the weight increases by $\fib(3)-1 = 1$ by the creation of $B$ on position 3,
	and by applying $CBA \to CAB$, it increases by $\fib(4)-1 -(\fib(3)-1) = 1$  since $B$ on position 3
	is replaced by $B$ on position 4.
	By applying $BBA \to AAB$, the contributions to the weight $\fib(i)-1$ and $\fib(i+1)-1$ of the two
	$B$s are replaced by $\fib(i+2)-1$ of the new $B$, which is an increase by 1 according to
	the definition of $\fib$.
	
	So this weight increases by exactly 1 at every rewrite step, hence it
	requires exactly $\fib(k)-1$ steps, to go from the initial string  $CCA^{k-2}$ of weight 0 to the
	weight $\fib(k)-1$ of a $B$ symbol on the last position $k$, if that is the only $B$, and more
	steps if there are more $B$s.
\end{proof}

Now we are ready to define the PFA on $Q = \{A_i, B_i, C_i \mid i = 1,\ldots,k\}$ and three
symbols. The three symbols are a start symbol $s$, a rewrite symbol $r$ and a cyclic
shift symbol $c$. The transitions are defined as follows (writing $\bot$ for undefined):
\[
\renewcommand{\arraystretch}{1.4}
\begin{array}{|@{\quad}r@{\,=\,}l@{\qquad}r@{\,=\,}l@{\qquad}r@{\,=\,}l@{\qquad}l@{\quad}|}
\hline
\multicolumn{3}{|@{\quad}r@{\,=\,}}{A_is = B_is} &
\multicolumn{3}{l}{C_i s = C_i,} & \mbox{for $i = 1,2$}, \\[-5pt]
\multicolumn{3}{|@{\quad}r@{\,=\,}}{A_is = B_is} &
\multicolumn{3}{l}{C_i s = A_i,} & \mbox{for $i = 3,\ldots,k$}, \\
\hline
A_1r & \bot, & B_1r & A_1, & C_1r & C_1, & \\[-5pt]
A_2r & \bot, & B_2r & A_2, & C_2r & C_2, & \\[-5pt]
A_3r & B_3, & B_3r & \bot, & C_3r & B_2, & \\[-5pt]
A_ir & A_i, & B_ir & B_i, & C_ir & C_i, & \mbox{for $i = 4,\ldots,k$}, \\
\hline
A_ic & A_{i+1}, & B_ic & B_{i+1}, & C_ic & C_{i+1}, & \mbox{for $i = 1,\ldots,k-1$}, \\[-5pt]
A_kc & A_1, & B_kc & B_1, & C_kc & C_1. & \\
\hline
\end{array}
\vspace{3pt}
\]
A shortest carefully synchronizing word starts by $s$, since $r$ is not defined on all states and
$c$ permutes all states. After $s$, the set of reached states is
$S(CCA^{k-2}) = \{C_1,C_2,A_3,\ldots,A_k\}$. Here, for a string $u = a_1 a_2 \cdots a_k$
of length $k$ over $\{A,B,C\}$, we write $S(u)$ for the set of $k$ states, containing $A_i$
if and only if $a_i = A$, containing $B_i$ if and only if $a_i = B$, and containing $C_i$
if and only if $a_i = C$, for $i = 1,2,\ldots,k$. Note that for $x \in \{A,B,C\}$ and
$v \in \{A,B,C\}^{k-1}$, we have $S(vx)c = S(xv)$, so $c$ performs a cyclic shift on strings
of length $k$.

The next lemma states that the symbol $r$ indeed mimicks rewriting: applied on sets of the shape
$S(u)$, up to cyclic shift it acts as rewriting on $u$ with respect to $R$ defined above.

\begin{lemma}
	\label{lemrewr}
	Let $u$ be a string of the shape $CCw$, where $w \in \{A,B\}^{k-2}$.
	If $u \to_R v$ for a string $v$, then $S(u)c^irc^{k-i} = S(v)$ for some $i<k$.
	
	Conversely, if $u$ does not end in $B$ and there exists an $i$ such that
	$r$ is defined on $S(u)c^i$, then $u \to_R v$ for a string $v$ of the shape
	$CCw$, where $w \in \{A,B\}^{k-2}$.
\end{lemma}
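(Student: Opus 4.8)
The plan is to analyze the action of $r$ on a cyclically-shifted copy $S(u)c^i$ directly from the transition table, matching the three rewrite rules of $R$ against the three ``active'' rows of the table for $r$ (rows $i=1,2,3$). The key observation is that $r$ is the identity on $A_j,B_j,C_j$ for all $j\geq 4$, and is only non-trivial on positions $1,2,3$; moreover $r$ is undefined exactly on $A_1,A_2,B_3$. So for $r$ to be defined on a set $T=S(u')$ (with $u'$ a cyclic shift of $u$, so still containing exactly two $C$'s), we need: position $1$ of $u'$ is not $A$, position $2$ of $u'$ is not $A$, and position $3$ of $u'$ is not $B$. Since $u=CCw$ with $w\in\{A,B\}^{k-2}$ has its two $C$'s somewhere, and a cyclic shift moves them around, the constraint ``positions $1,2$ are not $A$'' forces the two $C$'s to sit exactly at positions $1$ and $2$ of $u'$ (they cannot be $B$ there because all $B$'s of $u$ lie in the $w$-part, which over $\{A,B\}$; wait—$u'$ is a shift, so a $B$ could land in position $1$ or $2$). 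I would handle this carefully: positions $1,2$ of $u'$ must each be $C$ or $B$. If a $B$ occupies position $1$ or $2$, then applying $r$ ($B_1r=A_1$, $B_2r=A_2$) just turns that $B$ back into an $A$ — but then the result $S(v)$ shifted back is $u$ with a $B$ deleted, which is not of the form $CCw$ unless... Actually I think the cleanest route is: enumerate the possible ``shapes'' of $(u'_1,u'_2,u'_3)$ compatible with definedness, namely $u'_1u'_2\in\{CC,CB,BC,BB\}$ and $u'_3\in\{A,C\}$; but since $u$ has only two $C$'s total, most combinations are impossible, and the surviving ones are exactly $CCA$, $CBA$, $CCB$... no, $CCB$ has $u'_3=B$ which is forbidden. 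So the surviving left-hand sides are precisely $CCA$, $CBA$, $BBA$ — the three left sides of the rules in $R$ — together with degenerate cases where $r$ acts but doesn't correspond to a productive rewrite.

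For the \emph{first} (forward) direction, I would take $u\to_R v$, locate the rewritten occurrence of a left-hand side $\ell\in\{BBA,CBA,CCA\}$ at some position $p$ in $u$, and choose $i$ so that $c^i$ shifts that occurrence to positions $1,2,3$; explicitly if $\ell$ occupies positions $p,p+1,p+2$ (indices mod $k$), pick $i$ so that position $p$ maps to position $1$ under $c^i$, i.e. $i\equiv k-(p-1)$. Then check row by row that $r$ applied to $S(uc^i)$ produces $S(vc^i)$: for $CCA\to CCB$, states $C_1,C_2,A_3$ go to $C_1,C_2,B_3$ (using $A_3r=B_3$) and all other states are fixed; for $CBA\to CAB$, we have $C_1,B_2,A_3\mapsto C_1,A_2,B_3$ (using $B_2r=A_2$, $A_3r=B_3$); for $BBA\to AAB$, $B_1,B_2,A_3\mapsto A_1,A_2,B_3$ (using $B_1r=A_1$, $B_2r=A_2$, $A_3r=B_3$). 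Then $S(uc^i)r = S(vc^i)$, and applying $c^{k-i}$ gives $S(u)c^i r c^{k-i}=S(v)$. I also need to verify $r$ is \emph{defined} on each of these three sets, which is immediate since none of $A_1,A_2,B_3$ appears in $S(uc^i)$ in these cases.

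For the \emph{converse}, suppose $r$ is defined on $S(u)c^i=S(u')$. Definedness means $A_1,A_2,B_3\notin S(u')$, i.e. $u'_1\neq A$, $u'_2\neq A$, $u'_3\neq B$. Since $u$, hence $u'$, contains exactly two $C$'s and all its other letters are in $\{A,B\}$, and since $u'_1,u'_2\in\{B,C\}$: I would argue that if the two $C$'s are not at positions $1,2$ then at least one of $u'_1,u'_2$ is a $B$ while a $C$ sits elsewhere, and I must show this leads either to a contradiction with ``$u'$ is a shift of $CCw$'' or to $v=u$ being of the required form with $u\to_R v$ failing — actually the lemma only claims existence of \emph{some} $v$ of the right shape with $u\to_R v$, so I should show that definedness of $r$ on \emph{some} shift forces $u'$ (the witnessing shift) to have its two $C$'s adjacent at positions $1,2$ (because $w\in\{A,B\}^{k-2}$ means in $u=CCw$ the $C$'s are adjacent, and a cyclic shift keeps them adjacent, so the only shifts with both ``first two'' slots non-$A$ are the one putting $CC$ at positions $1,2$ and the one putting $C$ at position $1$ with the wrap-around $C$ at position $k$—but then $u'_2\in\{A,B\}$, forcing $u'_2=B$, and then $r$ sends $B_2\mapsto A_2$, and combined with $u'_3\ne B$ one traces through to see $v$ has the form $CCw'$ and $u\to_R v$ via $CxA\to CxB$ reading the block at positions $k,1,2$... this wrap-around case is the subtle one). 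The \textbf{main obstacle} I anticipate is exactly this bookkeeping of cyclic positions: making sure every case where $r$ happens to be defined on a shift of $S(u)$ genuinely corresponds to one of the three rules applied to $u$ (up to the cyclic renaming), with no spurious ``partial'' applications, and that the $B$-position-3 undefinedness together with the $A$-position-$\{1,2\}$ undefinedness exactly cut out the complement of $\{BBA,CBA,CCA\}$ as possible position-$1$-$2$-$3$ triples. Once the case analysis is set up correctly, each case is a one-line check against the table.
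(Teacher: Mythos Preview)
Your forward direction is correct and is essentially the paper's argument: locate the redex, choose $i$ so that $c^i$ carries it to positions $1,2,3$, read off the effect of $r$ from the table, and shift back. The index you compute, $i\equiv k-(p-1)$, agrees with the paper's $i=|u_2|+3$.

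Your converse plan has a real gap. First, the claim that ``the only shifts with both first-two slots non-$A$ are the one putting $CC$ at positions $1,2$ and the one putting $C$ at position $1$ with the wrap-around $C$ at position $k$'' is false: both slots can be $B$, with the $CC$ block sitting entirely inside positions $4,\ldots,k$. That is exactly the $BBA$ case you correctly listed earlier and then dropped. Second, and more importantly, you never use the hypothesis that $u$ does not end in $B$, and without it the converse is simply false: if $u=CCA^{k-3}B$, the shift $u'=BCCA^{k-3}$ has $r$ defined (via $B_1r=A_1$, $C_2r=C_2$, $C_3r=B_2$), and the resulting set contains both $B_2$ and $C_2$, so it is not $S(v)$ for any string $v$ at all. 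Your ``degenerate cases where $r$ acts but doesn't correspond to a productive rewrite'' are precisely these, and the hypothesis is what eliminates them.

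The paper's resolution is clean and is the missing idea: in $u=CCw$ with $w\in\{A,B\}^{k-2}$, the pattern $BC$ occurs as a \emph{cyclic} factor of $u$ if and only if the last letter of $u$ is $B$ (the only $C$'s are at positions $1,2$, so a $B$ immediately preceding a $C$ must be the last letter). Hence ``$u$ does not end in $B$'' rules out $a_1a_2=BC$ and $a_2a_3=BC$ in the shifted word, and ``only two $C$'s'' rules out $a_1a_2a_3=CCC$. Of the eight triples in $\{B,C\}\times\{B,C\}\times\{A,C\}$ compatible with definedness, exactly $BBA$, $CBA$, $CCA$ survive, and each matches a left-hand side of $R$. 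After that, one checks (as you do in the forward direction) that $c^{k-i}$ puts the two $C$'s back at positions $1,2$, so $v$ has the required shape.
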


\begin{proof}
	First assume that $u \to_R v$.
	If $u = u_1 BBA u_2$ and $v = u_1 AAB u_2$, then let $i = |u_2| + 3$, so
	\begin{align*}
		S(u) c^i r c^{k-i}
		&= S(u_1 BBA u_2) c^i r c^{k-i} = S(BBA u_2 u_1) r c^{k-i} \\
		&= S(AAB u_2 u_1) c^{k-i} = S(u_1 AAB u_2) = S(v).
	\end{align*}
	If $u = u_1 CBA u_2$ and $v = u_1 CAB u_2$, then again let $i = |u_2| + 3$, so
	\begin{align*}
		S(u) c^i r c^{k-i}
		&= S(u_1 CBA u_2) c^i r c^{k-i} = S(CBA u_2 u_1) r c^{k-i} \\
		&= S(CAB u_2 u_1) c^{k-i} = S(u_1 CAB u_2) = S(v).
	\end{align*}
	Finally, if $u = u_1 CCA u_2$ and $v = u_1 CCB u_2$, then $u_1 = \lambda$ and the result follows
	for $i=0$.
	
	Conversely, suppose that $S(u)c^ir$ is defined. Since $S(u)c^k = S(u)$, we may assume
	that $i < k$ and can write $u = u_1 u_2$, such that $|u_2| = i$.
	Then $S(u)c^i = S(w)$, where $w = u_2 u_1$.
	Write $w = a_1 a_2 \cdots a_k$. Since $S(u_2 u_1)r$ is defined, we
	get $a_1 \neq A$,  $a_2 \neq A$ and $a_3 \neq B$.
	Moreover, $a_1 = a_2 = a_3 = C$ does not occur since $u$ only contains 2 $C$s,
	and $a_1 a_2 = BC$ or $a_2 a_3 = BC$ does not occur since $u$ does not end in $B$.
	The remaining 3 cases are
	$$
	a_1a_2a_3 = BBA, \qquad a_1a_2a_3 = CBA, \qquad \mbox{and} \qquad a_1a_2a_3 = CCA,
	$$
	where $a_1a_2a_3$ is replaced by the corresponding right hand side of the rule by the
	action of $r$. Then in $S(u)c^irc^{k-i}$, the two $C$s are on positions 1 and 2 again,
	and we obtain $S(u)c^ir c^{k-i} = S(v)$ for a string $v$ of the given shape,
	satisfying $u \to_R v$.
\end{proof}

Combining Lemmas \ref{lemrewrlen} and \ref{lemrewr} and the fact that $\fib(n) = \Omega(\phi^n)$,
we obtain the following.

\begin{corollary}
	\label{correwr}
	There is a word $w$ such that $S(CCA^{k-2})w = S(CCA^{k-3}B)$; the shortest word $w$ for which
	$S(CCA^{k-2})w$ is of the shape $S(u) c^i$ for $u$ ending in $B$ has length $\Omega(\phi^k)$.
\end{corollary}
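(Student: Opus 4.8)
The plan is to combine the two preceding lemmas in both directions. For the existence of a word $w$ with $S(CCA^{k-2})w = S(CCA^{k-3}B)$, I would take the rewrite derivation $CCA^{k-2} \to_R^+ CCA^{k-3}B$ guaranteed by Lemma \ref{lemrewrlen}, say of length $t$, and unfold it as $CCA^{k-2} = u^{(0)} \to_R u^{(1)} \to_R \cdots \to_R u^{(t)} = CCA^{k-3}B$. An easy induction shows every intermediate string $u^{(j)}$ is again of the shape $CCw^{(j)}$ with $w^{(j)} \in \{A,B\}^{k-2}$: the rules of $R$ never move a $C$ off positions $1,2$ and never create or destroy a $C$, and the starting string has this shape. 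Hence Lemma \ref{lemrewr} applies at each step, giving indices $i_j < k$ with $S(u^{(j-1)})\, c^{i_j} r\, c^{k-i_j} = S(u^{(j)})$. Concatenating, $w = \prod_{j=1}^{t} c^{i_j} r\, c^{k-i_j}$ does the job.

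For the lower bound, let $w$ be a shortest word with $S(CCA^{k-2})w = S(u)\,c^i$ for some $u$ ending in $B$ and some $i<k$. First I would observe that $w$ must contain at least one occurrence of $r$: the symbol $c$ only cyclically shifts, so using only $c$'s keeps the reached set equal to a cyclic shift of $S(CCA^{k-2})$, which has its sole $B$-free description $CCA^{k-2}$ not ending in $B$; and the symbol $s$ cannot help once we have already left $Q$ (indeed a shortest such $w$ need not use $s$ at all, since $s$ is not injective on $Q$ and here we start from a proper subset — but in any case inserting $s$'s cannot decrease the $r$-count). Now factor $w$ at its occurrences of $r$ as $w = c^{j_0} r\, c^{j_1} r \cdots r\, c^{j_m}$ with $m \geq 1$ the number of $r$'s. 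Since the reached set is always of the form $S(\cdot)c^{(\cdot)}$ and $r$ is only defined on sets of that form when the preceding "conversely" hypothesis of Lemma \ref{lemrewr} holds, each application of $r$ (possibly after some cyclic shifts) corresponds, by the converse part of Lemma \ref{lemrewr}, to a genuine rewrite step $u^{(j-1)} \to_R u^{(j)}$ between strings of the shape $CCw$; here one must check that the reached set before each $r$ really is $S(\cdot)$ with exactly two $C$'s on positions $1,2$ (up to shift), which follows because $r$ is undefined unless two $C$'s can be positioned at $1,2$, and the forward direction of the lemma shows $r$ then restores that configuration. Thus $m$ rewrite steps carry $CCA^{k-2}$ to a string $u^{(m)}$ with $S(u^{(m)})c^i$ the final set, and $u^{(m)}$ ends in $B$ (equivalently, some cyclic shift of it ends in $B$ — I would phrase the target as "$u^{(m)}$ is a cyclic shift of a string ending in $B$" and note that because the $C$'s stay at positions $1,2$, $u^{(m)}$ itself must end in $B$). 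By the second claim of Lemma \ref{lemrewrlen}, $m \geq \fib(k)-1$. Therefore $|w| \geq m \geq \fib(k)-1 = \Omega(\phi^k)$, using $\fib(k)=\Theta(\phi^k)$ as noted before the corollary.

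The main obstacle is the bookkeeping in the converse direction: making precise that every time $w$ applies $r$, the current reached set is genuinely of the form $S(u)c^i$ with a string $u$ of shape $CCw'$, so that Lemma \ref{lemrewr} is applicable and the $r$-steps are in bijection with honest $R$-rewrite steps. This requires an invariant maintained along the run of $w$ — "the reached set is $S(u)c^{(\cdot)}$ for some $u \in CC\{A,B\}^{k-2}$" — which is preserved by $c$ trivially and by $r$ precisely by the forward computation in Lemma \ref{lemrewr} (the two $C$'s return to positions $1,2$). Once this invariant is in hand, counting $r$'s and quoting the weight argument of Lemma \ref{lemrewrlen} finishes everything; the rest is routine.
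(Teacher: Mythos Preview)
Your proposal is correct and follows exactly the approach the paper intends: the paper states the corollary as an immediate consequence of combining Lemmas~\ref{lemrewrlen} and~\ref{lemrewr} together with $\fib(k)=\Theta(\phi^k)$, and your argument is precisely the natural unfolding of that combination (existence via the forward direction of Lemma~\ref{lemrewr} along a derivation from Lemma~\ref{lemrewrlen}, lower bound via the converse direction to translate $r$-steps back into rewrite steps and then invoke the weight bound). Your handling of the symbol $s$ is slightly informal---the clean reason, used later in the proof of Lemma~\ref{lemlb}, is simply that applying $s$ to any set of the form $S(u)c^i$ returns $S(CCA^{k-2})$, so a shortest $w$ contains no $s$---but your conclusion and the maintained invariant are correct.
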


Now we are ready to prove the lower bound:

\begin{lemma}
	\label{lemlb}
	If $w$ is carefully synchronizing, then $|w| = \Omega(\phi^k)$.
\end{lemma}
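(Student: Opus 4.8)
The plan is to show that any carefully synchronizing word $w$ must, as an intermediate step, contain a factor whose action on $Q$ is forced to carry the set $S(CCA^{k-2})$ into a set of the shape $S(u)c^i$ with $u$ ending in $B$; then Corollary \ref{correwr} supplies the exponential bound. First I would argue about the prefix of $w$. Since $r$ is undefined on $A_1$ (among others) and $c$ is a permutation of $Q$, no prefix consisting only of $c$'s and $r$'s can be applied to all of $Q$; hence the first letter of $w$ must be $s$. Applying $s$ to $Q$ gives exactly $Qs = S(CCA^{k-2}) = \{C_1,C_2,A_3,\ldots,A_k\}$, because $s$ maps $A_i,B_i \mapsto B_i$ and $C_i\mapsto C_i$ for $i=1,2$, and maps $A_i,B_i\mapsto B_i$, $C_i\mapsto A_i$ for $i\geq 3$, so the image is a set that contains one state per index and is described by the string $CCA^{k-2}$.

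The core of the argument is to track what can happen after this point. Write $w = s w'$. I would show by induction along the reading of $w'$ that every set $T$ reached from $S(CCA^{k-2})$, as long as $|T|=k$ and synchronization has not yet occurred, is of the form $S(u)c^i$ with $u$ of the shape $CCv$, $v\in\{A,B\}^{k-2}$, and $i<k$. The symbol $c$ preserves this shape trivially (it increments $i$ mod $k$, wrapping the string). The symbol $s$ applied to such a set is idempotent in the relevant sense — it fixes $C_1,C_2$ and sends $A_i\mapsto B_i$ for $i\geq 3$ while being undefined on nothing here, but one checks it either leaves the set unchanged or collapses indices; in any case it cannot usefully change the underlying string to something outside the $CC\{A,B\}^{k-2}$ family without losing an element, so a shortest word would not use $s$ again here. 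The only way to change the underlying string nontrivially is via $r$, and Lemma \ref{lemrewr} tells us precisely that $r$ (sandwiched appropriately by $c$'s) is defined on $S(u)c^i$ only when the relevant length-three window is $BBA$, $CBA$, or $CCA$, in which case it performs exactly one $\to_R$ rewrite step (up to cyclic shift), keeping us in the same family. Crucially, once the string $u$ ends in $B$ — equivalently once $a_k=B$ in some cyclic rotation realized as $S(u)c^i$ — one can check that $r$ becomes applicable in a way that finally allows the set to shrink (the position-3 transition $B_3 r = \bot$ together with $A_3 r = B_3$, $C_3 r = B_2$ means that a $B$ reaching position 3 under rotation forces undefinedness, so the only escape toward a singleton goes through a string ending in $B$). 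Hence reaching a singleton requires first reaching $S(u)c^i$ for some $u$ ending in $B$.

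Putting this together: the prefix of $w$ after the initial $s$ is a word over $\{r,c\}$ (any further $s$ is useless for a shortest word), and by the above it must drive $S(CCA^{k-2})$ to some $S(u)c^i$ with $u$ ending in $B$ before it can possibly reach a singleton. By Corollary \ref{correwr}, the shortest such word already has length $\Omega(\phi^k)$, so $|w| = \Omega(\phi^k)$ as claimed.

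The main obstacle I expect is the rigorous bookkeeping in the induction: one must verify carefully that there is no "shortcut" that collapses two states of $S(u)$ early — i.e., that $r$ applied to a set of the form $S(u)c^i$ never produces a set of size $<k$ unless the window it acts on is the terminal configuration with a $B$ at position 3 (equivalently $u$ ending in $B$ up to rotation), and that re-applying $s$ likewise cannot shrink the set while staying productive. This amounts to a finite case analysis of the transition table for the three "special" indices $i=1,2,3$ versus the generic indices $i\geq 4$, checking definedness and injectivity of $r$ on each admissible length-three window, and confirming that the only window on which $r$ is defined but not injective (hence size-reducing) is exactly the one corresponding to a string ending in $B$. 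Once that case analysis is pinned down, the reduction to Corollary \ref{correwr} is immediate.
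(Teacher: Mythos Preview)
Your approach is the same as the paper's: start with $s$, land in $S(CCA^{k-2})$, stay inside the family $\{S(u)c^i : u = CCw,\, w\in\{A,B\}^{k-2}\}$ via Lemma~\ref{lemrewr} until $u$ ends in $B$, and invoke Corollary~\ref{correwr}.

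The one place you go astray is the action of $s$. By definition $A_is=B_is=C_is=C_i$ for $i=1,2$ and $A_is=B_is=C_is=A_i$ for $i\ge3$; it does \emph{not} send $A_i\mapsto B_i$. Consequently, applying $s$ to \emph{any} set of the form $S(u)c^i$ (indeed to any set containing one state per index) returns exactly $S(CCA^{k-2})$. This is the paper's argument for why a shortest word never reuses $s$ during this phase: it is a full reset, not merely ``idempotent'' or index-collapsing. With that correction your worry about a delicate case analysis at the end disappears---there is no shortcut to a smaller set because $s$ resets, $c$ permutes, and Lemma~\ref{lemrewr} (its converse half) says $r$ keeps you in the family as long as $u$ does not end in $B$.
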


\begin{proof}
	Assume that $w$ is a shortest carefully synchronizing word. Then we already observed that the first
	symbol of $w$ is $s$, and $w$ yields $S(CCA^{k-2})$ after the first step in the power automaton.
	By applying only $c$-steps and $r$-steps, according to Lemma \ref{lemrewr},
	only sets of the shape $S(u)c^i$ for which $CCA^{k-2} \to_R^+ u$  can be reached, until u ends in $B$.
	In this process, each $r$-step corresponds to a rewrite step. Applying the
	third symbol $s$ does not make sense, since then we go back to $S(CCA^{k-2})$. According
	to Corollary \ref{correwr},
	in the power automaton at least $\Omega(\phi^k)$ steps are required to reach a set
	which is not of the shape $S(u) c^i$. So for reaching a singleton, the total number of steps is
	at least $\Omega(\phi^k)$.
\end{proof}

Note that for the reasoning until now, the definition of $C_3 r = B_2$ did not play a role, and by
$s,r$ all states were replaced by states having the same index. But after the last symbol of $u$ has
become $B$, this $C_3 r = B_2$ will be applied, leading to a subset in which no state of the group
$A_3, B_3, C_3$ occurs any more.
We could have chosen $C_3 r = A_2$ or $C_3 r = C_2$ as well: it is just that $C_3 r = B_2$ makes $r$ injective, like $c$.
Now we arrive at the main result of this section. Optimizations leading to sharper bounds will be presented in Section \ref{sec:asymptotics}.

\begin{proposition}
	There exists a sequence of transitive carefully synchronizing PFAs with three symbols, $n$ states and shortest synchronizing word length $\Omega(\phi^{n/3})$.
\end{proposition}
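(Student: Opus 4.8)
The plan is to combine the lower bound already established in Lemma~\ref{lemlb} with a matching observation that the PFA constructed above is transitive and genuinely carefully synchronizing, and then to translate the parameter $k$ into the number of states $n=3k$.

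First I would verify transitivity. The symbol $c$ acts as a single $k$-cycle on each of the three ``tracks'' $\{A_i\}_i$, $\{B_i\}_i$, $\{C_i\}_i$ simultaneously, so from any state one can reach, within its own track, the index $i=3$; there the symbol $r$ moves $C_3$ to $B_2$ and $A_3$ to $B_3$, and the symbol $s$ moves $C_i$ (for $i\ge 3$) to $A_i$ and $A_i$ to $B_i$, so the three tracks are linked at index $3$ and (via $s$ at indices $1,2$, where $C_i\mapsto C_i$ but $A_i\mapsto B_i$, together with $r$, where $B_1\mapsto A_1$, $B_2\mapsto A_2$) one can pass between the $A$-, $B$- and $C$-tracks and then cyclically shift to any desired index. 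Spelling out one or two explicit connecting words suffices to show every ordered pair $(q,q')$ is joined, hence the PFA is transitive.

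Next I would check that a carefully synchronizing word actually exists, so that the lower bound is not vacuous. By Lemma~\ref{lemrewrlen} we have $CCA^{k-2}\to_R^+ CCA^{k-3}B$, and by Lemma~\ref{lemrewr} (the forward direction) this rewriting is realized in the power automaton by a word $w_0$ over $\{c,r\}$ with $S(CCA^{k-2})w_0 = S(CCA^{k-3}B)$; prefixing $s$ gives $Qs w_0 = S(CCA^{k-3}B)$. It then remains to collapse this $k$-element set to a singleton: once a $B$ has appeared in the last coordinate, applying $r$ (using $C_3r=B_2$ and $B_jr=B_j$ etc., after suitable cyclic shifts to bring symbols into the active positions $1,2,3$) progressively kills whole index-groups, since $r$ maps the triple at index $3$ into index $2$ and is undefined on $A_3,B_3$ only in configurations that no longer arise; iterating $c$ and $r$ one drives everything down to a single state. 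One does not need the optimal such tail — any finite word finishing the job establishes careful synchronization.

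Finally I would assemble the statement: setting $n=3k$ for $k\ge 3$ yields an infinite family of PFAs on three symbols, each transitive (first step) and carefully synchronizing (second step), and by Lemma~\ref{lemlb} every carefully synchronizing word has length $\Omega(\phi^k)=\Omega(\phi^{n/3})$. The main obstacle is the second step, the explicit construction of a carefully synchronizing word and in particular the ``endgame'' that merges the $k$-element set $S(u)$ into a singleton while respecting the undefined transitions $A_1r,A_2r,B_3r$; this requires a careful bookkeeping of cyclic shifts so that $r$ is only ever applied to sets on which it is defined. By contrast, transitivity and the asymptotic bound are essentially immediate given the earlier lemmas.
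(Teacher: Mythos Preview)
Your overall strategy matches the paper's: invoke Lemma~\ref{lemlb} for the lower bound, check transitivity, and exhibit an actual carefully synchronizing word. But the ``endgame'' you sketch is where the real content lies, and your plan for it does not quite work as stated.

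You propose to collapse $S(CCA^{k-3}B)$ to a singleton using only $c$ and $r$. If you try this, after one shift and one $r$-step you reach $\{A_1,B_2,C_2,A_4,\dots,A_k\}$; from then on, whatever power of $c$ you apply, some $A$ is carried into index $1$ or $2$, so $r$ is blocked. The symbols $c$ and $r$ alone do not finish the job, and your description of how $r$ ``progressively kills whole index-groups'' is not borne out by the transition table. The paper's key idea, which you are missing, is to reapply the \emph{entire} block $swcr$ (including the start symbol and the long rewriting word $w$) and to track each index-group separately: one checks that $\{A_i,B_i,C_i\}swcr$ lands in the cyclic successor group for every $i\ne 2$, while $\{A_2,B_2,C_2\}swcr\subseteq\{A_2,B_2,C_2\}$. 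Hence $(swcr)^{k-1}$ funnels everything into $\{A_2,B_2,C_2\}$, and a final $s$ gives the singleton $\{C_2\}$. This group-wise bookkeeping is the missing ingredient.

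Two smaller points. First, your transitivity sketch contains slips (for no $i$ does $s$ send $A_i$ to $B_i$; rather $A_is=C_i$ for $i\le 2$ and $A_is=A_i$ for $i\ge 3$); the tracks are nonetheless connected via $A_3r=B_3$, $B_1r=A_1$, $A_1s=C_1$, $C_3r=B_2$. Second, you only cover $n\equiv 0\pmod 3$; the paper handles arbitrary $n$ by splitting one or two states with multiple incoming arrows, which preserves transitivity and the lower bound.
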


\begin{proof}
	Let $n=3k+m$ with $m\in\left\{0,1,2\right\}$. Take our PFA on $3k$ states and select $m$ states with more than one ingoing arrow. Split each of them into two states, each inheriting some of the ingoing arrows. This affects the injectivity of $r$ and $c$, but the PFA remains transitive, and the bound for $3k$ states is maintained. The bound was proved in Lemma \ref{lemlb}; it remains to
	prove that the PFA with $3k$ states is synchronizing, that is, it is possible to end up in a
	singleton in the power automaton.
	
	Let $w$ be the word from Corollary \ref{correwr}.
	Since $S(CCA^{k-2})w = S(CCA^{k-3}B)$
	and the number of $c$'s in $w$ is divisible by $k$, we have $C_1 w = C_1$, $C_2 w = C_2$, $A_3 w
	= A_3, \ldots, A_{k-1} w = A_{k-1}$, $A_k w = B_k$. Hence
	\begin{alignat*}{4}
		\{A_1,B_1,C_1\}swcr &=\,& \{C_1\}cr &=\,& \{C_2\} &\subseteq\,& \{A_1,B_1,C_1\}&c,  \\
		\{A_2,B_2,C_2\}swcr &=& \{C_2\}cr &=& \{B_2\} &\subseteq &\{A_2,B_2,C_2\}&, \\
		\{A_i,B_i,C_i\}swcr &=& \{A_i\}cr &=& \{A_{i+1}\} &\subseteq &\{A_i,B_i,C_i\}&c,
		\quad \mbox{for $i=3,4,\ldots,k-1$,} \\
		\{A_k,B_k,C_k\}swcr &=& \{B_k\}cr &=& \{A_1\} &\subseteq &\{A_k,B_k,C_k\}&c.
	\end{alignat*}
	So for all $i \ne 2$, $\{A_i,B_i,C_i\}swcr$ is contained in the cyclic successor $\{A_i,B_i,C_i\}c$
	of $\{A_i,B_i,C_i\}$. $\{A_2,B_2,C_2\}swcr$ is just contained in $\{A_2,B_2,C_2\}$ itself.
	Since for any $i$, one can take the cyclic successor of $\{A_i,B_i,C_i\}$ at most $k-1$ times before
	ending up in $\{A_2,B_2,C_2\}$, we deduce that
	$$
	\{A_i,B_i,C_i\}(swcr)^{k-1} \subseteq \{A_2,B_2,C_2\}
	\quad \mbox{for $i=1,2,\ldots,k$}.
	$$
	As $\{A_2,B_2,C_2\}s = \{C_2\}$, we obtain the carefully synchronizing word $(s w c r)^{k-1}s$
	of the PFA.
\end{proof}

The word $(s w c r)^{k-1}s$ is a lot longer than necessary. In fact, one can
prove that only $\Omicron(k^2)$ $c$-steps and $\Omicron(k)$ $r$-steps and $s$-steps suffice	after $swcr$.

\section{Reduction to Two Symbols}
\label{sectwos}
In this section we construct PFAs with two symbols and exponential shortest carefully synchronizing word length. We do this
by a general transformation to two-symbol PFAs, as was done before, e.g. in \cite{V16}. There a PFA on
$n$ states and $m$ symbols was transformed to a PFA on $mn$ states and two symbols, preserving
synchronization length. In the next theorem,
we improve this resulting number of states to $(m-1)n$ or even less, only needing a mild extra condition.
Using this result, we reduce our 3-symbol PFA with synchronizing length $\Omega(\phi^{n/3})$ to a
2-symbol PFA with synchronizing length $\Omega(\phi^{n/5})$.

\begin{theorem}
	\label{lem2sym}
	Let $P = (Q, \Sigma)$ be a carefully synchronizing PFA with $|Q| = n$, $|\Sigma| = m$, and
	shortest carefully synchronizing word length $f(n)$. Assume $s \in \Sigma$ and $Q' \subseteq Q$
	satisfy the following properties.
	\begin{enumerate}
		\item there is some number $p$ such that all symbols are defined on $Q s^p$ for a complete symbol $s$,
		\item $qs = q$ for all $q \in Q'$, and
		\item $qa = qb$ for all $q \in Q'$ and all $a,b \in \Sigma \setminus \{s\}$.
	\end{enumerate}
	Let $n' = n - |Q'|$. Then there exists a carefully
	synchronizing PFA on $n + n'\bcdot(m-2)$ states and 2 symbols, with shortest carefully synchronizing word
	length at least $f(n)$. The new PFA is deterministic and/or transitive if $P$ is.
\end{theorem}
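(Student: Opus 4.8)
The plan is to construct the new two-symbol PFA $P'$ explicitly, with one symbol playing the role of the old $s$ (renamed, say, $\alpha$) and one symbol $\beta$ that, by cycling through copies of the state set, simulates one application of an arbitrary non-$s$ symbol of $P$. Write $\Sigma\setminus\{s\} = \{a_1,\dots,a_{m-1}\}$. The state set of $P'$ will consist of the original $n$ states $Q$ together with, for each $j = 1,\dots,m-2$, a copy $Q\setminus Q'$ of the non-fixed states; this gives $n + n'(m-2)$ states as required. I would label the layers $Q^{(0)} = Q$ and $Q^{(j)} = \{q^{(j)} : q\in Q\setminus Q'\}$ for $j=1,\dots,m-2$. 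The idea is that $\beta$ shifts layer $j$ forward to layer $j+1$ (acting as the identity-shift on the lifted states), and the top layer $j=m-2$ applies one last step so that, reading $\beta$ exactly $m-1$ times from a configuration sitting in $Q^{(0)}$, each ``branch'' $q$ gets routed through a different $a_j$. Concretely: the $j$-th occurrence of $\beta$ selects symbol $a_j$; a state currently in $Q^{(0)}$ that ``wants'' symbol $a_j$ waits in the lifted layers until the $j$-th $\beta$ and then drops back via $q a_j$. Since by hypothesis (3) all states of $Q'$ behave identically under every $a_j$, they never need a private copy — this is exactly what lets us save $|Q'|(m-2)$ states over the naive $mn$-state construction, and also why $Q'$ is excluded from the lifted layers.

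The two symbols of $P'$ are then defined as follows. The symbol $\alpha$ acts as $s$ on $Q^{(0)}$ and is undefined on all lifted states (or, to keep transitivity, maps each lifted state down to its projection and then applies $s$ — I would choose whichever is compatible with the ``deterministic and/or transitive'' clause). The symbol $\beta$ is defined on $Q^{(0)}$ by: $q\beta = q$ for $q\in Q'$, and for $q\in Q\setminus Q'$, $q\beta = q^{(1)}$ if $m\geq 3$ while $q\beta = qa_1$ if $m=2$ (the degenerate base case, where $P'$ is essentially $P$ with $s,a_1$). On layer $j$ with $1\le j\le m-3$, $\beta$ maps $q^{(j)}\mapsto q^{(j+1)}$; and on the top layer, $q^{(m-2)}\beta = q\,a_{m-1}$. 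That is not yet right, because we need each branch to be able to exit at the layer matching the symbol it needs; the cleanest fix is to have $\beta$ on $Q^{(0)}$ already commit branch $q$ to the right exit height. The device is the standard one from the literature (cf. \cite{V16}): I would route the simulation so that reading $(\alpha^{?}\beta^{m-1})$ blocks realize an arbitrary word of $P$. The key verification is then a simulation lemma: for every word $v\in\Sigma^*$ of $P$ there is a word $\tilde v$ over $\{\alpha,\beta\}$ with $Q\,v = (Q\,\tilde v)\cap Q^{(0)}$ and $Q\,\tilde v \subseteq Q^{(0)}$, and conversely every careful run of $P'$ projects to a careful run of $P$, with the lifted layers never causing a premature undefined step precisely because of hypothesis (1) (all symbols defined on $Q s^p$, so after an initial $\alpha^p$ every branch-selection is legitimate). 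This correspondence is length-nondecreasing in the $P\to P'$ direction — each step of $P$ costs at least one step of $P'$ — hence the shortest carefully synchronizing word of $P'$ has length at least $f(n)$.

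I expect the main obstacle to be the careful bookkeeping that makes $\beta$ well-defined and that guarantees \emph{careful} synchronization, i.e. that no undefined transition is ever hit by the synchronizing run of $P'$ even though $\alpha$ is deliberately undefined on most lifted states. This is where hypotheses (1)–(3) must all be used: (3) ensures the lifted layers are ``symbol-agnostic'' so that the simulation of a step of $P$ does not depend on which $a_j$ a given branch secretly wanted until the appropriate $\beta$ fires; (2) ensures $Q'$ can be collapsed across layers without ambiguity (since $s$ fixes it and, by (3), all other symbols agree on it, a state of $Q'$ simply stays put through a whole $\beta$-block); and (1) supplies the uniform ``reset prefix'' $\alpha^p$ after which every symbol of $P$ — hence every branch-exit of $P'$ — is defined, so the simulation never stalls. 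The transitivity and determinism preservation claims are then routine: if $P$ is transitive, adding the lifted layers (each reachable from $Q^{(0)}$ by one $\beta$ and returning to $Q^{(0)}$ within $m-1$ steps) keeps strong connectivity; if $P$ is deterministic, every transition defined above is single-valued, so $P'$ is a DFA. I would close by instantiating $m=3$ (our ternary PFA of Section~\ref{secexp}), where $n' \le n - 2$ and the construction yields $n + n' \le 2n - 2 < 2n$ states per block, giving the stated $\Omega(\phi^{n/5})$ bound after accounting for $k \mapsto$ state-count.
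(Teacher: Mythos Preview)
Your overall architecture---layered copies of $Q\setminus Q'$ on top of the original $Q$, with two symbols encoding the old alphabet---is the paper's approach. But the proposal has a real gap at the decisive point: you never specify a working selection mechanism among the $m-1$ non-$s$ symbols. Your $\beta$ as written shifts through layers and applies only $a_{m-1}$ at the top; you notice this (``That is not yet right'') but then defer to ``the standard device'' without saying what it is. The paper's device is simple and not what you sketch: the layer-shift symbol $a$ moves row $i$ to row $i{+}1$ and, from the last row, returns to row $1$ applying $s$; the second symbol $b$, read from row $i$, executes $a_i$ and returns to row $1$. Thus $\psi(a_i)=a^{i-1}b$ and $\psi(s)=a^{m-1}$, and it is the \emph{row at the moment $b$ is read} that selects $a_i$. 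This also makes the layer-shift symbol complete (it only ever applies the complete $s$), so careful synchronization never stalls on it; by contrast your $\alpha$ is undefined on lifted layers, which you would have to work around.

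Two further errors. First, you set $q\beta=q$ for $q\in Q'$ and say such states ``simply stay put through a whole $\beta$-block''; but hypothesis~(3) only says all $a_i$ \emph{agree} on $Q'$, not that they fix it, so a $Q'$-state must still move to $qa_1$ under the block simulating any $a_i$. In the paper this is automatic: $b$ on the single row containing $Q'$ applies $a_1$, which equals every $a_i$ there. Second, for the lower bound the direction you invoke is backwards: knowing that encoding $P\to P'$ is length-nondecreasing does not bound $P'$-synchronization from below. What is needed---and what the paper proves---is a \emph{decoding}: every word over $\{a,b\}$ parses uniquely as $\psi(w)\,a^j$ with $0\le j\le m-2$; since $a^j$ is injective on row $1$, synchronization of $P'$ forces $w$ to synchronize $P$, and $|\psi(w)a^j|\ge |w|\ge f(n)$.
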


Note that if $Q' = \emptyset$ then only requirement 1 remains, and the resulting number of states is
$n + n'\bcdot(m-2) = (m-1)n$.

\begin{proof}
	Write $Q = \{1,2,\ldots,n\}$, $Q' = \{n'+1,\ldots,n\}$,
	and $\Sigma = \{s,a_1,\ldots,a_{m-1}\}$.
	Let the states of the new PFA be
	$P_{1,j}$ for $j = 1,\ldots,n$ and
	$P_{i,j}$ for $i = 2,\ldots,m-1$, $j = 1,\ldots,n'$. Define the following two symbols $a,b$ on
	these states:
	\[P_{i,j} a = \begin{cases}
	P_{i+1,j}, & \text{if $i < m-1, j \leq n'$}, \\
	P_{1,j s}, & \text{if $i = m-1, j \leq n'$}, \\
	P_{1,j}, & \text{if $i = 1, j > n'$}.
	\end{cases}
	\qquad
	\begin{array}{cccccc}
	P_{1,1} & \cdots & P_{1,n'} & P_{1,n'+1} & \cdots & P_{1,n} \\
	P_{2,1} & \cdots & P_{2,n'} && \\
	\vdots & & \vdots &&\\
	P_{m-1,1} & \cdots & P_{m-1,n'} &&
	\end{array}
	\]
	and $P_{i,j} b = P_{1,j a_i}$, for all $i = 1,\ldots,m-1$ and $j = 1,\ldots,n$
	for which $P_{i,j}$ exists and $j a_i$ is defined.
	
	If we arrange the states as indicated above, then on the leftmost $n'$ columns,
	$a$ moves the states one step downward if possible, and
	for the bottom row jumps to the top row and
	acts there as $s$. For the remainder of the top row $a$ also acts as $s$
	(which is the identity). On the leftmost $n'$ columns, the symbol $b$ acts
	as $a_i$ on row $i$ and then jumps to the top line. For the remainder of the
	top row, all $a_i$ act in the same way and $b$ acts likewise.
	
	Define
	$\psi(a_i) = a^{i-1} b$ for $i = 1,\ldots,m-1$, and $\psi(s) = a^{m-1}$. Then on the top line
	$\psi(a_i)$ acts in the same way as $a_i$ in the original PFA. Similarly, $\psi(s)$ acts as $s$.
	On any other row, $\psi(s)$ acts as $s$, too. Since every symbol $a_i$ is
	defined on $qs^p$ for every $q\in Q$, we obtain that $\psi(s)^pb=a^{(m-1)p}b$
	is defined on every state and ends up in the top row.
	
	Assume that $w$ is carefully synchronizing in the original PFA.
	Then by the above observations, $a^{(m-1)p} b \psi(w)$ is carefully synchronizing in the new PFA.
	Conversely, any carefully synchronizing word of the new PFA can be written as $\psi(w)a^j$,
	where $0\leq j\leq m-2$ and $\psi(w)$ is a
	concatenation of blocks of the form $\psi(l),l\in\Sigma$. Now note that
	$a^j$ can never synchronize two distinct states in the top row. Therefore,
	$\psi(w)$ synchronizes the top row and consequently $w$ is synchronizing in
	the original PFA. Clearly $|\psi(w)a^j|\geq|w|\geq f(n)$.
\end{proof}

We apply Theorem \ref{lem2sym} to our basic construction with $3k$ states and $m=3$ symbols;
note that $s,c$ are defined on all states and $r$ is defined on $Qs$, so the requirements of
Theorem \ref{lem2sym} hold for $p=1$. As $r$ and $c$ act differently on all states, the only option for
$Q'$ is $Q' = \emptyset$.
Hence we obtain a carefully synchronizing PFA on $(m-1)3k = 6k$ states and two symbols, with
shortest carefully synchronizing word length $\Omega(\phi^k)$.
For $n$ being the number of states of the new PFA, this is $\Omega(\phi^{n/6})$.

However, instead of our three symbols $s,c,r$ we also get careful synchronization on the three
symbols $s,c,rc$ with careful synchronization length of the same order. But then for $i = 4,\ldots,k$ we
have $A_i s = A_i$ and $A_i c = A_i rc$, so we may choose $Q' = \{A_4,\ldots,A_k\}$ in Theorem
\ref{lem2sym}, by which $n' = 3k - (k-3) = 2k+3$, yielding a PFA on two symbols and $5k+3$ states.
This results in the following proposition, where for $n$ not of the shape $5k+3$ we remove up to four states from $Q'$.

\begin{proposition}
	There exists a sequence of transitive carefully synchronizing PFAs with two symbols, $n$ states and shortest synchronizing word length $\Omega(\phi^{n/5})$.
\end{proposition}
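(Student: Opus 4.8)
The plan is to apply Theorem~\ref{lem2sym} to the three-symbol PFA of Section~\ref{secexp}, but with a more economical choice of the distinguished set $Q'$ than the empty set. First I would observe that the three-symbol PFA on $3k$ states is carefully synchronizing not only with the symbols $s,c,r$, but also with the symbol set $\{s,c,rc\}$: replacing $r$ by $rc$ just post-composes each $r$-step with a cyclic shift, and since every carefully synchronizing word constructed in Lemma~\ref{lemlb} and the main proposition uses $r$ only inside blocks that can be rewritten to use $rc$ (inserting or deleting $c$'s as needed), the shortest carefully synchronizing word length for $\{s,c,rc\}$ is still $\Omega(\phi^k)$. I would state this as a short preliminary observation, checking that the proof of Lemma~\ref{lemlb} goes through verbatim when $r$ is replaced by $rc$ because a single $rc$-step still corresponds to exactly one rewrite step of $R$ (followed by a shift), so Corollary~\ref{correwr} still forces $\Omega(\phi^k)$ steps.

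Next I would verify the hypotheses of Theorem~\ref{lem2sym} for this three-symbol PFA with alphabet $\{s,c,rc\}$, distinguished symbol $s$, and $Q' = \{A_4,\dots,A_k\}$. For condition~1, $s$ is a complete symbol and after one application of $s$ every symbol ($c$, $rc$) is defined on $Qs$, so $p=1$ works (note $c$ is total and $r$, hence $rc$, is defined precisely on $Qs = S(CCA^{k-2})$ and its cyclic shifts, which is all we ever need). For condition~2, by the table of transitions $A_i s = A_i$ for $i = 3,\dots,k$, so in particular $q s = q$ for all $q \in Q'$. For condition~3, I need $qa = qb$ for all $q \in Q'$ and $a,b \in \{c, rc\}$; but $A_i r = A_i$ for $i = 4,\dots,k$ by the table, so $A_i (rc) = A_i c = A_i c$, which is exactly the action of $c$ on $A_i$ — hence the two non-$s$ symbols agree on $Q'$. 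With $|Q'| = k-3$ we get $n' = 3k - (k-3) = 2k+3$, and Theorem~\ref{lem2sym} produces a carefully synchronizing two-symbol PFA on $n + n'(m-2) = 3k + (2k+3)\cdot 1 = 5k+3$ states with shortest carefully synchronizing word length at least $\Omega(\phi^k)$, and transitive because the original is.

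Finally I would handle the arithmetic of general $n$: the construction natively gives sizes of the form $5k+3$, so for $n \equiv 3 \pmod 5$ we are done, and for the remaining residues I would remove up to four states from $Q'$ (shrinking $Q'$ only decreases $n'$ and keeps all three conditions of Theorem~\ref{lem2sym} valid, since conditions~2 and~3 are inherited by subsets), adjusting $n'$ by $1,2,3$ or $4$ to hit every sufficiently large $n$; the bound becomes $\Omega(\phi^{k}) = \Omega(\phi^{(n-3)/5}) = \Omega(\phi^{n/5})$ after absorbing constants. The main obstacle — and the only genuinely non-routine point — is the preliminary claim that switching from $r$ to $rc$ preserves the exponential lower bound: I would argue this carefully by noting that a word over $\{s,c,rc\}$ of length $\ell$ expands to a word over $\{s,c,r\}$ of length at most $2\ell$ (replace each $rc$ by $r$ then $c$), so an $o(\phi^k)$ carefully synchronizing word over $\{s,c,rc\}$ would yield an $o(\phi^k)$ one over $\{s,c,r\}$, contradicting Lemma~\ref{lemlb}. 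Everything else is a direct verification against the transition table.
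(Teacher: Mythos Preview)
Your proposal is correct and follows essentially the same approach as the paper: replace the alphabet $\{s,c,r\}$ by $\{s,c,rc\}$, take $Q' = \{A_4,\ldots,A_k\}$ in Theorem~\ref{lem2sym}, obtain $5k+3$ states, and adjust $Q'$ for other residues. Your expansion argument (a length-$\ell$ word over $\{s,c,rc\}$ yields a length-${\le}2\ell$ word over $\{s,c,r\}$) is a clean way to justify what the paper states without proof, namely that the synchronization length stays of the same order. One minor slip: shrinking $Q'$ \emph{increases} $n' = n - |Q'|$, not decreases it; this is in fact what you want, since the two-symbol PFA then has $3k + n'$ states, and removing $0,1,2,3,4$ states from $Q'$ gives sizes $5k+3,\ldots,5k+7$, covering all residues mod~$5$.
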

This result will be sharpened in the next section as well. 

\section{Main Asymptotic Results}\label{sec:asymptotics}

In this section we discuss some further optimizations. First we extend the number of rewrite rules and then the number of letters in the system. These rewrite systems will be used to construct PFAs on two and three symbols for which we will derive asymptotic lower bounds for the synchronization length.

\subsection{More Rewrite Rules}

For any $h \geq 2$ we define a rewrite system $R_h$ by taking $h+1$ rewrite rules
\begin{equation}
	C^i B^{h-i} A \to C^i A^{h-i} B
\end{equation}
for $i = 0,\ldots,h$. Then it is possible to construct a PFA that mimicks rewriting of the string $C^hA^{k-h}$ to $C^hA^{k-h-1}B$ in the system $R_h$. For $h=2$ this coincides with our construction in Section \ref{secexp}, but for $h > 2$, this gives a better bound. The following lemma gives the number of steps needed. Note that $f_2(i)$ is equal to $\fib(i)-1$.
\begin{lemma}
	For $k\geq h+1$, we have $C^hA^{k-h} \to_{R_h}^+ C^hA^{k-h-1}B$. Furthermore, the smallest possible number of steps for rewriting $C^hA^{k-h}$ in the system $R_h$ to a string ending in $B$ is exactly $f_h(k)$, where $f_h(k)$ satisfies the recursion
	\[
	f_h(k) = \left\{\begin{array}{ll}
	0 & 1\leq k \leq h\\
	1+\sum_{j=1}^h f_h(k-j) \qquad& k\geq h+1 
	\end{array}
	\right.
	\]  
\end{lemma}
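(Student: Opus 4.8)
The plan is to mirror the structure of the proof of Lemma \ref{lemrewrlen}, generalizing both the reachability argument and the weight/potential argument. For the first claim, that $C^hA^{k-h} \to_{R_h}^+ C^hA^{k-h-1}B$, I would argue by strong induction on $k$, using the $h$ previous cases $k-1,k-2,\ldots,k-h$ as induction hypotheses (so the base cases are $k = h+1,\ldots,2h$, which must be checked directly, generalizing the $k=3,4$ base cases in the $h=2$ proof). For $k > 2h$, the idea is that one can repeatedly apply the induction hypothesis to the suffix to build up a block $C^h A^{k-2h} B^{h-1} A$ (having produced $B$'s one at a time on positions $k-h+1,\dots,k$ and shifted the working window left), and then apply the rule $C^0 B^h A \to C^0 A^h B$ — i.e. $B^hA \to A^hB$ — to collapse the block, yielding a string ending in a single $B$. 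The bookkeeping of exactly which intermediate strings appear is the routine part; the essential point is that $h$ applications of "rewrite the tail to end in $B$, then shift" accumulate enough $B$'s adjacent to an $A$ to fire the top rule.

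For the lower bound I would define a weight function analogous to $W$. The natural guess is $W_h(u) = \sum_{i\,:\,u_i = B} f_h(i)$, where $f_h$ is exactly the function in the statement (with $f_h(i) = 0$ for $i \le h$, so only $B$'s in positions $> h$ contribute). The claim to verify is that every rewrite step $u \to_{R_h} v$ with $u,v$ having $C$'s only in the first $h$ positions satisfies $W_h(v) = W_h(u) + 1$. This is a direct case check over the $h+1$ rules. For the rule $C^i B^{h-i} A \to C^i A^{h-i} B$ applied at position $p$ (so the $A$ that becomes $B$ sits at position $p+h$): when $i = h$ the left side has no $B$, a new $B$ is created at position $h+1$, contributing $f_h(h+1) = 1$; when $0 \le i < h$, the left side has $B$'s at positions $p, p+1, \ldots, p+h-i-1$ (that is $h-i$ of them) contributing $\sum_{j=0}^{h-i-1} f_h(p+j)$, and these are replaced by a single $B$ at position $p+h$ contributing $f_h(p+h)$; the net change is $f_h(p+h) - \sum_{j=0}^{h-i-1} f_h(p+j)$. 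One must check this equals $1$. Since the leftmost $C$-free-prefix constraint forces the left side to start at $p = i+1$ (the $C^i$ occupies positions $1,\dots,i$), we need $f_h(i+1+h) - \sum_{j=0}^{h-i-1} f_h(i+1+j) = 1$, i.e. $f_h(k) - \sum_{\ell = i+1}^{h} f_h(k - h + \ell - 1)$ ... reindexing, this is precisely a telescoped form of the recursion $f_h(k) = 1 + \sum_{j=1}^{h} f_h(k-j)$ together with the initial conditions that kill the low-index terms. I would verify this cleanly by induction on $i$ downward from $i = h$, peeling one rule at a time, rather than by brute reindexing.

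Given the weight lemma, the second claim follows exactly as before: the initial string $C^hA^{k-h}$ has weight $0$ (its $B$-count is zero), each step raises the weight by $1$, and a string ending in $B$ (with that being the only $B$) has weight $f_h(k)$, so at least $f_h(k)$ steps are needed; more $B$'s only increase the weight further. Combined with the first claim — which exhibits a rewriting reaching a string ending in exactly one $B$ — the minimum is exactly $f_h(k)$. I expect the main obstacle to be the base-case verification in the reachability argument (the induction now needs $h$ predecessors, so up to $h$ base strings $C^hA^{j}$, $j = 1,\ldots,h$, must be hand-checked, and one should make sure the induction step's window-shifting really stays within strings of the required shape $C^h w$ with $w \in \{A,B\}^{k-h}$); the weight computation, while the conceptual heart, reduces to recognizing the defining recursion of $f_h$ and is mechanical once set up correctly.
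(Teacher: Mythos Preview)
Your plan is correct and matches the paper's approach: the paper defines the same weight function $W(u) = \sum_{i:u_i=B} w_i$ with $w_i = f_h(i)$, verifies it increases by exactly $1$ per rewrite step via the defining recursion of $f_h$, and refers back to Lemma~\ref{lemrewrlen} for the inductive reachability argument. One small slip to fix when you write it out: in the $k>2h$ induction step the intermediate string should be $C^h A^{k-2h-1} B^{h} A$ (you need $h$ consecutive $B$'s, not $h-1$, before firing $B^hA\to A^hB$), but this is exactly the routine bookkeeping you flagged.
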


\begin{proof} The proof is essentially analogous to the proof of Lemma \ref{lemrewr}. We define the weight $W(u)$ of a string $u = u_1u_2\ldots u_k$ over $\left\{A,B,C\right\}$ by assigning weight $w_i$ to a $B$ on position $i$:
	\[
	W(u)=\sum_{i:u_i=B} w_i.
	\] 
	Other symbols have zero weight. Now we want to choose $w_i$ in such a way that every rewrite step increases the weight of a string by 1. This gives a recursion for $w_i$: to create a $B$ in position $i$, we need $u_j$ to be equal to $B$ or $C$ for all $i-h\leq j \leq i-1$. After that, one extra rewrite step is needed. We start having already $C$'s in positions $1,\ldots,h$. Therefore $w_i$ satisfies
	\[
	w_i = \sum_{j = \max\left\{h+1,i-h\right\}}^{i-1} w_j, 
	\]
	which means that $w_k=f_h(k)$ as defined in the lemma. By construction, to reach a string ending in $B$, exactly $f_h(k)$ rewrite steps are needed. 
	\end{proof}

\subsection{More Rewrite Symbols}

Instead of just having $A$ and $B$ and rewriting the final $A$ in a string into a $B$, we could take $m$ symbols $A^{(1)},\ldots, A^{(m)}$. For convenience we will sometimes denote $A^{(1)}$ by $A$ and $A^{(m)}$ by $B$. We take $(h+1)(m-1)$ rewrite rules
\begin{equation}\label{eq:rewrite}
	C^iB^{h-i}A^{(t)} \to C^iA^{h-i}A^{(t+1)},
\end{equation}
for $i = 0,\ldots, h$ and $t = 1,\ldots, m-1$. In this rewrite system $R_{h,m}$ the goal is to rewrite the string $C^hA^{k-h}$ into a string ending in $B$. 
\begin{lemma}\label{lem:moresym}
	For $k\geq h+1$, we have $C^hA^{k-h} \to_{R_{h,m}}^+ C^hA^{k-h-1}B$. Furthermore, the smallest possible number of steps for rewriting $C^hA^{k-h}$ in the system $R_{h,m}$ to a string ending in $B$ is exactly $f_{h,m}(k)$, where $f_{h,m}(k)$ satisfies the recursion
	\[
	f_{h,m}(k) = \left\{\begin{array}{ll}
	0 & 1\leq k \leq h\\
	(m-1)\cdot\left(1+\sum_{j=1}^h f_{h,m}(k-j)\right) \qquad& k\geq h+1 
	\end{array}
	\right.
	\]  
\end{lemma}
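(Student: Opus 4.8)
The plan is to follow the same two-pronged strategy as in the proof of Lemma \ref{lemrewrlen} (the $h=2$, $m=2$ case) and of the preceding lemma (arbitrary $h$, $m=2$): first exhibit an explicit rewrite sequence taking $C^hA^{k-h}$ to $C^hA^{k-h-1}B$, then prove optimality by a weight function that increases by exactly $1$ at every step, so that the number of steps is forced.

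First I would prove reachability $C^hA^{k-h} \to_{R_{h,m}}^+ C^hA^{k-h-1}B$ by induction on $k$. For $k=h+1$ the string is $C^hA$, and applying the rules with $i=h$ successively for $t=1,\ldots,m-1$ gives $C^hA \to C^hA^{(2)} \to \cdots \to C^hB$. For $k>h+1$, I would note that the rewrite process does not touch the leading $C^h$ block, so it suffices to argue that the suffix $A^{k-h}$ can be rewritten, within the $C$-prefixed system, to $A^{k-h-1}B$; this mirrors the chain $CCA^{k-2}\to_R^+ CCA^{k-4}BA \to_R^+ CCA^{k-5}BBA \to_R CCA^{k-3}B$ from Lemma \ref{lemrewrlen}, generalized to $h$ trailing $C$/$B$ symbols and to the extra intermediate letters $A^{(t)}$: each time we want to turn the trailing $A$ into a $B$, we first (by the induction hypothesis applied to a shorter suffix) build up $h$ symbols from $\{B,C\}$ immediately to its left, then fire the appropriate rule, then use the induction hypothesis again to clean up. The bookkeeping of which $A^{(t)}$-stage each position is in is exactly the reason the factor $(m-1)$ appears.

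For the lower bound I would assign weight $w_i^{(t)}$ to the letter $A^{(t)}$ in position $i$ (with $w_i^{(1)}=0$ for ordinary $A$'s, since they carry no progress), and $0$ to $A$ and to $C$, and choose these weights so that every application of a rule in (\ref{eq:rewrite}) raises $W(u)$ by exactly $1$. A rule $C^iB^{h-i}A^{(t)}\to C^iA^{h-i}A^{(t+1)}$ destroys the weights of $h-i$ $B$'s (which all sit in positions $i+1,\ldots,i$-th-through position of the $A^{(t)}$, i.e. the $h$ positions just left of the rewritten $A^{(t)}$, restricted to those past position $h$) and of one $A^{(t)}$, and creates one $A^{(t+1)}$ in the position of the old $A^{(t)}$; writing $\ell$ for that position, the balance condition becomes
\[
w_\ell^{(t+1)} = 1 + w_\ell^{(t)} + \sum_{j=\max\{h+1,\ell-h\}}^{\ell-1} w_j^{(m)},
\]
since the destroyed $B$'s each carry weight $w_j^{(m)}$ (a $B$ is $A^{(m)}$) and the positions involved are exactly $j$ with $\ell-h\le j\le \ell-1$ and $j>h$. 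Telescoping over $t=1,\ldots,m-1$ (using $w_\ell^{(1)}=0$) gives
\[
w_\ell^{(m)} = (m-1)\Bigl(1+\sum_{j=\max\{h+1,\ell-h\}}^{\ell-1} w_j^{(m)}\Bigr),
\]
which is precisely the recursion defining $f_{h,m}$ once we set $w_\ell^{(m)} = f_{h,m}(\ell)$ and check the base cases $w_\ell^{(m)}=0$ for $\ell\le h$. Since the start string $C^hA^{k-h}$ has weight $0$ and any string ending in $B$ has weight at least $w_k^{(m)}=f_{h,m}(k)$, at least $f_{h,m}(k)$ steps are needed; combined with the explicit sequence of the first part (which one checks realizes exactly this count when the final $B$ is the only heavy letter), the minimum is exactly $f_{h,m}(k)$.

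The main obstacle I anticipate is purely notational: keeping careful track, in the weight computation, of which positions lie in the frozen $C^h$-prefix and which do not (hence the $\max\{h+1,\ell-h\}$ lower limit), and verifying that the intermediate letters $A^{(2)},\ldots,A^{(m-1)}$ never need to appear inside the first $h$ positions, so that the $C$-block genuinely stays inert and the telescoping is legitimate. Once one is convinced the $C$-prefix is never disturbed and that in an optimal derivation no superfluous $B$'s are ever created, both the upper and lower bounds fall out exactly as in Lemma \ref{lemrewrlen}; the argument is, as the authors say, "essentially analogous."
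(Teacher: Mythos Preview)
Your proposal is correct and follows essentially the same route as the paper: assign position- and letter-dependent weights $w_{i,t}$ to the symbols $A^{(t)}$ (with $A$ and $C$ of weight zero), force every rewrite step to increase the total weight by exactly $1$, and read off the recursion for $w_{k,m}=f_{h,m}(k)$. The paper writes down $w_{i,t}=(t-1)\bigl(1+\sum_{j=\max\{h+1,i-h\}}^{i-1} w_{j,m}\bigr)$ directly, whereas you obtain the same formula by telescoping your balance condition $w_\ell^{(t+1)}=1+w_\ell^{(t)}+\sum_j w_j^{(m)}$ over $t$; these are the same computation. Your worry at the end about ``no superfluous $B$'s'' is unnecessary: since all weights are nonnegative, any string ending in $B$ already has weight at least $w_{k,m}=f_{h,m}(k)$, so the lower bound follows without controlling the rest of the string.
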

\begin{proof}
	We define the weight $W(u)$ of a string $u_1u_2\ldots u_k$ by assigning weights to the symbols $A^{(t)}$ for $t\geq 2$:
	\[
	W(u) = \sum_{t=2}^{m}\sum_{i:u_i = A^{(t)}} w_{i,t},
	\]
	where $w_{i,t}$ is the weight of $A^{(t)}$ on position $i$. The symbols $C$ and $A^{(1)}=A$ have zero weight. Again weights will be chosen such that every rewrite step increases the weight of a string by 1. Before we can replace a symbol $A^{(t)}$ in position $i$ by $A^{(t+1)}$, we need $u_j$ to be equal to $A^{(m)}=B$ or $C$ for all $i-h\leq j\leq i-1$. After that, one extra rewrite step is needed. To replace a symbol $A^{(1)}$ in position $i$ by $A^{(t)}$, this has to be repeated $t-1$ times. Therefore, we find the following recursion
	\[
	w_{i,t} = (t-1)\left(1+\sum_{j = \max\left\{h+1,i-h\right\}}^{i-1} w_{j,m}\right).
	\]
	Then for all $k\geq h+1$, $W\left(C^hA^{k-h-1}B\right) = w_{k,m} = f_{h,m}(k)$.
	\end{proof}

\subsection{Construction of the PFA on Three Symbols}\label{sec:constr}

Using the rewrite rules (\ref{eq:rewrite}), we can construct a PFA $P_{h,m}^n$ on $n=(m+1)k$ states $A_{i}^{(1)},\ldots, A_{i}^{(m)}, C_i$ for $i = 1,\ldots, k$. As before, we have a start symbol $s$, a rewrite symbol $r$ and a cyclic shift symbol $c$. Let $X$ denote any of the letters $A^{(1)},\ldots, A^{(m)}, C$. Then
\begin{eqnarray}
	X_is  &= \left\{\begin{array}{ll}
		C_i & i = 1,\ldots,h\\
		A_i^{(1)}\qquad & \textrm{otherwise}
	\end{array}\right.\qquad
	X_ic &= \left\{\begin{array}{ll}
		X_1 & i = k\\
		X_{i+1}\qquad & \textrm{otherwise}
	\end{array}\right.
\end{eqnarray}
Rewriting takes place in the states with indices $1\leq i\leq h+1$. We define it on $(m+1)$-tuples with index $i$ by 
%\[
%\begin{array}{llll}
%A_i^{(t)}r = \bot,\qquad & A_i^{(m)}r = A_i^{(1)},\qquad & C_ir = C_i,\qquad & i = 1,\ldots,h\\
%\end{array}
%\]
\[
\left(A_i^{(1)}r,\ldots,A_i^{(m)}r, C_ir\right) = \left\{\begin{array}{lll}
\left(\bot,\ldots,\bot,A_i^{(1)},C_i\right)&i = 1,\ldots,h\\
\left(A_i^{(2)},\ldots,A_i^{(m)},\bot,A_{i-1}^{(m)}\right)\qquad&i = h+1\\
\left(A_i^{(1)},\ldots,A_i^{(m)}, C_i\right) & \textrm{otherwise.}
\end{array}\right.
\]
\begin{lemma}\label{lem:constr}
	The PFA $P_{h,m}^n$ is carefully synchronizing and the shortest synchronizing word has length at least $f_{h,m}(n/(m+1))$. 
\end{lemma}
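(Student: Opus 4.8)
The plan is to mirror exactly the structure used for the three‑symbol PFA in Section \ref{secexp}, namely Lemmas \ref{lemrewr} and \ref{lemlb} together with the synchronization argument in the Proposition after Corollary \ref{correwr}, now replacing those with their $(h,m)$‑analogues built on Lemma \ref{lem:moresym}. First I would set up the dictionary between subsets of the state set and strings: for a string $u = a_1 a_2 \cdots a_k$ over $\{A^{(1)},\ldots,A^{(m)},C\}$, write $S(u)$ for the set of $k$ states containing $X_i$ iff $a_i = X$. As before, the symbol $c$ acts as a cyclic shift, $S(vx)c = S(xv)$, and the first symbol of any carefully synchronizing word must be $s$ since $r$ is undefined on some states and $c$ is a permutation. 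After $s$ the reached set is $S(C^hA^{k-h})$, the starting string of the rewrite system $R_{h,m}$.

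Next I would prove the analogue of Lemma \ref{lemrewr}: applied to a set of the shape $S(u)$ with $u = C^h w$, $w\in\{A^{(1)},\ldots,A^{(m)}\}^{k-h}$, the symbol $r$ (conjugated by cyclic shifts $c^i r c^{k-i}$) realizes precisely one rewrite step of $R_{h,m}$ on $u$ up to cyclic shift, and conversely $r$ is defined on $S(u)c^i$ only when the corresponding length‑$(h+1)$ prefix of the cyclically shifted string is one of the left‑hand sides $C^i B^{h-i} A^{(t)}$. This is the routine case analysis: the defined entries of $r$ on indices $1,\ldots,h$ force the leading $h$ letters to be $C$ or $B$, the entry on index $h+1$ performs the substitution $A^{(t)}\mapsto A^{(t+1)}$ (and $C\mapsto$ previous‑index $A^{(m)}$ handles the boundary so the two/h blocks of $C$ reappear in positions $1,\ldots,h$ after $c^{k-i}$), and the undefined entries rule out the cases that do not match a rule. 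Combining this with Lemma \ref{lem:moresym} and the observation that re‑applying $s$ merely returns to $S(C^hA^{k-h})$, exactly as in Lemma \ref{lemlb}, gives that any carefully synchronizing word has length at least $f_{h,m}(k)$ where $k = n/(m+1)$: until the current string ends in $B = A^{(m)}$ one is confined to sets of the shape $S(u)c^i$, and reaching such a $u$ costs at least $f_{h,m}(k)$ rewrite (hence power‑automaton) steps.

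For the upper‑bound half—that $P_{h,m}^n$ is actually carefully synchronizing—I would repeat the construction of the word $(s\,w\,c\,r)^{\,?}s$ from the earlier Proposition: take $w$ witnessing $S(C^hA^{k-h})w = S(C^hA^{k-h-1}B)$ with the number of $c$'s divisible by $k$, so that on each group $\{A^{(1)}_i,\ldots,A^{(m)}_i,C_i\}$ the block $s\,w$ collapses to a single state, then $c\,r$ moves that singleton into (a subset of) the cyclic successor group, except the "absorbing" group of index $2$ (the one hit by $C_{h+1}r = A_h^{(m)}$), which stays put; iterating $s w c r$ at most $k-1$ times drives every group into the absorbing group, and one final $s$ synchronizes. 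The only points needing care here are bookkeeping the boundary index shifts introduced by $h>2$ and $m>2$ and checking that the relevant singletons land in the successor groups — straightforward but tedious.

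The main obstacle is the converse direction of the $r$‑lemma: one must argue that when $r$ is defined on $S(u)c^i$ and $u$ does not already end in $B$, the forced pattern $C^iB^{h-i}A^{(t)}$ is the only possibility, which requires using both the bound "$u$ contains exactly $h$ letters $C$" (to exclude an all‑$C$ prefix of length $h+1$) and "$u$ does not end in $B$" (to exclude a prefix of the form $\cdots B C \cdots$ arising from the cyclic wrap). Getting this case analysis fully right for general $h$ and $m$ — including that after the substitution and the shift $c^{k-i}$ the $h$ copies of $C$ are back in positions $1,\ldots,h$ so that the resulting set is again of the admissible shape $S(v)$ with $v$ a legal $R_{h,m}$‑successor — is where the real work lies; everything else is a direct transcription of Section \ref{secexp}.
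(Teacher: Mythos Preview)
Your proposal is correct and follows exactly the paper's approach, which is simply the terse transcription of the Section~\ref{secexp} argument to general $h,m$ that you spell out in detail. One bookkeeping slip: the absorbing group under $swcr$ is the one of index~$h$, not index~$2$ (as your own parenthetical $C_{h+1}r = A_h^{(m)}$ already indicates); with that correction the synchronizing word is $(swcr)^{k-1}s$ ending in $\{C_h\}$, matching the paper.
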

\begin{proof}
	Let $Q$ be the state set of the PFA $P_{h,m}^n$. For a string $u = u_1\ldots u_k$ over $\left\{A^{(1)},\ldots,A^{(m)},C\right\}$, we define $S(u)\subseteq Q$ in such a way that $X_i\in S(u)$ if and only if $u_i=X$. Then $Qs = S(C^hA^{k-h})$. Every application of the symbol $r$ to a set $S(u)$ corresponds to application of a rewrite rule to $u$. As long as the string does not end in $B$, no other changes are possible, except for cyclic shifts and resetting to $C^hA^{k-h}$. To reach the set $S(C^hA^{k-h-1}B)$, we need at least a word of length $f_{h,m}(k) = f_{h,m}(n/(m+1))$. 
	
	To see that $P_{h,m}^n$ is synchronizing, let $w$ be such that $Qsw = S(C^hA^{k-h-1}B)$,
	and let \[Q_i = \left\{A_i^{(1)},\ldots,A_i^{(m)},C_i\right\}.\] 
	Then $Q_i swcr\subseteq Q_{(i+1) \bmod k}$ for all $i\neq h$,
	and $Q_h swcr \subseteq Q_h$. Consequently, $Q(swcr)^{k-1}s = \{C_h\}$ so that 
	$(swcr)^{k-1}s$ is synchronizing.
	\end{proof}

\subsection{Asymptotic Lower Bound for PFAs on Three Symbols}

\begin{theorem} There exists a sequence of transitive carefully synchronizing PFAs with three symbols, $n$ states and shortest carefully synchronizing word length \[\Omega\left(2^{\frac{2}{5}n-\log_2(n)}\right) = \Omega\left(\frac{2^{2n/5}}{n}\right).\]
\end{theorem}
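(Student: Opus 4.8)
Here is the plan.

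I would instantiate the construction $P_{h,m}^{n}$ of Subsection~\ref{sec:constr} with the fixed value $m=4$ and a slowly growing parameter $h=h(k)$, so that the PFA has $n=5k$ states and the three symbols $s,r,c$, and then estimate the lower bound $f_{h,4}(k)$ furnished by Lemma~\ref{lem:constr}. The choice $m=4$ is dictated by the following observation: by Lemma~\ref{lem:moresym} the quantity $f_{h,m}(k)$ grows essentially like $\rho(h,m)^{k}$, where $\rho(h,m)\to m$ as $h\to\infty$, so each state contributes roughly $\tfrac{1}{m+1}\log_2 m$ to the exponent, and over the integers $m\ge 2$ this is maximized at $m=4$, where it equals $\tfrac25$. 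Since $4^{k}=4^{n/5}=2^{2n/5}$, the theorem reduces to showing $f_{h,4}(k)=\Omega(4^{k}/k)$ for a suitable $h=h(k)$, together with transitivity of the construction and the usual padding for $n$ not a multiple of $5$.

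The core step is an \emph{explicit} lower bound for $f_{h,m}(k)$. Differencing the recursion of Lemma~\ref{lem:moresym} gives the linear recurrence $f_{h,m}(k)=m\,f_{h,m}(k-1)-(m-1)\,f_{h,m}(k-h-1)$ for $k\ge h+2$, with characteristic polynomial $p(x)=x^{h+1}-mx^{h}+(m-1)$. Since $p(1)=0$, $p$ is decreasing and then increasing on $(0,\infty)$, and $p(m)=m-1>0$, there is a unique root $\rho=\rho(h,m)\in(1,m)$ and $p\le 0$ on $[1,\rho]$. I would then prove by induction on $k$ the ratio bound $f_{h,m}(k)\ge\rho\,f_{h,m}(k-1)$ for all $k\ge h+1$: the base cases $h+1\le k\le 2h+1$ are immediate because the subtracted term $f_{h,m}(k-h-1)$ vanishes there, and for $k\ge 2h+2$ the inductive hypothesis gives $f_{h,m}(k-h-1)\le\rho^{-h}f_{h,m}(k-1)$, after which the desired inequality is exactly $p(\rho)\le 0$. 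Combined with $f_{h,m}(h+1)=m-1$ this yields $f_{h,m}(k)\ge(m-1)\,\rho(h,m)^{\,k-h-1}$. Finally, testing $x=4-4^{1-h}$ in $p$ and using Bernoulli's inequality $(4-4^{1-h})^{h}\ge 4^{h}(1-h4^{-h})$ shows $p(4-4^{1-h})\le 0$ for $h\ge 2$, hence $\rho(h,4)\ge 4-4^{1-h}=4(1-4^{-h})$.

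Putting these together,
\[ f_{h,4}(k)\;\ge\;3\cdot 4^{\,k-h-1}(1-4^{-h})^{\,k-h-1}\;\ge\;\tfrac{3}{16}\,4^{k}\,4^{-h}\,e^{-2k4^{-h}}, \]
where the last step uses $\ln(1-x)\ge -2x$ for $0\le x\le\tfrac12$. Choosing $h=h(k)=\lceil\log_4 k\rceil$ makes $4^{-h}k\le 1$, so the exponential factor is at least $e^{-2}$, while $4^{h}\le 4k$; hence $f_{h,4}(k)=\Omega(4^{k}/k)$, and with $n=5k$ this is $\Omega(2^{2n/5}/n)=\Omega(2^{\frac25 n-\log_2 n})$. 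Transitivity of $P_{h,4}^{5k}$ follows as in Subsection~\ref{sec:constr}: from any state a cyclic shift by $c$ followed by $s$ reaches $C_1$, and from $C_1$ every state is reached using $c$, $s$, and the superscript-incrementing action of $r$ at index $h+1$. For $n$ not a multiple of $5$ one splits up to four states into two, dividing their incoming arrows, exactly as in the proof of the three-symbol construction in Section~\ref{secexp}; this keeps the PFA transitive and, being an unfolding of the quotient $5k$-state PFA, cannot shorten the shortest carefully synchronizing word.

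I expect the second paragraph to be the main obstacle. The standard fact $f_{h,m}(k)=\Theta(\rho^{k})$ is not usable here, because $h$ grows with $k$ and the implied constant must be controlled uniformly in $h$; the ratio-bootstrap induction is what makes the constant clean (namely $m-1$ times a power of $\rho$). The remaining delicate point is really a design constraint rather than a difficulty: one wants $h$ large so that $\rho(h,4)$ is close to $4$ and the factor $(1-4^{-h})^{k}$ does not decay, but also $h$ small so that the prefactor $4^{-h}$ in front of $4^{k}$ is not too tiny; the balance forces $h\approx\log_4 k$, which is exactly what produces the $-\log_2 n$ loss in the exponent.
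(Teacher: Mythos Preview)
Your proposal is correct and follows the same overall strategy as the paper: instantiate $P_{h,m}^n$ with $m=4$, show that the recursion $f_{h,4}$ grows like $\rho^{k}$ for some $\rho$ close to $4$, pick $h\approx\log_4 k$ to balance the prefactor loss against the gap $4-\rho$, and pad when $5\nmid n$ by splitting states.

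The technical execution of the growth estimate differs, however, and is worth a remark. The paper homogenizes by passing to $g_{h,m}(k)=f_{h,m}(k)+\frac{m-1}{(m-1)h-1}$, keeps the original telescoping-sum form of the recursion, and proves inductively that $g_{h,m}(k)\ge C\,(m-\varepsilon)^k$; the parameter $\varepsilon$ is introduced first, $h$ is then chosen as $\lceil\log_{m-\varepsilon}((m-1)/\varepsilon)\rceil$ to make the induction close, and finally $\varepsilon\sim c/n$ is optimized. You instead difference the recursion to the order-$(h{+}1)$ linear form, locate the dominant root $\rho$ of the characteristic polynomial, and bootstrap the ratio inequality $f_{h,4}(k)\ge\rho\,f_{h,4}(k-1)$; the explicit test point $4-4^{1-h}$ then gives a clean lower bound on $\rho$ without introducing an auxiliary $\varepsilon$. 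Your route avoids the shifted sequence $g$ and yields the constant $(m-1)$ directly, at the price of checking that the ratio induction actually launches (the cases $h{+}1\le k\le 2h{+}1$, where the subtracted term vanishes). Both arguments land on the same optimal scaling $h\sim\log_4 k$ and the same $-\log_2 n$ loss in the exponent; the paper's $\varepsilon$-parametrization makes the optimization visible, while your root-based argument is arguably tidier once one knows where to aim.
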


\begin{proof} As before, we can reduce to the case where $m+1\mid n$. For this case, we analyze the recursion of Lemma \ref{lem:moresym} and choose $h\geq 2$ and $m\geq 2$ dependent on $n$ in such a way that $f_{h,m}(n/(m+1))$ is maximal. First note that the recursive equations can be rewritten to a homogeneous system, by taking $g_{h,m}(k) = f_{h,m}(k)+\frac{m-1}{(m-1)h-1}$:
	\begin{equation}\label{eq:recursiong}
		g_{h,m}(k) = \left\{\begin{array}{ll}
			\frac{m-1}{(m-1)h-1}& k = 1,\ldots,h\\
			(m-1)\sum_{j=1}^h g_{h,m}(k-j)\qquad & k\geq h+1
		\end{array}\right.
	\end{equation}
	The case $m=h=2$ gives Fibonacci's sequence. The general homogeneous recurrence relation has characteristic equation
	\[
	x^h = (m-1)\left(x^{h-1}+x^{h-2}+\ldots+1\right) = (m-1)\cdot\frac{x^h-1}{x-1},
	\]
	provided $x\neq 1$. It can be rewritten as $m-x = (m-1)x^{-h}$, having a solution close to $m$. Indeed, for given $\varepsilon>0$, we can choose $h$ large enough so that there is a solution $\phi_{h,m}$ satisfying $m-\varepsilon\leq\phi_{h,m}<m$. This gives exponential growth with rate at least $m-\varepsilon$, which we will prove by induction. For $k=h+1,\ldots,2h$, we have
	\begin{equation}\label{eq:hyp}
		g_{h,m}(k)\geq f_{h,m}(k) = (m-1)m^{k-h-1}\geq \frac{m-1}{(m-\varepsilon)^{h+1}}\cdot(m-\varepsilon)^k.  
	\end{equation}
	For $k>2h$, assuming the above inequality for $k'<k$, we obtain
	%\begin{align*}
	%	g_{h,m}(k) &= (m-1) \sum_{j=1}^h g_{h,m}(k-j) \\
	%	&\ge \frac{(m-1)^2}{(m-\varepsilon)^{h+1}}\cdot\frac{1-(m-\varepsilon)^{-h}}{1-(m-\varepsilon)^{-1}}\cdot(m-\varepsilon)^{k-1}\\
	%	& = \frac{m-1}{(m-\varepsilon)^{h+1}}\cdot\frac{(m-1)-(m-1)(m-\varepsilon)^{-h}}{m-1-\varepsilon}\cdot(m-\varepsilon)^{k}.
	%\end{align*}
		\begin{align*}
		g_{h,m}(k) &= (m-1) \sum_{j=1}^h g_{h,m}(k-j) \\
		&\ge \frac{(m-1)^2}{(m-\varepsilon)^{h+1}}\cdot
		      (m-\varepsilon)^{k-1}\cdot
		      \sum_{j=1}^h(m-\varepsilon)^{1-j} \\
		&\ge \frac{(m-1)^2}{(m-\varepsilon)^{h+1}}\cdot
		      \frac{(m-\varepsilon)^{k}}{m-\varepsilon}\cdot
		      \frac{1-(m-\varepsilon)^{-h}}{1-(m-\varepsilon)^{-1}}\\
		& = \frac{m-1}{(m-\varepsilon)^{h+1}}\cdot
		      (m-\varepsilon)^{k}\cdot
		      \frac{(m-1)-(m-1)(m-\varepsilon)^{-h}}{(m-1)-\varepsilon}.
	\end{align*}
	In order to prove (\ref{eq:hyp}) for all $k > h$, the second fraction on the right hand side must be at least $1$, which is equivalent to
	$$
	(m-1)(m-\varepsilon)^{-h} \le \varepsilon.
	$$
	So we take
	$$
	h = \Big\lceil \log_{m-\varepsilon} \Big( \frac{m-1}{\varepsilon} \Big) \Big\rceil.
	$$
	This implies $(m-\varepsilon)^{h-1}< (m-1)/\varepsilon$, which we substitute in (\ref{eq:hyp}) to get for $k>h$
	\begin{equation}\label{eq:growthrate}
	g_{h,m}(k) \geq \frac{m-1}{(m-\varepsilon)^2}\cdot\frac{\varepsilon}{m-1}\cdot(m-\varepsilon)^k \geq \frac{1}{m^2}\cdot\varepsilon\cdot(m-\varepsilon)^k.
	\end{equation}
	%if $k>h$. If $k\leq h$, then the right hand side of \eqref{eq:growthrate} is less than $1$, which will not be the case because we are going to obtain a better lower bound from the right %hand side of \eqref{eq:growthrate}.	
	The PFA that has to be constructed has $n=(m+1)k$ states, so to find the growth in $n$, 
	we substitute $k=\frac{n}{m+1}$ in \eqref{eq:growthrate}. The following indirect argument shows that this is a valid choice, i.e. that $k>h$. If $k \le h$, then the right hand side of \eqref{eq:growthrate} would be less than $1$. Since our choice of $k$ will lead to a lower bound greater than $1$, we deduce that $k>h$. The best choice for $m$ is $m=4$, since this maximizes the growth rate $m^{1/(m+1)}$. This choice gives
	\[
	g_{h,m}\left(\frac{n}{m+1}\right) \geq \frac{1}{16}\cdot\varepsilon\cdot(4-\varepsilon)^{n/5}.
	\]
	Finally, we want to choose $\varepsilon$. The right hand side has a maximum at $\varepsilon = 20/(n+5)$. Note that this means that we rewrite a string of length $k$ by substituting blocks of length $h+1$ proportional to $\log(k)$. This choice leads to
	\[
	g_{h,m}\left(\frac{n}{m+1}\right) \geq \frac{20}{16(n+5)}\left(4-\frac{20}{n+5}\right)^{n/5} \geq \frac{5}{4(n+5)}\cdot 4^{n/5} \cdot\left(1-\frac5{n}\right)^{n/5},
	\]
	in which the last factor is bounded from below by a positive constant. Therefore 
	\[
	g_{h,m}\left(\frac{n}{m+1}\right) = \Omega\left(\frac{4^{n/5}}{n}\right) = \Omega\left(2^{\frac{2}{5}n-\log_2(n)}\right).
	\]
	Since $f_{h,m}\big(n/(m+1)\big)$ has the same growth rate, Lemma \ref{lem:constr} gives the result.\end{proof}

\subsection{Construction of the Binary PFA}\label{sec:constrbin}

To obtain an asymptotic lower bound for binary PFAs, we will use the reduction technique from Section \ref{sectwos}. Before doing so, we will slightly tune the construction of $P_{h,m}^n$ so that we obtain a bigger set $Q'$ in Theorem \ref{lem2sym}. Let $\tilde P_{h,m}^n$ be the PFA with symbols $c$, $rc$ and a modified start symbol $s'$, defined by
\[
\begin{array}{ll}
X_is' = C_i&i=1,\ldots,h\\
A_i^{(t)}s' = A_i^{(t)}\textrm{\ and\ }C_is' = A_i^{(1)}\qquad &i = h+2,\ldots,k-4;\ t=1,\ldots,m\\
X_is' = A_i^{(1)}&i = h+1 \textrm{\ and\ } i = k-3,\ldots,k,
\end{array}
\]
where $X$ stands for any of the symbols $C,A^{(1)},\ldots,A^{(m)}$.
\begin{lemma}
	The automaton $\tilde P_{h,m}^n$ is carefully synchronizing and its shortest synchronizing word has length at least $\Omega\left(2^{\frac25n-\log_2(n)}\right)$.
\end{lemma}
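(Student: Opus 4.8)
## Proof proposal

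The plan is to mirror the argument of Lemma~\ref{lem:constr}, which already establishes that $P_{h,m}^n$ is carefully synchronizing with the desired exponential lower bound, and to check that the modification of the start symbol $s$ into $s'$ changes nothing essential. The key point is that $s'$ agrees with the original $s$ on exactly those indices where it matters for the lower-bound argument, namely on the indices $1,\ldots,h$ (mapping everything to $C_i$) and on index $h+1$ (mapping everything to $A_i^{(1)}$); for the remaining indices $s'$ still collapses each triple-or-$(m{+}1)$-tuple $\{C_i,A_i^{(1)},\ldots,A_i^{(m)}\}$ into a single state, which is all that is needed. Concretely, I would first observe that $Q s' = S(C^h A^{k-h})$ exactly as in Lemma~\ref{lem:constr}: for $i\le h$ the image is $C_i$, and for $i>h$ the image is $A_i^{(1)}$ regardless of whether $i$ lies in the range $h+2,\ldots,k-4$ or in $\{h+1\}\cup\{k-3,\ldots,k\}$, because in both cases $X_i s' = A_i^{(1)}$ for every $X$.

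Next I would re-run the lower-bound half. Since $s'$ is not defined to be the identity anywhere but still resets the whole state set to $S(C^hA^{k-h})$ after one step, any carefully synchronizing word either starts with $s'$ (the only symbol defined on all of $Q$, as $rc$ inherits the undefinedness of $r$ and $c$ is a permutation — here one should note $rc$ is undefined on $A_1^{(1)},A_2^{(1)},\ldots$ just as $r$ is), or can be assumed to without loss of generality. From that point the reachable sets are governed by the rewrite symbol $r$ (now bundled as $rc$) acting, up to cyclic shift by $c$, as rewriting in $R_{h,m}$ on strings $C^h w$ with $w\in\{A^{(1)},\ldots,A^{(m)}\}^{k-h}$, exactly as analysed in Lemmas~\ref{lem:moresym} and~\ref{lem:constr}; applying $s'$ again merely returns to $S(C^hA^{k-h})$ and is useless. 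Hence at least $f_{h,m}(k)$ steps are needed to leave the family of sets of the form $S(u)c^i$, and therefore at least that many to reach a singleton. With the same choices $m=4$, $h=\lceil\log_{m-\varepsilon}((m-1)/\varepsilon)\rceil$, $\varepsilon=20/(n+5)$ as in the previous theorem, this is $\Omega(2^{\frac25 n-\log_2 n})$.

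Finally I would verify careful synchronization of $\tilde P_{h,m}^n$ itself, again imitating the end of the proof of Lemma~\ref{lem:constr}: take $w$ with $Qs'w = S(C^hA^{k-h-1}B)$ (such $w$ exists since the rewrite system reaches a string ending in $B$), and check that for each block index $i$ the set $Q_i = \{C_i,A_i^{(1)},\ldots,A_i^{(m)}\}$ satisfies $Q_i s'w c (rc) \subseteq Q_{(i+1)\bmod k}$ for $i\ne h$ and $Q_h s'w c(rc)\subseteq Q_h$, using that after $w$ each $Q_i$ has been mapped into a single state whose position tracks the string $C^hA^{k-h-1}B$, and that the number of $c$'s involved is a multiple of $k$ so positions line up. Iterating $(s'wc\,rc)^{k-1}$ drives everything into $Q_h$, and a final $s'$ collapses $Q_h$ to $\{C_h\}$. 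The one place requiring a little care is that the new start symbol $s'$ behaves identically to $s$ on the indices that the synchronizing loop actually exploits — in particular on $i=h+1$ and on the "boundary" indices $k-3,\ldots,k$, where $s'$ sends everything to $A_i^{(1)}$ just as $s$ does; I expect this bookkeeping (confirming that none of the three ranges in the definition of $s'$ breaks the loop) to be the main, though still routine, obstacle. The transitivity and the "single undefined transition'' refinements, if needed, then follow by the same state-splitting trick used in the Proposition after Lemma~\ref{lemlb}.
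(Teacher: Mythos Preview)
Your proposal contains a genuine gap stemming from a misreading of the definition of $s'$. You claim that $Qs' = S(C^hA^{k-h})$, arguing that ``in both cases $X_i s' = A_i^{(1)}$ for every $X$''. This is false for the middle range $i=h+2,\ldots,k-4$: there the definition reads $A_i^{(t)}s' = A_i^{(t)}$ and $C_i s' = A_i^{(1)}$, so $s'$ is the \emph{identity} on each $A_i^{(t)}$ and does not collapse the block to a single state. (This is precisely the point of introducing $s'$: to create a large set $Q'$ of fixed points for the application of Theorem~\ref{lem2sym}.) Consequently $Qs'$ is not a single set $S(u)$ but corresponds to the whole collection of strings $C^h A\, u_{h+2}\cdots u_{k-4}\, A^4$ with each $u_j \in \{A^{(1)},\ldots,A^{(m)}\}$.

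Because of this, your lower-bound argument does not go through: among the strings represented by $Qs'$ is $u_{\max} = C^h A B^{k-h-5} A^4$, which already has large weight, and one must rule out that from such a string a $B$ at the end can be reached quickly. The paper handles this by proving the weight inequality $W(u_{\max}) \le \tfrac12\, W(C^h A^{k-h-1}B)$ (via an auxiliary inequality $W(C^h B^i A^{k-h-i}) \le W(C^h A^{i+1} B A^{k-h-i-2})$), so that at least $\tfrac12 f_{h,m}(k)$ further rewrite steps are required; this halving does not affect the asymptotic order. This weight comparison is the key step you are missing. For the synchronization half, the paper's argument is also much shorter than your loop analysis: one simply observes that $s = (cs')^k$ in $\tilde P_{h,m}^n$, so any carefully synchronizing word for $P_{h,m}^n$ translates directly.
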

\begin{proof}
	Since $s = (cs')^k$, we deduce that $\tilde P_{h,m}^n$ is synchronizing. Now the set $Qs'$ corresponds to a collection of strings of length $k$. More precisely, all strings of the form $u = C^hAu_{h+2}\ldots u_{k-4}A^4$ with $u_j\in\left\{A^{(1)},\ldots,A^{(m)}\right\}$ for $j=h+2,\ldots,k-4$. In this collection, the string with maximal weight is $u_{\rm{max}} := C^hAB^{k-h-5}A^4$. The number of steps to synchronize $\tilde P_{h,m}^n$ is at least the number of steps needed to rewrite $u_{\rm{max}}$ into a string ending in $B$. To show that this is of the same order as rewriting a string of weight zero, it suffices to show that
	\begin{equation}\label{eq:suffi}
		W(u_{\rm{max}}) \leq \frac12 W\left(C^hA^{k-h-1}B\right).
	\end{equation}
	First we prove that
	\begin{equation}\label{eq:W}
		W\left(C^hB^iA^{k-h-i}\right) \leq W\left(C^hA^{i+1}BA^{k-h-i-2}\right), \qquad 0\leq i\leq k-h-2.
	\end{equation}   
	For $i=0$ and $i=1$ this is clear by construction of the weights. By induction it then follows that 
	\begin{align*}
		W(C^hB^{i+2}A^{k-h-i-2})
		&= W(C^hB^{i}A^{k-h-i}) + W(C^hA^{i}B^2A^{k-h-i-2}) \\
		%&\le W(C^hB^{i}A^{k-h-i}) + W(C^hA^{i+2}BA^{k-h-i-3}) \\
		&\le W(C^hA^{i+1}BA^{k-h-i-2}) + W(C^hA^{i+2}BA^{k-h-i-3}) \\
		&= W(C^hA^{i+1}B^2A^{k-h-i-3}) \\
		&\le W(C^hA^{i+3}BA^{k-h-i-4}),
	\end{align*}
	so (\ref{eq:W}) holds if $0\leq i\leq k-h-2$. Taking $i=k-h-4$, we obtain
	\begin{align*}
		W(u_{\rm{max}}) & \leq W(C^hB^{k-h-4}A^4)\leq W(C^hA^{k-h-3}BA^2)\leq\frac12W(C^hA^{k-h-1}B), 
	\end{align*}
	proving (\ref{eq:suffi}), so that we get a lower bound of the same order as for $P_{h,m}^n$.
	\end{proof}
\subsection{Asymptotic Lower Bound for Binary PFAs}
We will apply Theorem \ref{lem2sym} to the sequence $\tilde P_{h,m}^n$ to derive an asymptotic lower bound for binary PFAs. 
\begin{theorem} There exists a sequence of carefully synchronizing PFAs with two symbols, $n$ states and shortest carefully synchronizing word length \[\Omega\left(2^{\frac{1}{3}n-\frac{3}{2}\log_2(n)}\right) = \Omega\left(\frac{2^{n/3}}{n\sqrt{n}}\right).\]
\end{theorem}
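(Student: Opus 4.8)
The plan is to apply Theorem~\ref{lem2sym} to the tuned ternary PFA $\tilde P_{h,m}^n$ of the previous subsection, with $m=4$ and with the distinguished set $Q'$ taken as large as the hypotheses of that theorem allow. Since $\tilde P_{h,m}^n$ has exactly the three symbols $c,rc,s'$, Theorem~\ref{lem2sym} (applied with its parameter ``$m$'' equal to $3$) produces a binary PFA on $n+n'$ states, where $n'=n-|Q'|$, whose shortest carefully synchronizing word is at least as long as that of $\tilde P_{h,m}^n$; by the lemma preceding this theorem the latter has length $\Omega\!\bigl(2^{\frac25 n-\log_2 n}\bigr)$. The rest of the proof is then careful bookkeeping of the number of states and of the lower-order logarithmic terms.

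First I would check the three conditions of Theorem~\ref{lem2sym} with $s=s'$. Condition~1 holds with $p=1$: $s'$ is complete, and $Qs'$ contains none of the states on which $r$ is undefined (namely the $A_i^{(t)}$ with $i\le h$, $t<m$, and $A_{h+1}^{(m)}$), so both $c$ and $rc$ are defined on $Qs'$. For conditions~2 and~3 I would take $Q'$ to be the whole set of states that are fixed by $s'$ and on which $c$ and $rc$ act identically. Apart from the $A_i^{(t)}$ with $h+2\le i\le k-4$ (all $t$), this set also contains the states $C_1,\dots,C_h$ (for $i\le h$ we have $C_i s'=C_i$ and $C_i r=C_i$, hence $C_ic=C_{i+1}=C_i rc$) together with the $O(1)$ boundary states $A_i^{(1)}$, $k-3\le i\le k$. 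Therefore $|Q'|=m(k-h-5)+h+O(1)$, so that $n'=(m+1)k-|Q'|=k+(m-1)h+O(1)$ and the resulting binary PFA has
\[
N \;=\; n+n' \;=\; (m+2)k+(m-1)h+O(1)
\]
states. (It is transitive whenever $\tilde P_{h,m}^n$ is, because $s=(cs')^k$, so every $\{c,r,s\}$-word of $P_{h,m}^n$ is still realisable.)

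Finally I would specialise to $m=4$, the value maximising the exponential rate $m^{1/(m+2)}=2^{1/3}$, and substitute the logarithmic terms. With the choice of $\varepsilon$ (and hence $h$) from the previous subsection one has $h=\tfrac12\log_2 n+O(1)=\tfrac12\log_2 N+O(1)$, so $N=6k+3h+O(1)$ gives $n=5k=\tfrac56 N-\tfrac54\log_2 N+O(1)$. Plugging this into the bound $\Omega\!\bigl(2^{\frac25 n-\log_2 n}\bigr)$ for $\tilde P_{h,m}^n$ yields
\[
\Omega\!\left(2^{\frac25\left(\frac56 N-\frac54\log_2 N\right)-\log_2 N}\right)
\;=\;\Omega\!\left(2^{\frac N3-\frac32\log_2 N}\right)
\;=\;\Omega\!\left(\frac{2^{N/3}}{N\sqrt N}\right),
\]
and for $N$ not of the required form one simply removes $O(1)$ states from $Q'$, as in the earlier constructions, which changes none of the asymptotics. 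I expect the main obstacle to be precisely this bookkeeping of the logarithmic term: with the naive choice $Q'=\{A_i^{(t)}\mid h+2\le i\le k-4\}$ one gets $N=6k+4h+O(1)$ and hence only the weaker bound $\Omega(2^{N/3}/N^{5/3})$; the extra factor $N^{1/6}$ needed to push the exponent down to $3/2$ comes from enlarging $Q'$ by the $\Theta(\log n)$ states $C_1,\dots,C_h$, which lowers the coefficient of $h$ in the state count from $m$ to $m-1$.
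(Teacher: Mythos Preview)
Your proposal is correct and follows essentially the same approach as the paper: apply Theorem~\ref{lem2sym} to $\tilde P_{h,m}^n$ with $m=4$ and $Q'=\{C_1,\dots,C_h\}\cup\{A_i^{(t)}\mid h+2\le i\le k-4\}$, obtaining a binary PFA on $N=(m+2)k+(m-1)h+O(1)$ states, and then track the $\Theta(\log N)$ contribution of $h$ to get the exponent $\tfrac32$. The only cosmetic differences are that the paper does not bother adding the four $O(1)$ boundary states $A_i^{(1)}$, $k-3\le i\le k$, to $Q'$, and that the paper re-optimises $\varepsilon$ directly in terms of $N$ (taking $\varepsilon=24/(N+6)$) and substitutes into the bound~(\ref{eq:growthrate}), whereas you reuse the ternary lemma's bound $\Omega(2^{\frac25 n-\log_2 n})$ and convert $n$ to $N$; since both choices of $\varepsilon$ give $h=\tfrac12\log_2 N+O(1)$, the two computations agree.
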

\begin{proof}
	Consider the PFA $\tilde P_{h,m}^n$. We check the conditions for Theorem \ref{lem2sym}. The symbol $s'$ is complete and all symbols are defined on $Qs'$. Define the set of states $Q'$ by
	\[
	Q' = \left\{C_1,\ldots,C_h\right\}\cup\big\{A_i^{(t)}\big|h+2\leq i\leq k-4,\ 1\leq t\leq m\big\},
	\]
	fulfilling all conditions of Theorem \ref{lem2sym}. The reduction in this case gives a binary PFA on $N = 2n-|Q'| = 2(m+1)k-h-m(k-h-5)=(m+2)k+(m-1)h+5m$ states so that
	\[
	k = \frac{N-(m-1)h-5m}{m+2}.
	\]
	For $g_{h,m}(k)$ we still have the lower bound $\varepsilon\cdot(m-\varepsilon)^k/m^2$ as in (\ref{eq:growthrate}). This time the main order term is $m^{N/(m+2)}$, which is again maximized by taking $m=4$. For $\varepsilon$, the best choice is $\varepsilon = 24/(N+6)$, which means $h=\log_{m-\varepsilon}(N)+O(1)$. 	Finally, we conclude that the length of the shortest synchronizing word for $\tilde P_{h,m}^n$ is bounded by
	\begin{align*}
	\Omega\left(\varepsilon\cdot(m-\varepsilon)^k\right) &= \Omega\left(\varepsilon\cdot(4-\varepsilon)^{N/6}\cdot(4-\varepsilon)^{-\log_{4-\varepsilon}(N)/2}\right)\\
	&=\Omega\left(\frac{4^{N/6}}{N\sqrt{N}}\right) =  \Omega\left(2^{\frac{N}{3}-\frac{3}{2}\log_2(N)}\right).
	\end{align*}
	If the number of states is not of the form $(m+2)k+(m-1)h+5m = 6k+3h+20$, then we remove some
	states from $Q'$, just as before.
\end{proof}  

\subsection{PFAs with a Single Undefined Transition}

The PFA $P_{h,m}^n$ as defined in Section \ref{sec:constr} has $h(m-1)+1$ undefined 
transitions. In this section we present a variation on the theme, showing that a 
single undefined transition suffices to get exponential synchronizing word lengths. 
First note that the recursion in Lemma \ref{lem:moresym} gives exponential growth 
for $h=1$, provided $m\geq 3$. In this case, the recursion reduces to 
$f_{1,m}(k) = (m-1)(1+f_{1,m}(k-1))$ with $f_{1,m}(1)=0$. By a straightforward 
inductive argument, it follows that 
\[
f_{1,m}(k) = \frac{m-1}{m-2}\left((m-1)^k-1\right).
\]
We will use the system (\ref{eq:rewrite}) to rewrite 
$C\left(A^{(1)}\right)^{k-1}$ into a string ending in $A^{(m)}$. We extend 
the rewrite system for $h=1$ and $m\geq 3$ with $m-2$ letters 
$A^{(m+1)},\ldots,A^{(2m-2)}$. We also extend the set of rewrite rules to
\[
A^{(s)}A^{(t)}\to A^{(s-(m-1))}A^{(t+1)}\qquad\textrm{and}\qquad C A^{(t)}\to C A^{(t+1)}
\]
for $t = 1,\ldots,2m-3$ and $s \geq m$. Furthermore, we close the system cyclically by
\[
A^{(s)}A^{(2m-2)}\to A^{(s-(m-1))}A^{(1)}\qquad\textrm{and}\qquad C A^{(2m-2)}\to C A^{(1)}.
\]
for $s \ge m$. Just as before, for any $t \ge 1$, the weight of $A^{(t)}$ on some 
position is $t-1$ times the weight of $A^{(2)}$ on the same position, and we
see that the new rewrite rules 
either decrease the weight of a string or increase it by at most 1. 
So these extensions will not reduce the number of rewrite steps needed. 

Now we build a PFA on $n=(2m-1)k$ states $A_i^{(0)},A_i^{(1)},\ldots,A_i^{(2m-2)}$ 
for $i=1,\ldots,k$ mimicking this rewrite system. Just as before, we make the rewrite symbol $r$ injective, but this time it makes the construction slightly more complicated than necessary. The idea will be that a letter 
$A^{(t)}$ on position $i$ in the string corresponds to the set of states
\[
S_i^{(t)} := \left\{\begin{array}{ll}
\left\{A_i^{(t)},\ldots,A_i^{(t+m-2)}\right\}&1\leq t\leq m,\\
\left\{A_i^{(1)},\ldots,A_i^{(t-m)},A_i^{(t)},\ldots,A_i^{(2m-2)}\right\}
\quad&m+1\leq t\leq 2m-2.
\end{array}
\right.
\]
Furthermore, the letter $C$ on position $i$ will correspond to either
$$
S_i^{(0)} := \left\{A_i^{(0)},A_i^{(1)},\ldots,A_i^{(m-2)}\right\} \qquad \mbox{or} \qquad 
\bar{S}_i^{(0)} := \left\{A_i^{(0)},A_i^{(m)},\ldots,A_i^{(2m-3)}\right\}.
$$
We define the start symbol $s$ and the cyclic shift symbol $c$ by
\begin{gather*}
A_i^{(t)}s = \left\{\begin{array}{ll}
A_i^{(t \bmod (m-1))}\quad& (m-1) \nmid t \mbox{ or } i = 1\\
A_i^{(m-1)}& \textrm{otherwise}
\end{array}\right.\qquad
A_i^{(t)}c = \left\{\begin{array}{ll}
A_1^{(t)}&i=k\\
A_{i+1}^{(t)}\quad& \textrm{otherwise}
\end{array}\right.
\end{gather*}
With this definition of the start symbol $s$, we have $Qs = S_1^{(0)}\cup\bigcup_{i=2}^kS_i^{(1)}$, 
representing the string $A^{(0)}\left(A^{(1)}\right)^{k-1}$.
The (injective) rewrite symbol is defined by 
\[
\left(A_i^{(0)}r,\ldots,A_i^{(2m-2)}r\right) = 
\left\{\begin{array}{ll}
\left(A_1^{(0)},A_1^{(m)},\ldots,A_1^{(2m-3)},\bot,A_1^{(1)},\ldots,A_1^{(m-1)}\right)\quad&i = 1\\
\left(A_1^{(2m-2)},A_2^{(2)},\ldots,A_2^{(2m-2)},A_2^{(1)}\right)&i = 2\\
\left(A_i^{(0)},\ldots,A_i^{(2m-2)}\right)&i\geq 3
\end{array}\right.
\]
This implies that $r$ acts on the sets $S_i^{(t)}$ for $i=1,2$ as
\[
S_1^{(t)}r = \left\{\begin{array}{ll}
\bar{S}_1^{(0)} &t=0\\ \bot & 1\leq t\leq m-1\\S_1^{(t-(m-1))}\quad&m\leq t\leq 2m-2,
\end{array}
\right.\qquad
S_2^{(t)}r = \left\{\begin{array}{ll}S_2^{(t+1)}\quad & 1\leq t\leq 2m-3 \\
S_2^{(1)}&t=2m-2,\end{array}\right.
\]
Since $\bar{S}_1^{(0)}r = S_1^{(0)}$ in addition, the states with indices $i=1$ and $i=2$ 
exactly mimic the rewrite rules. 
The action of $r$ onto $S_2^{(0)}$ or $\bar{S}_2^{(0)}$ does not give a set of the form 
$S_i^{(t)}$, but can only be applied after reaching a string ending in $A^{(t)}$ for 
some $t \ge m$. For $i\geq 3$, we have $S_i^{(t)}r = S_i^{(t)}$ for every $t$, and 
$\bar{S}_i^{(0)}r = \bar{S}_i^{(0)}$. 

This construction leads to the following theorem.

\begin{theorem} 
	There exists sequences of carefully synchronizing PFAs with only one undefined 
	transition and shortest carefully synchronizing word length 
	\begin{itemize}
		\item $\Omega(3^{n/7})$ for PFAs on three symbols,
		\item $\Omega(2^{n/5})$ for PFAs on two symbols,
	\end{itemize}
	where $n$ is the number of states. 
\end{theorem}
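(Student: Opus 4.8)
The plan is to verify that the PFA built in Section~\ref{sec:constrbin} for $h=1$ and $m\geq 3$ works as claimed, and then to feed it into Theorem~\ref{lem2sym} exactly as in the proof of the binary asymptotic bound. First I would establish the ternary case. The automaton has $n=(2m-1)k$ states and the three symbols $s,c,r$; by inspection the only undefined transition is $A_1^{(m)}r=\bot$ (equivalently, $r$ is undefined on one single state), all other entries in the tables being total. So I would check: (i) a shortest carefully synchronizing word must begin with $s$, since $c$ is a permutation and $r$ is not everywhere defined; (ii) after $s$ the reached set is $Qs = S_1^{(0)}\cup\bigcup_{i=2}^k S_i^{(1)}$, representing the string $A^{(0)}(A^{(1)})^{k-1}$; (iii) as long as the current set is of the form (cyclic shift of) a union of blocks $S_i^{(t)}$ or $S_i^{(0)}$, the only productive moves are cyclic shifts by $c$ and a rewrite step by $r$ (applying $s$ only resets), so by the block-to-symbol dictionary spelled out before the theorem, the power automaton is forced to mimic the extended $h=1$ rewrite system until the string ends in a letter $A^{(t)}$ with $t\geq m$. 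By the remark in the text, the extra cyclic rewrite rules only decrease the weight or increase it by at most $1$, so they do not shorten the derivation; hence by the inductive formula $f_{1,m}(k)=\tfrac{m-1}{m-2}((m-1)^k-1)$ from Lemma~\ref{lem:moresym}, at least $\Omega((m-1)^k)$ steps are needed. With $m=4$ this gives $\Omega(3^k)=\Omega(3^{n/7})$, proving the first bullet. For synchronization itself, I would exhibit, just as in Lemma~\ref{lem:constr}, a word $w$ driving $Qs$ to the set representing a string ending in $A^{(2m-2)}$ (or whatever terminal letter makes $S_2^{(0)}$-type action available), then apply $r$ on index $2$, then use a $(\cdot c \cdot)^{k-1}s$-type tail to collapse every block to a single state.

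Next I would derive the binary bound from the ternary one. The construction $\tilde P_{h,m}^n$ used in Section~\ref{sec:constrbin} already packages the hypotheses of Theorem~\ref{lem2sym}: one complete symbol $s'$ with all symbols defined on $Qs'$ (here $p=1$), and a set $Q'$ of states on which $s'$ acts as the identity and on which $r$ and $c$ (equivalently $rc$) coincide. For the one-undefined-transition binary claim, I would instead start from the $h=1,m=3$ automaton above — which has exactly one undefined transition and three symbols — and apply the transformation of Theorem~\ref{lem2sym} with the \emph{largest available} $Q'$, i.e.\ all the ``inert'' blocks with indices away from $1,2$. Theorem~\ref{lem2sym} produces a binary PFA on $n+n'(m-2)=n+n'$ states with synchronizing length at least that of the ternary automaton; since one rewrite symbol has a single undefined entry and the $\psi$-encoding maps that symbol to $a^{i-1}b$, the image PFA still has exactly one undefined transition (the transformation replaces defined transitions by composites of $a,b$ and carries the single hole through unchanged). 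Counting states with $m=3$: $n'=n-|Q'|$ is a linear fraction of $n$, so the new state count is $\Theta(n)$ and the length stays $\Omega(3^{k})=\Omega(2^{\Theta(n)})$; tuning the split so that the exponent works out yields $\Omega(2^{n/5})$, matching the literature constant and the second bullet. Where $n$ is not of the exact shape dictated by the formula, I would remove a bounded number of states from $Q'$, exactly as in the earlier proofs.

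The main obstacle I expect is bullet one's lower bound: one must argue rigorously that the power automaton cannot ``cheat'' by leaving the family of block-unions $\{S_i^{(t)}, S_i^{(0)}, \bar S_i^{(0)}\}$ and re-entering it more cheaply, and that the added cyclic rewrite rules (needed only so that $r$ can be made injective with a single hole) really are weight-non-increasing in the relevant sense. Concretely, I would prove a closure lemma — analogous to Lemma~\ref{lemrewr} — stating that on any set of the form (cyclic shift of) $\bigcup_i S_i^{(t_i)}$ with at most one ``$C$-block'', the symbol $r$ is defined only when the leading blocks match a left-hand side of some rewrite rule, and its effect is precisely the corresponding right-hand side up to cyclic shift; once $r$ is undefined or $s$ is applied one is sent back to the start set, so no productive shortcut exists. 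Combined with the weight function of Lemma~\ref{lem:moresym} (with weights $w_{i,t}=(t-1)\cdot w_{i,2}$), this pins the number of $r$-steps at $\Omega((m-1)^k)$, and the $c$- and $s$-steps only add to the total. The rest — the synchronization direction and the bookkeeping in Theorem~\ref{lem2sym} — is routine given the earlier sections.
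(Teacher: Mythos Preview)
Your ternary argument is essentially the paper's: same automaton, same weight/rewriting lower bound, same optimal choice $m=4$ giving $\Omega(3^{n/7})$. The synchronization direction is sketchier in your outline than in the paper (which traces the orbits $Q_i v$ explicitly and needs $m-1$ iterations of $v$ to flush index~$2$), but the ideas line up.

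The binary case, however, has a real gap. You fix the rewrite parameter $m=3$ and then assert that ``tuning the split so that the exponent works out yields $\Omega(2^{n/5})$''. It does not. After passing to symbols $s',c,rc$ and taking the largest possible $Q'$ (the $(2m-2)(k-6)+O(1)$ states with index $3\le i\le k-4$ on which $r$ is the identity and $s'$ is a fixed point), Theorem~\ref{lem2sym} produces a binary PFA on $N=2mk+O(1)$ states with synchronization length $\Omega((m-1)^k)=\Omega((m-1)^{N/(2m)})$. For $m=3$ this is only $\Omega(2^{N/6})$; for $m=4$ it is $\Omega(3^{N/8})\approx\Omega(2^{0.198N})$. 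The exponent $(m-1)^{1/(2m)}$ is maximized at $m=5$, giving $\Omega(4^{N/10})=\Omega(2^{N/5})$, and this is exactly what the paper does. So the parameter $m$ that is optimal for three symbols is \emph{not} the one to reuse for two symbols; you must rerun the optimization after accounting for the state blow-up of the transformation. (Incidentally, with $m=3$ the length is $\Omega(2^k)$, not $\Omega(3^k)$ as you wrote.)

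A smaller point: your claim that the transformation ``carries the single hole through unchanged'' is correct, but only because you use $rc$ (not $r$) as one of the non-$s$ symbols; then $b$ is undefined precisely on the single $P_{i,j}$ corresponding to $A_1^{(m-1)}$. You should make that explicit rather than leave it as a parenthetical.
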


\begin{proof} First we argue that the PFA constructed above is synchronizing.
	Let $w$ be a word to rewrite $C(A^{(1)})^{k-1}$ into a string ending in $A^{(m)}$.
	Write $Q_i$ for the set of states with subindex $i$ for each $i$. 
	Then $Q_i c^{k-1}swc \subseteq Q_i$ for all $i$. Furthermore, $r$ is defined
	on $Q c^{k-1}swc$, so let $v = c^{k-1}swcr$. Then $Q_i v \subseteq Q_i$ for all
	$i \neq 2$.
	
	To investigate $Q_2 v$, we group states modulo $m-1$, by defining 
	$B_i^{(0)} = \{ A_i^{(0)}, A_i^{(m-1)},  A_i^{(2m-2)} \}$, and 
	$B_i^{(t)} = \{ A_i^{(t)}, A_i^{(t+m-1)}\}$ for all $t \ne 0$.
	Then $B_1^{(t)} sw \subseteq B_1^{(t)}$, so
	$$
	B_2^{(t)} v \subseteq B_2^{(t)} r \subseteq B_2^{((t+1) \bmod (m-1))} 
	$$
	for all $t \ne 0$, and
	$$
	B_2^{(0)} v \subseteq \{A_2^{(0)}\}r \subseteq B_1^{(0)}.
	$$
	Consequently, $Q_2 v^{m-1} \subseteq Q_1$. 
	
	Furthermore, it follows by 
	induction that $Q \,\allowbreak (v^{m-1}c)^{k-1} \subseteq Q_1$. From 
	$B_k^{(0)} w \subseteq B_k^{(0)}$, we infer that $B_1^{(0)} v \subseteq
	B_1^{(0)}$. Hence $Q\, (v^{m-1}c)^k \subseteq B_1^{(0)}$. 
	As $B_1^{(0)} s = \{A_1^{(0)}\}$, we conclude
	that $(v^{m-1}c)^k s$ is a synchronizing word.
	So we have a synchronizing $n$-state PFA with synchronizing word length 
	$$
	\Omega((m-1)^k) = \Omega((m-1)^{\frac{n}{2m-1}})
	$$
	The best choice is $m=4$, leading to the lower bound $\Omega(3^{n/7})$.
	If $n$ is not of the form $7k$, then we can split up states just as before,
	but we must not split up state $A_1^{(m-1)}$.
	
	For the binary construction with a single undefined transition, 
	we proceed in the spirit of Section \ref{sec:constrbin}. 
	We take symbols $c$ and $rc$ and use an adapted start symbol 
	$s'$ such that $A_i^{(t)}s' = A_i^{(t)}$ for $1\leq t\leq 2m-2$ and 
	$3\leq i\leq k-4$. Now we can apply Theorem \ref{lem2sym} with a set 
	$Q'$ of size $(2m-2)(k-6)+O(1)$. This gives a PFA on $n=2mk$ states and a 
	lower bound 
	\[
	\Omega((m-1)^k) = \Omega((m-1)^{\frac{n}{2m}}).
	\] 
	The best choice is $m=5$, giving $\Omega(4^{n/10}) = \Omega(2^{n/5})$.
	\end{proof}

\section{Conclusions}
\label{secconcl}
For every $n$, we constructed PFAs on $n$ states and two or three symbols for which careful 
synchronization is forced to mimic rewriting with respect to a string rewrite system. 
These systems require an exponential number of steps to reach a string of a particular shape. 
The resulting exponential lengths are much larger than the cubic upper bound for synchronization 
of DFAs. We show that for $n=4$ the shortest synchronization length for a PFA already can exceed 
the maximal shortest synchronization length for a DFA. 

For $n\leq 7$ we found greatest possible shortest synchronization lengths, both for DFAs and PFAs,
where for DFAs until now this was only fully investigated for $n \leq 4$, that is, by not assuming 
any bound on the number of symbols. For these $n$, we identify PFAs reaching the maximal length. 
These extreme cases require up to eight symbols, where for DFAs the maximal lengths are generally 
attained by binary examples.

Besides the proof of Theorem \ref{th:Pn}, several results which are related to those in this paper 
were not selected in this paper. One of those results is a generalization of the class $P_n$ in 
Section \ref{secbinpfa}. The other results have been gathered in \cite{debondt18}.

{\bf Acknowledgement:}
We thank Stijn Cambie for his contribution to the proof of Theorem \ref{th:Pn}.
% Stijn heeft bijgedragen aan een resultaat in dit paper, en daarvoor verdient hij erkenning, ongeacht of het resultaat later nog elders verschijnt.

\bibliography{ref}

\end{document}